\newtheorem{Theo}{Theorem}
\newtheorem{Corol}{Corollary}
\newtheorem{Def}{Definition}
\newtheorem{Le}{Lemma}
\newtheorem{Rem}{Remark}
\let\pdfoutput=\undefined\fi
\chardef\@x10\chardef\@xv60
\def\tcitime{
\def\@time{%
  \@minute\time\@hour\@minute\divide\@hour\@xv
  \ifnum\@hour<\@x 0\fi\the\@hour:%
  \multiply\@hour\@xv\advance\@minute-\@hour
  \ifnum\@minute<\@x 0\fi\the\@minute
  }}%
\def\x@hyperref#1#2#3{%
   \catcode`\~ = 12
   \catcode`\$ = 12
   \catcode`\_ = 12
   \catcode`\# = 12
   \catcode`\& = 12
   \catcode`\% = 12
   \y@hyperref{#1}{#2}{#3}%
}
\def\y@hyperref#1#2#3#4{%
   #2\ref{#4}#3
   \catcode`\~ = 13
   \catcode`\$ = 3
   \catcode`\_ = 8
   \catcode`\# = 6
   \catcode`\& = 4
   \catcode`\% = 14
}
\def\QCTOpt[#1]#2{%
  \def\QCTOptB{#1}
  \def\QCTOptA{#2}
}
\def\QCTNOpt#1{%
  \def\QCTOptA{#1}
  \let\QCTOptB\empty
}
\def\Qct{%
  \@ifnextchar[{%
    \QCTOpt}{\QCTNOpt}
}
\def\QCBOpt[#1]#2{%
  \def\QCBOptB{#1}%
  \def\QCBOptA{#2}%
}
\def\QCBNOpt#1{%
  \def\QCBOptA{#1}%
  \let\QCBOptB\empty
}
\def\Qcb{%
  \@ifnextchar[{%
    \QCBOpt}{\QCBNOpt}%
}
\def\PrepCapArgs{%
  \ifx\QCBOptA\empty
    \ifx\QCTOptA\empty
      {}%
    \else
      \ifx\QCTOptB\empty
        {\QCTOptA}%
      \else
        [\QCTOptB]{\QCTOptA}%
      \fi
    \fi
  \else
    \ifx\QCBOptA\empty
      {}%
    \else
      \ifx\QCBOptB\empty
        {\QCBOptA}%
      \else
        [\QCBOptB]{\QCBOptA}%
      \fi
    \fi
  \fi
}
\def\GRAPHICSPS#1{%
 \ifcase\GRAPHICSTYPE
   \special{ps: #1}%
 \or
   \special{language "PS", include "#1"}%
 \fi
}%
\def\graffile#1#2#3#4{%
    \bgroup
	   \@inlabelfalse
       \leavevmode
       \@ifundefined{bbl@deactivate}{\def~{\string~}}{\activesoff}%
        \raise -#4 \BOXTHEFRAME{%
           \hbox to #2{\raise #3\hbox to #2{\null #1\hfil}}}%
    \egroup
}%
\def\draftbox#1#2#3#4{%
 \leavevmode\raise -#4 \hbox{%
  \frame{\rlap{\protect\tiny #1}\hbox to #2%
   {\vrule height#3 width\z@ depth\z@\hfil}%
  }%
 }%
}%
\let\nographics=\@msidraft
\newif\ifwasdraft
\def\GRAPHIC#1#2#3#4#5{%
   \ifnum\@msidraft=\@ne\draftbox{#2}{#3}{#4}{#5}%
   \else\graffile{#1}{#3}{#4}{#5}%
   \fi
}
\def\addtoLaTeXparams#1{%
    \edef\LaTeXparams{\LaTeXparams #1}}%
\newif\ifBoxFrame \BoxFramefalse
\newif\ifOverFrame \OverFramefalse
\newif\ifUnderFrame \UnderFramefalse
\def\BOXTHEFRAME#1{%
   \hbox{%
      \ifBoxFrame
         \frame{#1}%
      \else
         {#1}%
      \fi
   }%
}
\def\doFRAMEparams#1{\BoxFramefalse\OverFramefalse\UnderFramefalse\readFRAMEparams#1\end}%
\def\readFRAMEparams#1{%
 \ifx#1\end%
  \let\next=\relax
  \else
  \ifx#1i\dispkind=\z@\fi
  \ifx#1d\dispkind=\@ne\fi
  \ifx#1f\dispkind=\tw@\fi
  \ifx#1t\addtoLaTeXparams{t}\fi
  \ifx#1b\addtoLaTeXparams{b}\fi
  \ifx#1p\addtoLaTeXparams{p}\fi
  \ifx#1h\addtoLaTeXparams{h}\fi
  \ifx#1X\BoxFrametrue\fi
  \ifx#1O\OverFrametrue\fi
  \ifx#1U\UnderFrametrue\fi
  \ifx#1w
    \ifnum\@msidraft=1\wasdrafttrue\else\wasdraftfalse\fi
    \@msidraft=\@ne
  \fi
  \let\next=\readFRAMEparams
  \fi
 \next
 }%
\def\IFRAME#1#2#3#4#5#6{%
      \bgroup
      \let\QCTOptA\empty
      \let\QCTOptB\empty
      \let\QCBOptA\empty
      \let\QCBOptB\empty
      #6%
      \parindent=0pt
      \leftskip=0pt
      \rightskip=0pt
      \setbox0=\hbox{\QCBOptA}%
      \@tempdima=#1\relax
      \ifOverFrame
          \typeout{This is not implemented yet}%
          \show\HELP
      \else
         \ifdim\wd0>\@tempdima
            \advance\@tempdima by \@tempdima
            \ifdim\wd0 >\@tempdima
               \setbox1 =\vbox{%
                  \unskip\hbox to \@tempdima{\hfill\GRAPHIC{#5}{#4}{#1}{#2}{#3}\hfill}%
                  \unskip\hbox to \@tempdima{\parbox[b]{\@tempdima}{\QCBOptA}}%
               }%
               \wd1=\@tempdima
            \else
               \textwidth=\wd0
               \setbox1 =\vbox{%
                 \noindent\hbox to \wd0{\hfill\GRAPHIC{#5}{#4}{#1}{#2}{#3}\hfill}\\%
                 \noindent\hbox{\QCBOptA}%
               }%
               \wd1=\wd0
            \fi
         \else
            \ifdim\wd0>0pt
              \hsize=\@tempdima
              \setbox1=\vbox{%
                \unskip\GRAPHIC{#5}{#4}{#1}{#2}{0pt}%
                \break
                \unskip\hbox to \@tempdima{\hfill \QCBOptA\hfill}%
              }%
              \wd1=\@tempdima
           \else
              \hsize=\@tempdima
              \setbox1=\vbox{%
                \unskip\GRAPHIC{#5}{#4}{#1}{#2}{0pt}%
              }%
              \wd1=\@tempdima
           \fi
         \fi
         \@tempdimb=\ht1
         \advance\@tempdimb by -#2
         \advance\@tempdimb by #3
         \leavevmode
         \raise -\@tempdimb \hbox{\box1}%
      \fi
      \egroup%
}%
\def\DFRAME#1#2#3#4#5{%
  \vspace\topsep
  \hfil\break
  \bgroup
     \leftskip\@flushglue
	 \rightskip\@flushglue
	 \parindent\z@
	 \parfillskip\z@skip
     \let\QCTOptA\empty
     \let\QCTOptB\empty
     \let\QCBOptA\empty
     \let\QCBOptB\empty
	 \vbox\bgroup
        \ifOverFrame
           #5\QCTOptA\par
        \fi
        \GRAPHIC{#4}{#3}{#1}{#2}{\z@}%
        \ifUnderFrame
           \break#5\QCBOptA
        \fi
	 \egroup
  \egroup
  \vspace\topsep
  \break
}%
\def\FFRAME#1#2#3#4#5#6#7{%
  \@ifundefined{floatstyle}
    {
     \begin{figure}[#1]%
    }
    {
	 \ifx#1h
      \begin{figure}[H]%
	 \else
      \begin{figure}[#1]%
	 \fi
	}
  \let\QCTOptA\empty
  \let\QCTOptB\empty
  \let\QCBOptA\empty
  \let\QCBOptB\empty
  \ifOverFrame
    #4
    \ifx\QCTOptA\empty
    \else
      \ifx\QCTOptB\empty
        \caption{\QCTOptA}%
      \else
        \caption[\QCTOptB]{\QCTOptA}%
      \fi
    \fi
    \ifUnderFrame\else
      \label{#5}%
    \fi
  \else
    \UnderFrametrue%
  \fi
  \begin{center}\GRAPHIC{#7}{#6}{#2}{#3}{\z@}\end{center}%
  \vspace{-11pt}
  \ifUnderFrame
    #4
    \ifx\QCBOptA\empty
      \caption{}%
    \else
      \ifx\QCBOptB\empty
        \caption{\QCBOptA}%
      \else
        \caption[\QCBOptB]{\QCBOptA}%
      \fi
    \fi
    \label{#5}%
  \fi
  \end{figure}%
 }%
\def\makeactives{
  \catcode`\"=\active
  \catcode`\;=\active
  \catcode`\:=\active
  \catcode`\'=\active
  \catcode`\~=\active
}
   \gdef\activesoff{%
      \def"{\string"}%
      \def;{\string;}%
      \def:{\string:}%
      \def'{\string'}%
      \def~{\string~}%
    }
\def\FRAME#1#2#3#4#5#6#7#8{%
 \bgroup
 \ifnum\@msidraft=\@ne
   \wasdrafttrue
 \else
   \wasdraftfalse%
 \fi
 \def\LaTeXparams{}%
 \dispkind=\z@
 \def\LaTeXparams{}%
 \doFRAMEparams{#1}%
 \ifnum\dispkind=\z@\IFRAME{#2}{#3}{#4}{#7}{#8}{#5}\else
  \ifnum\dispkind=\@ne\DFRAME{#2}{#3}{#7}{#8}{#5}\else
   \ifnum\dispkind=\tw@
    \edef\@tempa{\noexpand\FFRAME{\LaTeXparams}}%
    \@tempa{#2}{#3}{#5}{#6}{#7}{#8}%
    \fi
   \fi
  \fi
  \ifwasdraft\@msidraft=1\else\@msidraft=0\fi{}%
  \egroup
 }%
\def\TEXUX#1{"texux"}
\long\def\QQQ#1#2{%
     \long\expandafter\def\csname#1\endcsname{#2}}%
\long\def\QQA#1#2{}%
\def\QTR#1#2{{\csname#1\endcsname {#2}}}%
\def\EXPAND#1[#2]#3{}%
\def\NOEXPAND#1[#2]#3{}%
\def\LaTeXparent#1{}%
\def\ChildStyles#1{}%
\def\ChildDefaults#1{}%
\def\QTagDef#1#2#3{}%
  \providecommand{\UNICODE}[2][]{\protect\rule{.1in}{.1in}}
  \providecommand{\U}[1]{\protect\rule{.1in}{.1in}}
\def\QQfnmark#1{\footnotemark}
 \def\abstract{%
  \if@twocolumn
   \section*{Abstract (Not appropriate in this style!)}%
   \else \small
   \begin{center}{\bf Abstract\vspace{-.5em}\vspace{\z@}}\end{center}%
   \quotation
   \fi
  }%
   \def\registered{\relax\ifmmode{}\r@gistered
                    \else$\m@th\r@gistered$\fi}%
 \def\r@gistered{^{\ooalign
  {\hfil\raise.07ex\hbox{$\scriptstyle\rm\text{R}$}\hfil\crcr
  \mathhexbox20D}}}}{}%
\newdimen\theight
\def\newfmtname{LaTeX2e}
  \DeclareOldFontCommand{\rm}{\normalfont\rmfamily}{\mathrm}
  \DeclareOldFontCommand{\sf}{\normalfont\sffamily}{\mathsf}
  \DeclareOldFontCommand{\tt}{\normalfont\ttfamily}{\mathtt}
  \DeclareOldFontCommand{\bf}{\normalfont\bfseries}{\mathbf}
  \DeclareOldFontCommand{\it}{\normalfont\itshape}{\mathit}
  \DeclareOldFontCommand{\sl}{\normalfont\slshape}{\@nomath\sl}
  \DeclareOldFontCommand{\sc}{\normalfont\scshape}{\@nomath\sc}
\def\alpha{{\Greekmath 010B}}%
\def\beta{{\Greekmath 010C}}%
\def\gamma{{\Greekmath 010D}}%
\def\delta{{\Greekmath 010E}}%
\def\epsilon{{\Greekmath 010F}}%
\def\zeta{{\Greekmath 0110}}%
\def\eta{{\Greekmath 0111}}%
\def\theta{{\Greekmath 0112}}%
\def\iota{{\Greekmath 0113}}%
\def\kappa{{\Greekmath 0114}}%
\def\lambda{{\Greekmath 0115}}%
\def\mu{{\Greekmath 0116}}%
\def\nu{{\Greekmath 0117}}%
\def\xi{{\Greekmath 0118}}%
\def\pi{{\Greekmath 0119}}%
\def\rho{{\Greekmath 011A}}%
\def\sigma{{\Greekmath 011B}}%
\def\tau{{\Greekmath 011C}}%
\def\upsilon{{\Greekmath 011D}}%
\def\phi{{\Greekmath 011E}}%
\def\chi{{\Greekmath 011F}}%
\def\psi{{\Greekmath 0120}}%
\def\omega{{\Greekmath 0121}}%
\def\varepsilon{{\Greekmath 0122}}%
\def\vartheta{{\Greekmath 0123}}%
\def\varpi{{\Greekmath 0124}}%
\def\varrho{{\Greekmath 0125}}%
\def\varsigma{{\Greekmath 0126}}%
\def\varphi{{\Greekmath 0127}}%
\def\nabla{{\Greekmath 0272}}
\def\FindBoldGroup{%
   {\setbox0=\hbox{$\mathbf{x\global\edef\theboldgroup{\the\mathgroup}}$}}%
}
\def\Greekmath#1#2#3#4{%
    \if@compatibility
        \ifnum\mathgroup=\symbold
           \mathchoice{\mbox{\boldmath$\displaystyle\mathchar"#1#2#3#4$}}%
                      {\mbox{\boldmath$\textstyle\mathchar"#1#2#3#4$}}%
                      {\mbox{\boldmath$\scriptstyle\mathchar"#1#2#3#4$}}%
                      {\mbox{\boldmath$\scriptscriptstyle\mathchar"#1#2#3#4$}}%
        \else
           \mathchar"#1#2#3#4%
        \fi
    \else
        \FindBoldGroup
        \ifnum\mathgroup=\theboldgroup 
           \mathchoice{\mbox{\boldmath$\displaystyle\mathchar"#1#2#3#4$}}%
                      {\mbox{\boldmath$\textstyle\mathchar"#1#2#3#4$}}%
                      {\mbox{\boldmath$\scriptstyle\mathchar"#1#2#3#4$}}%
                      {\mbox{\boldmath$\scriptscriptstyle\mathchar"#1#2#3#4$}}%
        \else
           \mathchar"#1#2#3#4%
        \fi     	
	  \fi}
\newif\ifGreekBold  \GreekBoldfalse
\let\SAVEPBF=\pbf
\def\pbf{\GreekBoldtrue\SAVEPBF}%
  \newcounter{equationnumber}
  \def\mathletters{%
     \addtocounter{equation}{1}
     \edef\@currentlabel{\theequation}%
     \setcounter{equationnumber}{\c@equation}
     \setcounter{equation}{0}%
     \edef\theequation{\@currentlabel\noexpand\alph{equation}}%
  }
    \def\BibTeX{{\rm B\kern-.05em{\sc i\kern-.025em b}\kern-.08em
                 T\kern-.1667em\lower.7ex\hbox{E}\kern-.125emX}}}{}%
\def\AmS{{\protect\usefont{OMS}{cmsy}{m}{n}%
                A\kern-.1667em\lower.5ex\hbox{M}\kern-.125emS}}}{}%
\def\@@eqncr{\let\@tempa\relax
    \ifcase\@eqcnt \def\@tempa{& & &}\or \def\@tempa{& &}%
      \else \def\@tempa{&}\fi
     \@tempa
     \if@eqnsw
        \iftag@
           \@taggnum
        \else
           \@eqnnum\stepcounter{equation}%
        \fi
     \fi
     \global\tag@false
     \global\@eqnswtrue
     \global\@eqcnt\z@\cr}
\def\TCItag{\@ifnextchar*{\@TCItagstar}{\@TCItag}}
\def\@TCItag#1{%
    \global\tag@true
    \global\def\@taggnum{(#1)}%
    \global\def\@currentlabel{#1}}
\def\@TCItagstar*#1{%
    \global\tag@true
    \global\def\@taggnum{#1}%
    \global\def\@currentlabel{#1}}
\def\tint{\msi@int\textstyle\int}%
\def\tiint{\msi@int\textstyle\iint}%
\def\tiiint{\msi@int\textstyle\iiint}%
\def\tiiiint{\msi@int\textstyle\iiiint}%
\def\tidotsint{\msi@int\textstyle\idotsint}%
\def\toint{\msi@int\textstyle\oint}%
\newtoks\temptoksa
\newtoks\temptoksb
\newtoks\temptoksc
\def\msi@int#1#2{%
 \def\@temp{{#1#2\the\temptoksc_{\the\temptoksa}^{\the\temptoksb}}}%
 \futurelet\@nextcs
 \@int
}
\def\@int{%
   \ifx\@nextcs\limits
      \typeout{Found limits}%
      \temptoksc={\limits}%
	  \let\@next\@intgobble%
   \else\ifx\@nextcs\nolimits
      \typeout{Found nolimits}%
      \temptoksc={\nolimits}%
	  \let\@next\@intgobble%
   \else
      \typeout{Did not find limits or no limits}%
      \temptoksc={}%
      \let\@next\msi@limits%
   \fi\fi
   \@next
}%
\def\@intgobble#1{%
   \typeout{arg is #1}%
   \msi@limits
}
\def\msi@limits{%
   \temptoksa={}%
   \temptoksb={}%
   \@ifnextchar_{\@limitsa}{\@limitsb}%
}
\def\@limitsa_#1{%
   \temptoksa={#1}%
   \@ifnextchar^{\@limitsc}{\@temp}%
}
\def\@limitsb{%
   \@ifnextchar^{\@limitsc}{\@temp}%
}
\def\@limitsc^#1{%
   \temptoksb={#1}%
   \@ifnextchar_{\@limitsd}{\@temp}%
}
\def\@limitsd_#1{%
   \temptoksa={#1}%
   \@temp
}
\def\dint{\msi@int\displaystyle\int}%
\def\diint{\msi@int\displaystyle\iint}%
\def\diiint{\msi@int\displaystyle\iiint}%
\def\diiiint{\msi@int\displaystyle\iiiint}%
\def\didotsint{\msi@int\displaystyle\idotsint}%
\def\doint{\msi@int\displaystyle\oint}%
\def\ExitTCILatex{\makeatother }
\if@compatibility\message{amsmath already loaded}\fi\aftergroup\ExitTCILatex}
\if@compatibility\message{amstex already loaded}\fi\aftergroup\ExitTCILatex}
\if@compatibility\message{amsgen already loaded}\fi\aftergroup\ExitTCILatex}
\let\DOTSI\relax
\def\RIfM@{\relax\ifmmode}%
\def\FN@{\futurelet\next}%
\def\iint{\DOTSI\intno@\tw@\FN@\ints@}%
\def\iiint{\DOTSI\intno@\thr@@\FN@\ints@}%
\def\iiiint{\DOTSI\intno@4 \FN@\ints@}%
\def\idotsint{\DOTSI\intno@\z@\FN@\ints@}%
\def\ints@{\findlimits@\ints@@}%
\newif\iflimtoken@
\newif\iflimits@
\def\findlimits@{\limtoken@true\ifx\next\limits\limits@true
 \else\ifx\next\nolimits\limits@false\else
 \limtoken@false\ifx\ilimits@\nolimits\limits@false\else
 \ifinner\limits@false\else\limits@true\fi\fi\fi\fi}%
\def\multint@{\int\ifnum\intno@=\z@\intdots@                          
 \else\intkern@\fi                                                    
 \ifnum\intno@>\tw@\int\intkern@\fi                                   
 \ifnum\intno@>\thr@@\int\intkern@\fi                                 
 \int}
\def\multintlimits@{\intop\ifnum\intno@=\z@\intdots@\else\intkern@\fi
 \ifnum\intno@>\tw@\intop\intkern@\fi
 \ifnum\intno@>\thr@@\intop\intkern@\fi\intop}%
\def\intic@{%
    \mathchoice{\hskip.5em}{\hskip.4em}{\hskip.4em}{\hskip.4em}}%
\def\negintic@{\mathchoice
 {\hskip-.5em}{\hskip-.4em}{\hskip-.4em}{\hskip-.4em}}%
\def\ints@@{\iflimtoken@                                              
 \def\ints@@@{\iflimits@\negintic@
   \mathop{\intic@\multintlimits@}\limits                             
  \else\multint@\nolimits\fi                                          
  \eat@}
 \else                                                                
 \def\ints@@@{\iflimits@\negintic@
  \mathop{\intic@\multintlimits@}\limits\else
  \multint@\nolimits\fi}\fi\ints@@@}%
\def\intkern@{\mathchoice{\!\!\!}{\!\!}{\!\!}{\!\!}}%
\def\plaincdots@{\mathinner{\cdotp\cdotp\cdotp}}%
\def\intdots@{\mathchoice{\plaincdots@}%
 {{\cdotp}\mkern1.5mu{\cdotp}\mkern1.5mu{\cdotp}}%
 {{\cdotp}\mkern1mu{\cdotp}\mkern1mu{\cdotp}}%
 {{\cdotp}\mkern1mu{\cdotp}\mkern1mu{\cdotp}}}%
\def\RIfM@{\relax\protect\ifmmode}
\def\text{\RIfM@\expandafter\text@\else\expandafter\mbox\fi}
\let\nfss@text\text
\def\text@#1{\mathchoice
   {\textdef@\displaystyle\f@size{#1}}%
   {\textdef@\textstyle\tf@size{\firstchoice@false #1}}%
   {\textdef@\textstyle\sf@size{\firstchoice@false #1}}%
   {\textdef@\textstyle \ssf@size{\firstchoice@false #1}}%
   \glb@settings}
\def\textdef@#1#2#3{\hbox{{%
                    \everymath{#1}%
                    \let\f@size#2\selectfont
                    #3}}}
\newif\iffirstchoice@
\def\Let@{\relax\iffalse{\fi\let\\=\cr\iffalse}\fi}%
\def\vspace@{\def\vspace##1{\crcr\noalign{\vskip##1\relax}}}%
\def\multilimits@{\bgroup\vspace@\Let@
 \baselineskip\fontdimen10 \scriptfont\tw@
 \advance\baselineskip\fontdimen12 \scriptfont\tw@
 \lineskip\thr@@\fontdimen8 \scriptfont\thr@@
 \lineskiplimit\lineskip
 \vbox\bgroup\ialign\bgroup\hfil$\m@th\scriptstyle{##}$\hfil\crcr}%
\def\Sb{_\multilimits@}%
\def\endSb{\crcr\egroup\egroup\egroup}%
\def\Sp{^\multilimits@}%
\newdimen\ex@
\def\rightarrowfill@#1{$#1\m@th\mathord-\mkern-6mu\cleaders
 \hbox{$#1\mkern-2mu\mathord-\mkern-2mu$}\hfill
 \mkern-6mu\mathord\rightarrow$}%
\def\leftarrowfill@#1{$#1\m@th\mathord\leftarrow\mkern-6mu\cleaders
 \hbox{$#1\mkern-2mu\mathord-\mkern-2mu$}\hfill\mkern-6mu\mathord-$}%
\def\leftrightarrowfill@#1{$#1\m@th\mathord\leftarrow
\mkern-6mu\cleaders
 \hbox{$#1\mkern-2mu\mathord-\mkern-2mu$}\hfill
 \mkern-6mu\mathord\rightarrow$}%
\def\overrightarrow{\mathpalette\overrightarrow@}%
\def\overrightarrow@#1#2{\vbox{\ialign{##\crcr\rightarrowfill@#1\crcr
 \noalign{\kern-\ex@\nointerlineskip}$\m@th\hfil#1#2\hfil$\crcr}}}%
\def\overleftarrow{\mathpalette\overleftarrow@}%
\def\overleftarrow@#1#2{\vbox{\ialign{##\crcr\leftarrowfill@#1\crcr
 \noalign{\kern-\ex@\nointerlineskip}$\m@th\hfil#1#2\hfil$\crcr}}}%
\def\overleftrightarrow{\mathpalette\overleftrightarrow@}%
\def\overleftrightarrow@#1#2{\vbox{\ialign{##\crcr
   \leftrightarrowfill@#1\crcr
 \noalign{\kern-\ex@\nointerlineskip}$\m@th\hfil#1#2\hfil$\crcr}}}%
\def\underrightarrow{\mathpalette\underrightarrow@}%
\def\underrightarrow@#1#2{\vtop{\ialign{##\crcr$\m@th\hfil#1#2\hfil
  $\crcr\noalign{\nointerlineskip}\rightarrowfill@#1\crcr}}}%
\def\underleftarrow{\mathpalette\underleftarrow@}%
\def\underleftarrow@#1#2{\vtop{\ialign{##\crcr$\m@th\hfil#1#2\hfil
  $\crcr\noalign{\nointerlineskip}\leftarrowfill@#1\crcr}}}%
\def\underleftrightarrow{\mathpalette\underleftrightarrow@}%
\def\underleftrightarrow@#1#2{\vtop{\ialign{##\crcr$\m@th
  \hfil#1#2\hfil$\crcr
 \noalign{\nointerlineskip}\leftrightarrowfill@#1\crcr}}}%
\def\qopnamewl@#1{\mathop{\operator@font#1}\nlimits@}
\let\nlimits@\displaylimits
\def\setboxz@h{\setbox\z@\hbox}
\def\varlim@#1#2{\mathop{\vtop{\ialign{##\crcr
 \hfil$#1\m@th\operator@font lim$\hfil\crcr
 \noalign{\nointerlineskip}#2#1\crcr
 \noalign{\nointerlineskip\kern-\ex@}\crcr}}}}
 \def\rightarrowfill@#1{\m@th\setboxz@h{$#1-$}\ht\z@\z@
  $#1\copy\z@\mkern-6mu\cleaders
  \hbox{$#1\mkern-2mu\box\z@\mkern-2mu$}\hfill
  \mkern-6mu\mathord\rightarrow$}
\def\leftarrowfill@#1{\m@th\setboxz@h{$#1-$}\ht\z@\z@
  $#1\mathord\leftarrow\mkern-6mu\cleaders
  \hbox{$#1\mkern-2mu\copy\z@\mkern-2mu$}\hfill
  \mkern-6mu\box\z@$}
\def\projlim{\qopnamewl@{proj\,lim}}
\def\injlim{\qopnamewl@{inj\,lim}}
\def\varinjlim{\mathpalette\varlim@\rightarrowfill@}
\def\varprojlim{\mathpalette\varlim@\leftarrowfill@}
\def\varliminf{\mathpalette\varliminf@{}}
\def\varliminf@#1{\mathop{\underline{\vrule\@depth.2\ex@\@width\z@
   \hbox{$#1\m@th\operator@font lim$}}}}
\def\varlimsup{\mathpalette\varlimsup@{}}
\def\varlimsup@#1{\mathop{\overline
  {\hbox{$#1\m@th\operator@font lim$}}}}
\def\align{\@verbatim \frenchspacing\@vobeyspaces \@alignverbatim
You are using the "align" environment in a style in which it is not defined.}
\let\csname endalign*\endcsname =\endtrivlist
\def\alignat{\@verbatim \frenchspacing\@vobeyspaces \@alignatverbatim
You are using the "alignat" environment in a style in which it is not defined.}
\let\csname endalignat*\endcsname =\endtrivlist
\def\xalignat{\@verbatim \frenchspacing\@vobeyspaces \@xalignatverbatim
You are using the "xalignat" environment in a style in which it is not defined.}
\let\csname endxalignat*\endcsname =\endtrivlist
\def\gather{\@verbatim \frenchspacing\@vobeyspaces \@gatherverbatim
You are using the "gather" environment in a style in which it is not defined.}
\let\csname endgather*\endcsname =\endtrivlist
\def\multiline{\@verbatim \frenchspacing\@vobeyspaces \@multilineverbatim
You are using the "multiline" environment in a style in which it is not defined.}
\let\csname endmultiline*\endcsname =\endtrivlist
\def\arrax{\@verbatim \frenchspacing\@vobeyspaces \@arraxverbatim
You are using a type of "array" construct that is only allowed in AmS-LaTeX.}
\def\tabulax{\@verbatim \frenchspacing\@vobeyspaces \@tabulaxverbatim
You are using a type of "tabular" construct that is only allowed in AmS-LaTeX.}
\let\csname endarrax*\endcsname =\endtrivlist
\let\csname endtabulax*\endcsname =\endtrivlist
 \def\endequation{%
     \ifmmode\ifinner 
      \iftag@
        \addtocounter{equation}{-1} 
        $\hfil
           \displaywidth\linewidth\@taggnum\egroup \endtrivlist
        \global\tag@false
        \global\@ignoretrue
      \else
        $\hfil
           \displaywidth\linewidth\@eqnnum\egroup \endtrivlist
        \global\tag@false
        \global\@ignoretrue
      \fi
     \else
      \iftag@
        \addtocounter{equation}{-1} 
        \eqno \hbox{\@taggnum}
        \global\tag@false%
        $$\global\@ignoretrue
      \else
        \eqno \hbox{\@eqnnum}
        $$\global\@ignoretrue
      \fi
     \fi\fi
 }
 \newif\iftag@ \tag@false
 \def\TCItag{\@ifnextchar*{\@TCItagstar}{\@TCItag}}
 \def\@TCItag#1{%
     \global\tag@true
     \global\def\@taggnum{(#1)}%
     \global\def\@currentlabel{#1}}
 \def\@TCItagstar*#1{%
     \global\tag@true
     \global\def\@taggnum{#1}%
     \global\def\@currentlabel{#1}}
     \def\tag{\@ifnextchar*{\@tagstar}{\@tag}}
     \def\@tag#1{%
         \global\tag@true
         \global\def\@taggnum{(#1)}}
     \def\@tagstar*#1{%
         \global\tag@true
         \global\def\@taggnum{#1}}
\begin{document}

\date{}
\title{\textbf{A study on fault diagnosis in nonlinear dynamic systems with
uncertainties}}
\author{{\normalsize Steven X. Ding} \\
{\normalsize Institute for Automatic Control and Complex Systems (AKS)}\\
{\normalsize University of Duisburg-Essen, 47057 Duisburg, Germany} \and 
{\normalsize Linlin Li} \\
{\normalsize School of Automation and Electrical Engineering}\\
{\normalsize University of Science and Technology Beijing, Beijing 100083,
China}}
\maketitle




\bigskip

\textbf{Abstract}: In this draft, fault diagnosis in nonlinear dynamic
systems is addressed. The objective of this work is to establish a
framework, in which not only model-based but also data-driven and machine
learning based fault diagnosis strategies can be uniformly handled. To this
end, a paradigm is followed, which is different from the control
theoretically oriented model-based framework. Instead of the
well-established input-output and the associated state space models, stable
image and kernel representations are adopted in our work as the basic
process model forms. The idea behind this handling is to deal with fault
diagnosis issues in the process input-output data space, along the line of
data-driven and machine learning methods. Using the image and kernel
representations the nominal system dynamics can then be modelled as a
lower-dimensional manifold embedded in the process data space. To achieve a
reliable fault detection as a classification problem, projection technique
is a capable tool that projects the process data onto the image manifold
towards classifying the nominal dynamics and faulty/uncertain dynamics. For
nonlinear dynamic systems, we propose to construct projection systems in the
well-established framework of Hamiltonian systems and by means of the
normalised image and kernel representations. It is known that, in case of
linear systems, the norm-based evaluation of the residual signal, created
from the difference between the data and its projection, is widely used in
fault detection. Thanks to the Hilbert Projection Theorem and the
Pythagorean equation, the norm-based evaluation can uniquely distinguish
nominal dynamics from the faulty/uncertain dynamics. For nonlinear dynamic
systems, process data form a non-Euclidean space. Consequently, the
norm-based distance defined in Hilbert space is not suitable to measure the
distance from a data vector to the manifold of the nominal dynamics. To deal
with this issue, we propose to use a Bregman divergence, a measure of
difference between two points in a space, as a solution. Moreover, for our
purpose of achieving a performance-oriented fault detection, the Bregman
divergences adopted in our work are defined by Hamiltonian functions. A
distinguishing contribution of our work is the development of a scheme that
combines the Hamiltonian systems, their Legendre dual systems and Bregman
divergences induced by the corresponding Hamiltonian functions. This scheme
not only enables to realise the performance-oriented fault detection, but
also uncovers the information geometric aspect of our work. Specifically,
the Bregman divergence together with Hamiltonian Legendre dual systems may
induce a dually flat Riemannian manifold in the data space that is regarded
as a dualistic extension of the Euclidean space. In this context, the
Hamiltonian system based projection can be interpreted as a geodesic
projection. The last part of our work is devoted to the kernel
representation based fault detection and uncertainty estimation that can be
equivalently used for fault estimation. To this end, the concepts of the
uncertainty model and the manifold of uncertainty data are introduced and
used to model uncertainties (including faults) corrupted in the process
data. It is demonstrated that the projection onto the manifold of
uncertainty data, together with the correspondingly defined Bregman
divergence, is also capable for fault detection. In particular, it is proved
that such a projection is a least squares estimation of the uncertainty
corrupted in data and subject to the uncertainty model. Hence, the kernel
representation based projection can serve as an optimal estimator for
variations in the process data caused by faults.

\bigskip

\textbf{Keywords}: Fault detection and estimation, projection- and
observer-based fault detection and estimation, Hamiltonian systems, Bregman
divergences, Legendre transform, information geometry, uncertainty and
uncertainty model

\section{Introduction}

Having undergone a rapid development over a couple of decades, model-, in
particular, observer-based fault diagnosis technique has become well
established as an active research area in control theory and engineering 
\cite%
{Frank90,DingJPC97,VRK03-I,MDES2009,HKKS2010survey,GCD-2015survey,ZXD_2018}.
Technically speaking, an observer-based fault detection system consists of
two major units, a residual generator and evaluator. For their design and
analysis system input-output and the associated state space models build a
common basis, and advanced control theoretical methods serve as the major
theoretical tool \cite{PFC89,Gertler98,CP99,Blanke06,Ding2008}.
Specifically, the observer-based residual generator design is mainly
achieved by means of advanced observer theory, while the construction of the
evaluator with a focus on the threshold setting is performed on the basis of
norm-based system analysis \cite{Ding2008,Ding2020}. To put it in a
nutshell, state of the art of observer-based fault diagnosis technique is
marked by the well-established framework for fault diagnosis in linear time
invariant (LTI) systems, while a systematic dealing with fault diagnosis
issues is missing for nonlinear systems.

\bigskip

In their survey \cite{AlcortaCEP97}, Alcorta-Garcia and Frank reviewed state
of the art of nonlinear observer-based fault detection in the 90's. The
major methods include the application of feedback-based linearisation and
differential algebra techniques to observer-based residual generator design 
\cite{SeligerCDC91,HammouriAC99,KH_2001}, and the geometric approach to
nonlinear fault detection \cite{Persis_2001}. Later, considerable attention
has been paid to the application of those special techniques that enable
dealing with analysis and synthesis of nonlinear dynamic systems more
efficiently. These are, amongst others, linear matrix inequalities (LMI)
technique to address a special class of nonlinear systems with, for
instance, Lipschitz nonlinearity \cite{PMZ_2007,ZPP_automatica_2010} or
sector bounded nonlinearity \cite{He09}, fuzzy technique based fault
detection \cite{NSD_2007,CAD_2013_Automatica,Jiang2014,LDQYZ_2016,LDQYX_2017}%
, adaptive fault diagnosis for nonlinear systems \cite%
{XZ_automatica_2004,ZPP_2010,Jiang2014}, linear parameter-varying fault
detection systems \cite{BB2004,ACM_LPV_2009}, and sliding mode
observer-based fault detection {\cite{FBPD_2004,YE_2008}. Although all these
efforts are successful in dealing with fault detection of the corresponding
system classes, a common theoretical framework for design and analysis of
nonlinear fault detection systems seems missing. In our early efforts
towards such a framework, }we noticed that little attention was paid to
essential issues {of nonlinear observer-based }fault detection systems like {%
existence conditions, detectability and parameterisation. }With the aid of
nonlinear system theory as a major tool \cite%
{Vidyasagar80,Sontag_Wang_1995,LU1995_nonlinear,vanderSchaftbook}, {these
issues have been examined and the first results were reported in \cite%
{YDL_SCL_2015,YDL-2016,Ding2020}. }

\bigskip

Fault detection is a classification problem, namely classifying those
potential anomalies from the collected data during system operations. The
use of a system model and, based on it, an observer allows to distinguish
anomalies from the normal input-output dynamics and thus leads to an
efficient classification. On the other hand, in industrial practice, the use
of a mathematical model is generally associated with model uncertainties.
Coupled with external disturbances (unknown inputs), they considerably
complicate a reliable classification, in particular when nonlinear dynamic
systems are under consideration. In fact, distinguishing anomalies/faults
from the uncertainties existing during fault-free operations is a most
crucial issue in dealing with classification/fault detection. A further
challenging issue is to detect process faults, also called component faults
in the literature \cite{Frank90,Ding2008}. As reported in innumerable
publications, observer-based fault detection methods are capable to detect
sensor and actuator faults in dynamic systems. Two plausible reasons for
such successful applications are, (i) an input-output model gives a natural
modelling of sensor and actuator faults, (ii) the norm-based evaluation of
residual signals (the difference of the measurement and its estimate) builds
an explicit indicator for these two types of faults. Differently, process
faults are corrupted with model uncertainties, and as a consequence of
nonlinear system dynamics, process data build a manifold that is a
non-Euclidean space. It is known that usual Euclidean distances or inner
products may not be appropriate to measure the dissimilarity between the
data points on a manifold, which is the core purpose of classification. That
process faults as well as uncertainties can result in remarkable system
performance degradation, but are hardly detected using norm-based evaluation
function is the further aspects that motivate our work.

\bigskip

In the era of artificial intelligence and big data, machine learning (ML)
based classification methods have become for a long time now the mainstream
in the research area of engineering fault diagnosis. Classification-based
fault diagnosis is a technique that uses ML-classifiers to detect faults in
a system based on the features extracted from the process data. As a crucial
step towards reliable and explainable fault diagnosis, dimensionality
reduction technique, also known as manifold learning, serves as a capable
tool, where the irrelevant and redundant informations in the data are
reduced, leading to not only lower computation complexity but also, and more
importantly, better classification and fault diagnosis performance. In order
to capture nonlinear and nonstationary characteristics of the data,
nonlinear dimensionality reduction methods have been developed, among them,
the so-called Isomap (Isometric mapping) \cite{Isomap2000} and LLE
(Locally-linear embedding) \cite{LLE2000} are two representative ones.
Informally speaking, an essential assumption in manifold learning is that
the data have been generated from a lower-dimensional manifold that is
embedded inside of a higher-dimensional space composed of redundant and
irrelevant information (uncertainties). The basic idea of Isomap is the use
of a geodesic distance, which enables to incorporate the manifold structure
in the resulting embedding. The objective of the LLE algorithm is to create
a data mapping from the higher-dimensional manifold to the lower-dimensional
one and preserving local neighbourhoods at every point of the underlying
manifold. In recent years, the so-called autoencoder (AE) technique, a
capable ML-based algorithm for dimensionality reduction and feature learning 
\cite{Hinton2006,Goodfellow2016}, is widely and successfully applied to
fault diagnosis and anomaly detection \cite%
{JIANG2017,Ahmed2021,HZP-CEP-2022,LYY-NCA-2021,RZZ-TII-2020}. An AE consists
of an encoder and a decoder, where the encoder compresses the data in the
higher-dimensional space into a lower-dimensional feature manifold
represented by the so-called latent variable, and the decoder is driven by
the latent variable and delivers the process variables reflecting, in case
of fault diagnosis, nominal process operations. The encoder and decoder are
constructed using neural networks (NNs) that are learnt in such a way that
the latent variable (the lower-dimensional feature) contains exclusive
informations of the nominal process dynamics, from which the process
variables in the nominal state are then fully reconstructed. To put it in a
nutshell, dimensionality reduction is a capable strategy to remove redundant
and irrelevant informations in the data and thus recover exclusive
information about the nominal process operations by means of projecting the
data in the higher-dimensional space to the lower-dimensional feature
manifold.

\bigskip

A comparison of the observer-based and dimensionality reduction-based ML
diagnose methods reveals the following insightful differences of these fault
detection strategies, in spite of the evidently different assumptions on the
types of available process knowledge and information,

\begin{itemize}
\item the input-output/state space models describing the nominal system
dynamics build the system core of observer-based fault detection, while the
ML-based methods handle the nominal system dynamics as a sub-manifold
(feature manifold) embedded in the (higher-dimensional) data space composed
of all the process data with uncertainties,

\item accordingly, the observer-based methods handle system uncertainties by
means of residual generation, and the ML-based methods make use of
projection techniques,

\item and consequently, in the framework of the observer-based fault
detection the detection decision is made on the basis of a norm-based
residual evaluation, while for the ML-based methods, the decision is\ based
on the distance metric between the current data vector and the
lower-dimensional feature manifold. It is noteworthy that in non-Euclidean
space the concept of a geodesic distance is often applied in order to
capture the nonlinear structure of the data manifold.
\end{itemize}

In a short form, these major differences are summarised as%
\begin{eqnarray*}
&&\left\langle \text{I/O-model, residual, norm-based evaluation}%
\right\rangle \text{ }vs. \\
&&\left\langle \text{feature manifold, projection, geodesic distance}%
\right\rangle ,
\end{eqnarray*}%
whose centerpiece is handling of nominal dynamics and uncertainties.

\bigskip

Handling of model uncertainties using the projection technique in a Hilbert
subspace is nothing new in control theory. In the 90s, gap metric was widely
applied in robust control theory to measure influences of model
uncertainties on a feedback control system \cite%
{Georgiou&Smith90,Vinnicombe-book}. Roughly speaking, a gap is a similarity
metric between subspaces in Hilbert space based on an orthogonal projection 
\cite{Georgiou88,Vinnicombe-book,Feintuch_book}. Applied to robust control
theory, a gap metric gives, for instance, the "distance" between the system
dynamics with and without uncertainties. The use of gap metrices to fault
detection was proposed by \cite{Ding2015a}, and comprehensively investigated
in \cite{LD-Automatica-2020,Wang-gap-2022}. Motivated by the promising
results in these works and inspired by the aforementioned ML-based fault
detection methods, an initial project has been launched to investigate
application of orthogonal projection technique to fault detection in LTI
systems \cite{ding2022}. The primary objective of the project is to
establish a projection-based fault detection technique as a potential
alternative framework to the observer-based one and to compare the capacity
of the two detection schemes. Below are two distinct conclusions,

\begin{itemize}
\item in the projection-based fault detection framework, residual generation
and evaluation can be well realised in an optimal and integrated manner, and

\item a projection-based fault detection system can be, under certain
conditions, equivalently realised by a class of optimal observer-based fault
detection systems.
\end{itemize}

The theoretical basis for these conclusions is the orthogonal projection in
Hilbert space which allows an orthogonal decomposition of a process data
vector into an optimal estimate (projection) and the residual. As a result,
the norms of the process data, the residual vector and the projection
satisfy, following the Pythagorean Theorem, the Pythagorean equation, which
leads to an optimal fault detection.

\bigskip

Motivated and driven by the aforementioned preliminary results, the main
project has started with the objective of establishing a projection-based
fault diagnosis framework, in which nonlinear dynamic systems with
uncertainties are considered, and an optimal system performance-oriented
fault detection as well as fault estimation are defined and approached. This
is also the major objective of this draft reporting the main results of our
work. We would like to mention that the overall goal of our efforts is to
build a uniform framework, in which both control theory based and ML-based
methods can be developed for fault diagnosis in dynamic control systems. To
be specific, the design of ML-based fault diagnosis systems could be well
guided by known control-theoretic knowledge and, on the other hand, the
design and realisation of model-based fault diagnosis systems are supported
by data-driven and ML-algorithms. Our recent work on control theoretically
explainable application of autoencoder methods to fault detection \cite%
{LDLCX2022} is an example of the former case, while the latter is inspired
by current research on physics-informed machine learning (PINN) \cite%
{PINN2021}.

\bigskip

To approach the above described objective, we focus on the following four
major tasks.

\begin{itemize}
\item Definition of nominal system dynamics as a manifold in the process
data space and realisation of normalised image and kernel representations.
Let the process input and output data build a data space. It is known that,
for linear systems, the so-called (stable) image and kernel representations
(SIR and SKR) are system model forms that are equivalent to the input-output
(transfer function) and the associated state space models \cite%
{Vinnicombe-book}. In particular, the normalised SIR and SKR, as inner and
co-inner systems \cite{Hoffmann1996}, play an importance role in the
(orthogonal) projection related research \cite%
{Georgiou88,Vinnicombe-book,Feintuch_book}. By means of SIR and SKR, two
equivalent subspaces, the image and kernel subspaces, are defined in Hilbert
space. The analogue concepts are also well defined for nonlinear systems 
\cite{vanderSchaftbook}. Nevertheless, the lossless properties of inner and
co-inner (refer to Definition \ref{Def2-2}) are of special importance in our
subsequent study and motivate us to follow the works by van der Schaft and
his co-worker on nonlinear systems \cite{Scherpen1994,Ball1996,PS-AC2005}.
The core of these works is various configurations of the so-called
Hamiltonian extension \cite{Schaft-book1987} as Hamiltonian systems, which
serve as the major system theoretical tool in our work as well.

\item Design and analysis of projection systems. For LTI systems, the
orthogonal projection onto the image space is a series system composed of
the normalised SIR and its conjugate \cite{Georgiou88}. To our best
knowledge, no study has been reported about analogue results for nonlinear
systems, possibly for the reason that for nonlinear systems such a
projection is not a geodesic one and thus seems useless. Despite that, we
propose to configure the projection as a Hamiltonian system using the
normalised SIR. The centerpiece in this work is undoubtedly the Hamiltonian
function of the system, which describes the system dynamics in the sense of
energy transfer in the system \cite{Scherpen1994,Ball1996,PS-AC2005}, and is
thus interpreted as a system performance function. Associated with it, a key
concept, the so-called Legendre transform, is introduced, which gives a dual
relation of the input and output vectors of the Hamiltonian system as well
as a dual form of the Hamiltonian system \cite{PS-AC2005,Amari2016}. For our
purpose of fault detection, the dual pair of the Hamiltonian functions is
analysed corresponding to the nominal system operation and operations when
uncertainties or even faults exist. Although this task is primarily
dedicated to construct projection systems, it serves as the system-theoretic
basis for the next task as well.

\item Bregman divergence-based performance-oriented fault detection. While
the first two tasks are, more or less, extended applications of the existing
methods, this task is crucially different from the projection-based scheme
for LTI systems, where the residual is built by means of the difference of
the data vector and its projection and further norm-based evaluated. This
task is devoted to three issues: (i) handling of process data in
non-Euclidean space aiming at classification, (ii) performance-oriented
fault detection, and (iii) interpretation of the projection-based method. To
this end, Bregman divergences \cite{BREGMAN1967} are introduced. A Bregman
divergence is defined in terms of a convex function and serves as a measure
of difference between two points in a space. We propose to define Bregman
divergences in terms of Hamiltonian functions. The idea behind that is to
measure the difference between the data vector and its projection with
respect to the change in the Hamiltonian function as a performance function
under consideration. Observe that Bregman divergences form an important
class of divergences and are widely used in information geometry \cite%
{Amari2016}. In particular, a Bregman divergence together with a Legendre
transform may induce a dually flat Riemannian manifold in the data space
that may be regarded as a dualistic extension of the Euclidean space \cite%
{Amari2016,Nielsen2020}. In this context, the interpretation of the
Hamiltonian system based projection as a geodesic projection will be studied.

\item Modelling of data uncertainty, manifold of uncertain data, and fault
detection and estimation. In the projection-based fault diagnosis framework
for LTI systems, thanks to the Pythagorean Theorem and the Pythagorean
equation, projections onto the image subspace and its orthogonal complement
are dual problems and can be addressed, in a certain sense, in a uniform
manner. For process data in a non-Euclidean space, they cannot be decomposed
into two components representing nominal and uncertain/faulty data,
respectively. In other words, nominal data are corrupted with uncertain data
and they are inseparable. For our purposes, the concepts of (data)
uncertainty model and manifold of uncertain data are introduced on the basis
of SKR. While the data in the image manifold are created under nominal
system dynamics, the data in the manifold of uncertain data are assumed to
contain uncertainties exclusively. In this sense, the SKR-based uncertainty
model is a dual form of the SIR-based model for nominal dynamics. Here, we
place a greater emphasis on the possible interpretation of uncertainties as
faults. In such a system setting, Hamiltonian systems, the associated
Hamiltonian functions and the corresponding Bregman divergences, dual to the
ones defined in previous tasks, are defined, designed, and then applied for
the fault detection purpose. In addition, on the assumption that the
uncertainty is caused by faulty operations, the SKR-based projection onto
the manifold of uncertain data delivers an estimate for the faulty data. The
performance of such an estimation will finally be analysed.
\end{itemize}

The intended contributions of our work can be summarised in three levels and
are described as follows.

\begin{itemize}
\item Establishment of a fault diagnosis framework. This framework provides
us with an optimal design of fault diagnosis systems for dynamic control
systems, as an alternative to the observer-based one. More importantly, its
application is not limited to the design of model-based fault diagnosis
systems, but also useful to guide the construction of ML-based fault
diagnosis systems.

\item Novel fault diagnosis schemes and methods. On the basis of normalised
SIR and SKR and the associated projection systems, performance-oriented
optimal fault detection and faults estimation are achieved. To this end, a
design procedure is proposed that combines the control theoretic methods,
Hamiltonian systems with the associated Hamiltonian functions as well as
inner and co-inner systems, and the methods known in information geometry
like Bregman divergences and Legendre transform. The latter gives an
interpretation of the projection onto the system image manifold as a
geodesic projection in the context of a dually flat Riemannian manifold. In
addition, a fault estimation scheme is proposed on the basis of the concepts
of the uncertainty model and manifold of uncertain data.

\item Applications of the developed novel fault diagnosis schemes and
methods lead to numerous new results, including,

\begin{itemize}
\item determination of the Hamiltonian functions with respect to the nominal
dynamics and dynamics corrupted with uncertainties/faults, as given in
Theorem \ref{Theo3-2}, Corollary \ref{col3-1} as well as Theorem \ref%
{Theo4-1} and Corollary \ref{col4-1},

\item determination of Bregman divergences as evaluation functions,
corresponding to the projections onto the image manifold and the manifold of
uncertain data, proved in Theorem \ref{Theo3-3} and shown in (\ref{eq4-6}),
respectively,

\item proof of the interpretation that the projection onto the image
manifold is a geodesic projection, as given in Theorem \ref{Theo3-4},

\item proof of Theorem \ref{Theo3-5} that shows the evaluation function over
a time interval is a Bregman divergence as well,

\item proof of the interpretation of the projection onto the manifold of
uncertain data as a geodesic projection, given in Theorem \ref{Theo4-3}, and

\item demonstration of the projections onto the manifold of uncertain data
as a least square (LS) estimation subject to the uncertainty model in
Subsection \ref{subsec4-3-2}.
\end{itemize}
\end{itemize}

The draft is organised as follows. In Section 2, necessary preliminaries are
introduced. At first, the orthogonal projection-based fault detection scheme
for LTI systems, serving as background and inspiration, is briefly reviewed.
Further control-theoretic preliminaries include image and kernel
representations of nonlinear dynamic systems, Hamiltonian systems, inner and
co-inner systems. It is followed by an introduction to definitions and
computation of Bregman divergences and Legendre transform. They are used not
only as divergences for performance-oriented fault detection, but also for
inducing a dual Riemannian structure in the data space, which admits a
conjugate pair of affine connections. In this regard, the concepts of
divergences from a (data) vector to a manifold and geodesic projections are
described. At the end of this section, problems to be addressed in this
draft are specified. Section 3 is devoted to the projection and Bregman
divergence based fault detection. This work consists of three parts. The
first part deals with construction of a projection system by means of the
normalised SIR and its associated Hamiltonian systems, and the analysis of
the corresponding Hamiltonian functions under different operation
conditions. It is followed by the second part focusing on (i) computation of
the Bregman divergence from the data vector to its projection, which is
derived from the (dual) Hamiltonian function, (ii) its use for fault
detection, and (iii) its interpretation as a geodesic projection. The last
part in this section discusses about implementations issues, including
definition of a reliable evaluation function over an evaluation time
interval and determination of a threshold. In Section 4, two different but
strongly related topics are addressed. The first one is an SKR-based
projection, as a dual fault detection scheme to the SIR-based projection.
This work is analogue to the one presented in Section 3. Fault estimation is
the second topic. The focus of this study is on the application of an
SKR-based projection as a fault estimator and analysis of its estimation
performance. For both parts, the so-called uncertainty model and the
manifold of uncertain data serve as a common basis and are introduced at the
beginning of this section.

\bigskip

Throughout this paper, standard notations known in advanced control theory
and linear algebra are adopted. In addition, $\mathcal{L}_{2}$ is time
domain space of all square summable Lebesgue signals (signals with bounded
energy). For a transfer matrix $G(s),$ its conjugate is denoted by $G^{\sim
}(s)\left( =G^{T}(-s)\right) .$ In the context of Legendre transform, the
dual of vector $\alpha $ and function $\varphi $ is denoted by $\alpha
^{\times }$ and $\varphi ^{\times },$ respectively. Spaces and manifolds are
denoted by calligraphic letters such as $\mathcal{I},\mathcal{K}$ for image
and kernel subspaces (manifolds).

\section{Preliminaries}

\subsection{System factorisations and projection-based fault detection of
LTI systems\label{sec2-1}}

Consider an LTI system modelled by a transfer function matrix $G(s),$ whose
minimal state space representation is given by%
\begin{equation*}
G:%
\begin{cases}
\dot{x}(t)=Ax(t)+Bu(t) \\ 
y(t)=Cx(t)+Du(t),%
\end{cases}%
\end{equation*}%
where $x\in \mathbb{R}^{n},u\in \mathbb{R}^{p},y\in \mathbb{R}^{m}$ are the
state, input and output vectors, respectively, $A,B,C,D$ are system matrices
of appropriate dimensions. A right coprime factorisation (RCF) of $G$ is
given by $G(s)=N(s)M^{-1}(s)$ with the right coprime pair $\left( M,N\right) 
$ whose state space representations are given by%
\begin{equation}
M(s)=(A+BF,BV,F,V),N(s)=(A+BF,BV,C+DF,DV).
\end{equation}%
where $F$ is selected such that $A+BF$ is Hurwitz, $V$ as a pre-filter is an
invertible constant matrix. The RCF of $G$ can be interpreted as a state
feedback control system with $u=Fx+Vv,$ where $v$ serves as a reference
signal. An alternative model form of $y=Gu$ is the so-called stable image
representation (SIR) \cite{Ding2020,vanderSchaftbook} expressed in terms of $%
\left( M,N\right) $ by 
\begin{equation}
I_{G}(s):\left[ 
\begin{array}{c}
u(s) \\ 
y(s)%
\end{array}%
\right] =\left[ 
\begin{array}{c}
M(s) \\ 
N(s)%
\end{array}%
\right] v(s).
\end{equation}%
In this context, vector $v$ is understood as a latent variable.

\begin{Rem}
Hereafter, we may drop out the domain variable $s$\ or $t$ when there is no
risk of confusion.
\end{Rem}

The dual concepts to RCF and SIR are the so-called left coprime
factorisation (LCF) and stable kernel representation (SKR) of $G$\ {\cite%
{Vinnicombe-book}}. Denoted by 
\begin{equation*}
K_{G}(s)=\left[ 
\begin{array}{cc}
-\hat{N}(s) & \text{ }\hat{M}(s)%
\end{array}%
\right]
\end{equation*}%
with $\left( \hat{M},\hat{N}\right) $ as a left coprime pair, the state
space realisation of $K_{G}$ is 
\begin{equation*}
K_{G}:\left\{ 
\begin{array}{l}
\dot{\hat{x}}=\left( A-LC\right) \hat{x}+\left[ 
\begin{array}{cc}
B-LD & \text{ }-L%
\end{array}%
\right] \left[ 
\begin{array}{c}
u \\ 
y%
\end{array}%
\right] \\ 
r_{y}=-WC\hat{x}+\left[ 
\begin{array}{cc}
-WD & \text{ }W%
\end{array}%
\right] \left[ 
\begin{array}{c}
u \\ 
y%
\end{array}%
\right] ,%
\end{array}%
\right.
\end{equation*}%
which is an observer-based residual generator with $r_{y}$ as the residual
vector satisfying%
\begin{equation*}
r_{y}=Wr_{0},r_{0}=y-C\hat{x}-Du.
\end{equation*}%
As an observer-based residual generator, the observer gain $L$ is selected
such that $A-LC$ is Hurwitz, and $W$ is an invertible post-filter. It is
obvious that for $\hat{x}(0)=x(0),K_{G}$ satisfies 
\begin{equation}
r_{y}=K_{G}\left[ 
\begin{array}{c}
u \\ 
y%
\end{array}%
\right] =K_{G}I_{G}v=0\Longrightarrow \left[ 
\begin{array}{cc}
-\hat{N} & \text{ }\hat{M}%
\end{array}%
\right] \left[ 
\begin{array}{c}
M \\ 
N%
\end{array}%
\right] =0.  \label{eq2-31}
\end{equation}%
During normal process operations, the process data $\left( u,y\right) $
build a subspace in Hilbert space $\mathcal{L}_{2},$ called image subspace
of $G,$ which is explicitly defined by the SIR of $G$ and the latent
variable $v$ as follows 
\begin{equation}
\mathcal{I}_{G}=\left\{ \left[ 
\begin{array}{c}
u \\ 
y%
\end{array}%
\right] :\left[ 
\begin{array}{c}
u \\ 
y%
\end{array}%
\right] =\left[ 
\begin{array}{c}
M \\ 
N%
\end{array}%
\right] v,v\in \mathcal{L}_{2}\right\} .
\end{equation}%
By means of $K_{G}$, the kernel subspace of $G$ is defined as%
\begin{equation}
\mathcal{K}_{G}=\left\{ \left[ 
\begin{array}{c}
u \\ 
y%
\end{array}%
\right] :r_{y}=\left[ 
\begin{array}{cc}
-\hat{N} & \text{ }\hat{M}%
\end{array}%
\right] \left[ 
\begin{array}{c}
u \\ 
y%
\end{array}%
\right] =0,\left[ 
\begin{array}{c}
u \\ 
y%
\end{array}%
\right] \in \mathcal{L}_{2}\right\} ,
\end{equation}%
which is, due to relation (\ref{eq2-31}), identical with $\mathcal{I}_{G},$
i.e. $\mathcal{K}_{G}=\mathcal{I}_{G}$ {\cite{Vinnicombe-book}}. The
complementary subspace of $\mathcal{I}_{G}$ is denoted by $\mathcal{I}%
_{G}^{\bot }$.

\bigskip

Let 
\begin{equation*}
I_{G,0}(s)=\left[ 
\begin{array}{c}
M_{0}(s) \\ 
N_{0}(s)%
\end{array}%
\right] ,K_{G,0}(s)=\left[ 
\begin{array}{cc}
-\hat{N}_{0}(s) & \text{ }\hat{M}_{0}(s)%
\end{array}%
\right]
\end{equation*}%
denote the normalised SIR and SKR of $G,$ defined by 
\begin{equation}
I_{G,0}^{\sim }I_{G,0}=N_{0}^{\sim }N_{0}+M_{0}^{\sim
}M_{0}=I,K_{G,0}K_{G,0}^{\sim }=\hat{N}_{0}\hat{N}_{0}^{\sim }+\hat{M}_{0}%
\hat{M}_{0}^{\sim }=I.  \label{eq2-32}
\end{equation}%
On account of (\ref{eq2-31}) and (\ref{eq2-32}), we have%
\begin{equation}
\left[ 
\begin{array}{c}
I_{G,0}^{\sim } \\ 
K_{G,0}%
\end{array}%
\right] \left[ 
\begin{array}{cc}
I_{G,0} & \text{ }K_{G,0}^{\sim }%
\end{array}%
\right] =\left[ 
\begin{array}{cc}
M_{0}^{\sim } & \text{ }N_{0}^{\sim } \\ 
-\hat{N}_{0} & \text{ }\hat{M}_{0}%
\end{array}%
\right] \left[ 
\begin{array}{cc}
M_{0} & \text{ }-\hat{N}_{0}^{\sim } \\ 
N_{0} & \text{ }\hat{M}_{0}^{\sim }%
\end{array}%
\right] =\left[ 
\begin{array}{cc}
I & \text{ }0 \\ 
0 & \text{ }I%
\end{array}%
\right] .  \label{eq2-20}
\end{equation}%
The reader is referred to \cite{Hoffmann1996} for more details about the
normalised SIR and SKR and their computation. It is noteworthy that the
normalised SIR and SKR of $G$ are so-called inner and co-inner systems,
respectively \cite{Hoffmann1996}.

\bigskip

Now, we are in a position to introduce the basic idea and algorithms of
projection-based optimal fault detection system. Let $\mathcal{P}_{\mathcal{V%
}}$ be an operator defined on a subspace $\mathcal{V}$ in Hilbert space that
is endowed with the inner product, 
\begin{equation*}
\left\langle \alpha ,\beta \right\rangle =\int\limits_{-\infty }^{\infty
}\alpha ^{T}(t)\beta (t)dt,\alpha ,\beta \in \mathcal{V\subset L}_{2}.
\end{equation*}%
If $\mathcal{P}_{\mathcal{V}}$ is idempotent and self-adjoint, namely 
\begin{equation}
\forall \alpha ,\beta \in \mathcal{V},\mathcal{P}_{\mathcal{V}}^{2}=\mathcal{%
P}_{\mathcal{V}},\left\langle \mathcal{P}_{\mathcal{V}}\alpha ,\beta
\right\rangle =\left\langle \alpha ,\mathcal{P}_{\mathcal{V}}\beta
\right\rangle ,
\end{equation}%
it is an operator of an orthogonal projection onto $\mathcal{V}$ \cite%
{Kato_book}. If the subspace $\mathcal{V}$ is closed, the distance between $%
\beta $ and $\mathcal{V},dist\left( \beta ,\mathcal{V}\right) ,$ is defined
as%
\begin{equation}
dist\left( \beta ,\mathcal{V}\right) =\inf_{\alpha \in \mathcal{V}%
}\left\Vert \beta -\alpha \right\Vert _{2},  \label{eq2-0}
\end{equation}%
which can be computed as 
\begin{equation*}
dist\left( \beta ,\mathcal{V}\right) =\left\Vert \beta -\mathcal{P}_{%
\mathcal{V}}\beta \right\Vert _{2}=\left\Vert \mathcal{P}_{\mathcal{V}^{\bot
}}\beta \right\Vert _{2}.
\end{equation*}%
It is well-known that the image subspace $\mathcal{I}_{G}$ is closed in $%
\mathcal{L}_{2}$ and $\mathcal{P}_{\mathcal{I}_{G}}:\mathcal{L}%
_{2}\rightarrow \mathcal{L}_{2},$ 
\begin{equation*}
\mathcal{P}_{\mathcal{I}_{G}}\left( \left[ 
\begin{array}{c}
u \\ 
y%
\end{array}%
\right] \right) :=I_{G,0}I_{G,0}^{\sim }\left[ 
\begin{array}{c}
u \\ 
y%
\end{array}%
\right]
\end{equation*}%
defines an orthogonal projection onto $\mathcal{I}_{G}$ \cite%
{Vinnicombe-book} with $\mathcal{P}_{\mathcal{I}_{G}}$ denoting the
projection operator. Note that operator $\mathcal{I-P}_{\mathcal{I}_{G}}:%
\mathcal{L}_{2}\rightarrow \mathcal{L}_{2},$ 
\begin{equation}
\left( \mathcal{I-P}_{\mathcal{I}_{G}}\right) \left[ 
\begin{array}{c}
u \\ 
y%
\end{array}%
\right] =\left( I-I_{G,0}I_{G,0}^{\sim }\right) \left[ 
\begin{array}{c}
u \\ 
y%
\end{array}%
\right] ,  \label{eq2-21b}
\end{equation}%
defines an orthogonal projection onto the orthogonal complement of $\mathcal{%
I}_{G}$. Consequently, any process data can be written as 
\begin{equation*}
\left[ 
\begin{array}{c}
u \\ 
y%
\end{array}%
\right] =\mathcal{P}_{\mathcal{I}_{G}}\left[ 
\begin{array}{c}
u \\ 
y%
\end{array}%
\right] +\mathcal{P}_{\mathcal{I}_{G}^{\bot }}\left[ 
\begin{array}{c}
u \\ 
y%
\end{array}%
\right] ,\mathcal{P}_{\mathcal{I}_{G}^{\bot }}=\mathcal{I-P}_{\mathcal{I}%
_{G}}.
\end{equation*}%
In the context of one-class classification, faulty operations are detected
if 
\begin{equation*}
\left[ 
\begin{array}{c}
u \\ 
y%
\end{array}%
\right] \notin \mathcal{I}_{G}\Longrightarrow \mathcal{P}_{\mathcal{I}%
_{G}^{\bot }}\left[ 
\begin{array}{c}
u \\ 
y%
\end{array}%
\right] \neq 0,
\end{equation*}%
and $\mathcal{P}_{\mathcal{I}_{G}^{\bot }}\left[ 
\begin{array}{c}
u \\ 
y%
\end{array}%
\right] $ is sufficiently large (with respect to a defined threshold). From
the implementation point of view, it is of interest to notice that, due to (%
\ref{eq2-20}), it holds%
\begin{eqnarray}
\left[ 
\begin{array}{cc}
I_{G,0} & \text{ }K_{G,0}^{\sim }%
\end{array}%
\right] \left[ 
\begin{array}{c}
I_{G,0}^{\sim } \\ 
K_{G,0}%
\end{array}%
\right] &=&I\Longleftrightarrow I-I_{G,0}I_{G,0}^{\sim }=K_{G,0}^{\sim
}K_{G,0}\Longrightarrow  \notag \\
\left\Vert \left( \mathcal{I-P}_{\mathcal{I}_{G}}\right) \left( \left[ 
\begin{array}{c}
u \\ 
y%
\end{array}%
\right] \right) \right\Vert _{2} &=&\left\Vert K_{G,0}^{\sim }K_{G,0}\left[ 
\begin{array}{c}
u \\ 
y%
\end{array}%
\right] \right\Vert _{2}=\left\Vert K_{G,0}\left[ 
\begin{array}{c}
u \\ 
y%
\end{array}%
\right] \right\Vert _{2}=\left\Vert r_{y}\right\Vert _{2}.  \label{eq2-21}
\end{eqnarray}%
Consequently, when the $\mathcal{L}_{2}$-norm is used for residual
evaluation, the projection $\mathcal{P}_{\mathcal{I}_{G}}$ and the
corresponding residual \ generation $\left( \mathcal{I-P}_{\mathcal{I}%
_{G}}\right) \left( \left[ 
\begin{array}{c}
u \\ 
y%
\end{array}%
\right] \right) $ can be equivalently realised in the observer-based fault
detection framework \cite{Ding2008}.

\bigskip

In a nutshell, an orthogonal projection onto a subspace in Hilbert space
enables us to compute the distance between a collected data and the nominal
system operation presented by the image/kernel subspace of the system under
consideration. Thanks to the orthogonality, it holds%
\begin{align}
\left[ 
\begin{array}{c}
u \\ 
y%
\end{array}%
\right] & =\mathcal{P}_{\mathcal{I}_{G}}\left( \left[ 
\begin{array}{c}
u \\ 
y%
\end{array}%
\right] \right) +\mathcal{P}_{\mathcal{I}_{G}^{\bot }}\left[ 
\begin{array}{c}
u \\ 
y%
\end{array}%
\right] \Longrightarrow  \notag \\
\left\Vert \left[ 
\begin{array}{c}
u \\ 
y%
\end{array}%
\right] \right\Vert _{2}^{2}& =\left\Vert \mathcal{P}_{\mathcal{I}%
_{G}}\left( \left[ 
\begin{array}{c}
u \\ 
y%
\end{array}%
\right] \right) \right\Vert _{2}^{2}\mathcal{+}\left\Vert \mathcal{P}_{%
\mathcal{I}_{G}}\left( \left[ 
\begin{array}{c}
u \\ 
y%
\end{array}%
\right] \right) \right\Vert _{2}^{2}\Longrightarrow  \notag \\
\left\Vert r_{y}\right\Vert _{2}^{2}& =\left\Vert K_{G,0}\left[ 
\begin{array}{c}
u \\ 
y%
\end{array}%
\right] \right\Vert _{2}^{2}=\left\Vert \left[ 
\begin{array}{c}
u \\ 
y%
\end{array}%
\right] \right\Vert _{2}^{2}-\left\Vert \mathcal{P}_{\mathcal{I}_{G}}\left( %
\left[ 
\begin{array}{c}
u \\ 
y%
\end{array}%
\right] \right) \right\Vert _{2}^{2},  \label{eq2-21a}
\end{align}%
which results in an optimal fault detection in LTI systems. We would like to
emphasise that this scheme is a one-class fault detection, which is achieved
solely based on the nominal system dynamics. An extension to two-class fault
detection as well as to fault isolation (multi-class classification) is
straightforward \cite{ding2022}.

\subsection{Kernel and image representations of nonlinear systems, and
manifolds}

A nonlinear dynamic system $\Sigma $ can be generally modelled as%
\begin{equation*}
\Sigma :\left\{ 
\begin{array}{l}
\dot{x}(t)=f(x(t),u(t)) \\ 
y(t)=h(x(t),u(t)),%
\end{array}%
\right.
\end{equation*}%
where $x\in \mathbb{R}^{n}$ denotes the state vector, $u\in \mathbb{R}^{p}$
and $y\in \mathbb{R}^{m}$ represent system input and output vectors,
respectively. $f(x,u)$ and $h(x,u)$ are continuously differentiable
nonlinear functions with appropriate dimensions. In our work, for the sake
of simplicity, the affine form of $\Sigma ,$ 
\begin{equation}
\Sigma :\left\{ 
\begin{array}{l}
\dot{x}=a(x)+B(x)u,x(0)=x_{0} \\ 
y=c(x)+D(x)u,%
\end{array}%
\right.  \label{eq2-1}
\end{equation}%
is under consideration. Here, $x_{0}$ represents the initial value of $x(t),$
and $a(x),B(x),c(x)$ and $D(x)$ are continuously differentiable and of
appropriate dimensions. The affine system (\ref{eq2-1}) is a class of
nonlinear systems which are widely adopted in nonlinear system research and
can be considered as a natural extension of linear systems.

\bigskip

Roughly speaking, given system $\Sigma ,$ a stable system $\Sigma _{\mathcal{%
I}}$ is a stable image representation (SIR) of $\Sigma $ if for all $%
\mathcal{L}_{2}$-bounded input $u$ and its $\mathcal{L}_{2}$-bounded
response $y=\Sigma (u)$, there exists an $\mathcal{L}_{2}$-bounded $v\in 
\mathbb{R}^{p}$ such that \ \cite{vanderSchaftbook} 
\begin{equation*}
\left[ 
\begin{array}{c}
u \\ 
y%
\end{array}%
\right] =\Sigma _{\mathcal{I}}\left( v\right) .
\end{equation*}%
The state space representation of the SIR $\Sigma _{\mathcal{I}}\left(
v\right) $ is an essential model form used in our subsequent work and is
given by 
\begin{gather}
\Sigma _{\mathcal{I}}:\hspace{-2pt}%
\begin{cases}
\dot{x}=a_{I}(x)+B_{I}(x)v \\ 
\left[ 
\begin{array}{c}
u \\ 
y%
\end{array}%
\right] =c_{I}(x)+D_{I}(x)v,%
\end{cases}
\label{eq2-2} \\
a_{I}(x)=a(x)+B(x)g(x),B_{I}(x)=B(x)V(x),  \notag \\
c_{I}(x)=\left[ 
\begin{array}{c}
g(x) \\ 
c(x)+D(x)g(x)%
\end{array}%
\right] ,D_{I}(x)=\left[ 
\begin{array}{c}
V(x) \\ 
D(x)V(x)%
\end{array}%
\right] ,  \notag
\end{gather}%
where $g(x)$ is designed such that $a_{I}(x)$ is stable, and $V(x)\in 
\mathbb{R}^{p\times p}$ is invertible. In the context of feedback control
systems, 
\begin{equation*}
u=g(x)+V(x)v
\end{equation*}%
is understood as a controller with a state feedback $g(x),$ feed-forward
controller $V(x)$ and ${v}$ as the reference signal. Note that 
\begin{equation*}
D_{I}^{T}(x)D_{I}(x)=V^{T}(x)\left( I+D^{T}(x)D(x)\right) V(x)
\end{equation*}%
is invertible.

\bigskip

The so-called stable kernel representation (SKR) of $\Sigma $ is another
system model form and useful\ in our subsequent study. The SKR of $\Sigma ,$
denoted by $\Sigma _{\mathcal{K}},$ is a stable system that satisfies,
roughly speaking, 
\begin{equation}
r_{y}=\Sigma _{\mathcal{K}}\left( 
\begin{array}{c}
u \\ 
y%
\end{array}%
\right) =0,r_{y}\in \mathbb{R}^{m},  \label{eq2-2b}
\end{equation}%
when there exists no uncertainty in the system. The state space
representation of SKR $\Sigma _{\mathcal{K}}$ is described by 
\begin{align}
\Sigma _{\mathcal{K}}& :\left\{ 
\begin{array}{l}
\dot{\hat{x}}=a(\hat{x})+B(\hat{x})u+L(\hat{x})\left( y-\hat{y}\right) ,\hat{%
x}(0)=\hat{x}_{0} \\ 
\text{ \ }=a_{K}(\hat{x})+B_{K}(\hat{x})\left[ 
\begin{array}{c}
u \\ 
y%
\end{array}%
\right] \\ 
r_{y}=W(\hat{x})r_{0}=c_{K}(\hat{x})+D_{K}(\hat{x})\left[ 
\begin{array}{c}
u \\ 
y%
\end{array}%
\right] \\ 
r_{0}=y-\hat{y},\hat{y}=c\left( \hat{x}\right) +D\left( \hat{x}\right) u,%
\end{array}%
\right.  \label{eq2-3} \\
a_{K}(\hat{x})& =a(\hat{x})-L(\hat{x})c\left( \hat{x}\right) ,B_{K}(\hat{x})=%
\left[ 
\begin{array}{cc}
B(\hat{x})-L(\hat{x})D\left( \hat{x}\right) & \text{ }L(\hat{x})%
\end{array}%
\right] ,  \notag \\
c_{K}(\hat{x})& =-W(\hat{x})c\left( \hat{x}\right) ,D_{K}(\hat{x})=\left[ 
\begin{array}{cc}
-W(\hat{x})D\left( \hat{x}\right) & \text{ }W(\hat{x})%
\end{array}%
\right] ,  \notag
\end{align}%
where $L(\hat{x})$ is designed such that $a_{K}(\hat{x})$ is stable and $W(%
\hat{x})\in \mathbb{R}^{m\times m}$ is invertible. In the detection context, 
$r_{0}$ builds a residual vector, $W$ is a post-filter, and it holds 
\begin{equation*}
D_{K}^{T}(\hat{x})D_{K}(\hat{x})=W(\hat{x})\left( I+D(\hat{x})D^{T}(\hat{x}%
)\right) W^{T}(\hat{x})
\end{equation*}%
being invertible. Note that for $\hat{x}(0)=\hat{x}%
_{0}=x(0)=x_{0},r_{y}=r_{0}=0.$ For this case, we denote $\Sigma _{\mathcal{K%
}}$ by $\Sigma _{\mathcal{K}}^{x_{0}}.$ It is apparent that $\forall {v,}$%
\begin{equation}
\Sigma _{\mathcal{K}}^{x_{0}}\circ \Sigma _{\mathcal{I}}\left( v\right)
=\Sigma _{\mathcal{K}}^{x_{0}}\left( 
\begin{array}{c}
u \\ 
y%
\end{array}%
\right) =0\Longrightarrow \Sigma _{\mathcal{K}}^{x_{0}}\circ \Sigma _{%
\mathcal{I}}=0.  \label{eq2-4}
\end{equation}%
In terms of the concepts of SIR and SKR, we are in a position to introduce
the definitions of image and kernel manifolds. They are lower-dimensional
manifolds embedded in the data space of $\left( u,y\right) $. This manner of
modelling dynamic systems enables us to address fault
detection/classification issues directly in the data space, rather than on
the basis of the system input-output relations.

\begin{Def}
Given system (\ref{eq2-1}), its SIR (\ref{eq2-2}) and SKR (\ref{eq2-3}),%
\begin{align}
\mathcal{I}_{\Sigma }& =\left\{ \left[ 
\begin{array}{c}
u \\ 
y%
\end{array}%
\right] :\left[ 
\begin{array}{c}
u \\ 
y%
\end{array}%
\right] =\Sigma _{\mathcal{I}}\left( v\right) ,v\in \mathcal{L}_{2}\right\} ,
\\
\mathcal{K}_{\Sigma }& =\left\{ \left[ 
\begin{array}{c}
u \\ 
y%
\end{array}%
\right] \in \mathcal{L}_{2}:\Sigma _{\mathcal{K}}^{x_{0}}\left( \left[ 
\begin{array}{c}
u \\ 
y%
\end{array}%
\right] \right) =0\right\}
\end{align}%
are called image and kernel sub-manifold of $\Sigma ,$ respectively.
\end{Def}

Note that, due to relation (\ref{eq2-4}), it holds $\mathcal{I}_{\Sigma }=%
\mathcal{K}_{\Sigma }.$

\subsection{Hamiltonian, inner systems, and normalised SIR and SKR}

Recall that for LTI systems, the normalised SIR and SKR are defined in terms
of the transfer functions of the RCF and LCF as well as their conjugates.
For nonlinear systems, the well-established Hamiltonian system technique can
be applied for this purpose. In a broad sense, given a nonlinear system as
an operator, the associated \ Hamiltonian system can be interpreted as a
certain configuration of the operator and its adjoint. In this regard, we
first introduce some essential concepts and definitions known in the
literature.

\subsubsection{Hamiltonian systems}

Consider the affine system (\ref{eq2-1}) and assume that it is stable. The
Hamiltonian extension of $\Sigma ,$ introduced \ in \cite{Schaft-book1987},
is a dynamic system described by 
\begin{equation}
\begin{cases}
\dot{x}=a(x)+B(x)u \\ 
\dot{\lambda}=-\left( \frac{\partial a(x)}{\partial x}+\frac{\partial B(x)}{%
\partial x}u\right) ^{T}\lambda -\left( \frac{\partial c\left( x\right) }{%
\partial x}+\frac{\partial D(x)}{\partial x}u\right) ^{T}u_{a} \\ 
y=c(x)+D(x)u \\ 
y_{a}=B^{T}(x)\lambda +D^{T}(x)u_{a},y_{a}\in \mathbb{R}^{p},u_{a}\in 
\mathbb{\ R}^{m},%
\end{cases}
\label{eq2-5}
\end{equation}%
with state variables $\left( x,\lambda \right) ,$ input variables $\left(
u,u_{a}\right) $ and output variables $\left( y,y_{a}\right) $. $\lambda \in 
\mathbb{\ R}^{n}$ is also called co-state variable. Connecting $y$ and $%
u_{a},u_{a}=y=c(x)+D(x)u,$ and defining a Hamiltonian function 
\begin{equation*}
H\left( x,\lambda ,u\right) =\frac{1}{2}\left( c(x)+D(x)u\right) ^{T}\left(
c(x)+D(x)u\right) +\lambda ^{T}\left( a(x)+B(x)u\right)
\end{equation*}%
lead to the Hamiltonian system $y_{a}=\left( D\Sigma \right) ^{T}\circ
\Sigma \left( u\right) ,$ whose state space representation can be written in
a compact form%
\begin{equation}
\left( D\Sigma \right) ^{T}\circ \Sigma :%
\begin{cases}
\dot{x}=\frac{\partial H}{\partial \lambda }\left( x,\lambda ,u\right) \\ 
\dot{\lambda}=-\frac{\partial H}{\partial x}\left( x,\lambda ,u\right) \\ 
y_{a}=\frac{\partial H}{\partial u}\left( x,\lambda ,u\right)%
\end{cases}
\label{eq2-6}
\end{equation}%
with input vector $u$ and output vector $y_{a}.$ Here, $D\Sigma $ denotes
the Frechet derivative of $\Sigma $ \cite{Schaft-book1987}. Differently,
connecting $u$ and $y_{a}$ of the Hamiltonian extension (\ref{eq2-5}),
namely $u=y_{a}=B^{T}(x)\lambda +D^{T}(x)u_{a},$ results in%
\begin{gather}
\Sigma \circ \left( D\Sigma \right) ^{T}:%
\begin{cases}
\dot{x}=a(x)+B(x)\left( B^{T}(x)\lambda +D^{T}(x)u_{a}\right) \\ 
\dot{\lambda}=-\left( \frac{\partial a(x)}{\partial x}+\frac{\partial \bar{B}%
(x,\lambda ,u_{a})}{\partial x}\right) ^{T}\lambda -\left( \frac{\partial
c\left( x\right) }{\partial x}+\frac{\partial \bar{D}(x,\lambda ,u_{a})}{%
\partial x}\right) ^{T}u_{a} \\ 
y=c(x)+D(x)B^{T}(x)\lambda +D(x)D^{T}(x)u_{a},%
\end{cases}
\label{eq2-7} \\
\bar{B}(x,\lambda ,u_{a})=B(x)\left( B^{T}(x)\lambda +D^{T}(x)u_{a}\right) ,
\notag \\
\bar{D}(x,\lambda ,u_{a})=D(x)B^{T}(x)\lambda +D(x)D^{T}(x)u_{a}.  \notag
\end{gather}%
Let 
\begin{equation}
H\left( x,\lambda ,u_{a}\right) =\frac{1}{2}%
u_{a}^{T}D(x)D^{T}(x)u_{a}+c^{T}(x)u_{a}+\lambda ^{T}\left( a(x)+\frac{1}{2}%
B(x)B^{T}(x)\lambda +B(x)D^{T}(x)u_{a}\right)  \label{eq2-8}
\end{equation}%
be a Hamiltonian function. Accordingly, $\Sigma \circ \left( D\Sigma \right)
^{T}$ can be compactly written as%
\begin{equation}
\Sigma \circ \left( D\Sigma \right) ^{T}:%
\begin{cases}
\dot{x}=\frac{\partial H}{\partial \lambda }\left( x,\lambda ,u_{a}\right)
\\ 
\dot{\lambda}=-\frac{\partial H}{\partial x}\left( x,\lambda ,u_{a}\right)
\\ 
y=\frac{\partial H}{\partial u_{a}}\left( x,\lambda ,u_{a}\right) .%
\end{cases}
\label{eq2-9}
\end{equation}

\subsubsection{Inner and co-inner, normalised SIR and SKR}

As mentioned in Subsection 2.1, for LTI systems, the normalised SIR and SKR
are inner and co-inner, respectively. This motivates us to review the
concepts and existence conditions of nonlinear inner and co-inner systems,
before addressing the normalised SIR and SKR issues. The presented results
can be found e.g. in \cite{Scherpen1994,Ball1996,vanderSchaftbook}.

\begin{Def}
\label{Def2-2}Consider the stable nonlinear affine system (\ref{eq2-1}). $%
\Sigma $ is inner if the Hamiltonian \ system (\ref{eq2-6}) satisfies%
\begin{equation}
y_{a}=u,  \label{eq2-10}
\end{equation}%
and lossless, and it is co-inner, when it holds, for the Hamiltonian system (%
\ref{eq2-9}), 
\begin{equation}
y=u_{a}  \label{eq2-11}
\end{equation}%
and lossless. Here, $\Sigma $ is called lossless, if there exists a storage
function $V(x)\geq 0,V(0)=0$ so that 
\begin{equation}
V\left( x\left( t_{2}\right) \right) -V\left( x\left( t_{1}\right) \right) =%
\frac{1}{2}\int_{t_{1}}^{t_{2}}\left( u^{T}u-y^{T}y\right) d\tau .
\label{eq2-11a}
\end{equation}
\end{Def}

\begin{Le}
\label{Lemma2-1}The nonlinear affine system (\ref{eq2-1}) is inner, if there
exists $V(x)\geq 0$ such that the following equations are feasible 
\begin{align}
& V_{x}^{T}\left( x\right) a(x)+\frac{1}{2}c^{T}(x)c(x)=0, \\
& V_{x}^{T}\left( x\right) B(x)+c^{T}(x)D(x)=0, \\
& D^{T}(x)D(x)=I,
\end{align}%
where $V_{x}(x)=\frac{\partial V(x)}{\partial x}$. And, it is co-inner, if
there exists $V(x)\geq 0$ such that the following equations are feasible 
\begin{align}
& V_{x}^{T}\left( x\right) a(x)+\frac{1}{2}V_{x}^{T}\left( x\right)
B(x)B^{T}(x)V_{x}\left( x\right) =0, \\
& c^{T}(x)+V_{x}^{T}\left( x\right) B(x)D^{T}(x)=0, \\
& D(x)D^{T}(x)=I.
\end{align}
\end{Le}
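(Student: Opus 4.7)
The strategy for both halves of the lemma is identical in spirit: use $V(x)$ as a storage function that witnesses losslessness, and show that the graph $\lambda = V_x(x)$ is an invariant manifold for the relevant Hamiltonian system on which the output equation collapses to $y_a = u$ (inner case) or $y = u_a$ (co-inner case). I would first treat the inner case in detail and then observe that the co-inner case is a dual replica.

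For the inner part, the first step is to differentiate $V$ along a trajectory of (\ref{eq2-1}):
\begin{equation*}
\dot V = V_x^T(x) a(x) + V_x^T(x) B(x) u.
\end{equation*}
Substituting the first two hypotheses converts the right-hand side into $-\tfrac12 c^T c - c^T D u$, and the third hypothesis $D^T D = I$ yields $y^T y = c^T c + 2 c^T D u + u^T u$, so $\dot V = \tfrac12 (u^T u - y^T y)$. Integrating and using $V \ge 0$, $V(0) = 0$ gives exactly (\ref{eq2-11a}), establishing losslessness. The second step is to insert $\lambda = V_x(x)$ into the output equation of the Hamiltonian system (\ref{eq2-6}), which after connecting $u_a = y$ reads
\begin{equation*}
y_a = D^T(x)\bigl(c(x)+D(x)u\bigr) + B^T(x)\lambda .
\end{equation*}
Transposing hypothesis~2 yields $B^T V_x = -D^T c$, and hypothesis~3 yields $D^T D = I$; the right-hand side therefore collapses to $u$, i.e.\ $y_a = u$.

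It remains to certify that $\lambda = V_x(x)$ is consistent with the co-state equation $\dot\lambda = -\partial H/\partial x$. Since $\tfrac{d}{dt} V_x(x) = V_{xx}(x)(a(x)+B(x)u)$, one differentiates the three algebraic hypotheses in $x$: hypothesis~1 gives $V_{xx} a + (\partial a/\partial x)^T V_x + (\partial c/\partial x)^T c = 0$; weighting the $x$-derivative of hypothesis~2 by $u$ gives the mixed $Bu$--$Du$ companion; and from $u^T D^T D u = u^T u$ one obtains $(\partial (Du)/\partial x)^T (Du) = 0$. Summing these identities reproduces precisely $V_{xx}(a+Bu) = -\partial H/\partial x\big|_{\lambda = V_x}$, so the Hamiltonian flow preserves the graph $\{(x, V_x(x))\}$. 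This closes the inner case.

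The co-inner case is handled by the same two-step template applied to the Hamiltonian system (\ref{eq2-9}) with Hamiltonian (\ref{eq2-8}). Inserting $\lambda = V_x(x)$ into $y = D(x) D^T(x) u_a + c(x) + D(x) B^T(x)\lambda$ and invoking hypotheses~2 and~3 reduces it to $u_a$; an analogous computation along the dynamics of (\ref{eq2-9}) produces the dissipation identity from hypotheses~1 and~2, and the consistency of $\lambda = V_x(x)$ with the co-state equation follows once more by differentiating the three co-inner hypotheses and using $D D^T = I$. I expect the main obstacle to lie in this last consistency check: the hypotheses are zeroth-order algebraic identities, while the co-state equation is a first-order PDE-style relation in $x$. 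Making the bookkeeping transparent — especially disentangling which $x$-derivative of which hypothesis supplies which term — is where the proof is most delicate, but once the three differentiated identities are in hand, the remaining manipulations are routine.
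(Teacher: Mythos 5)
Your proof is correct, but note that the paper itself offers no proof of this lemma: it is quoted from the literature (Scherpen and van der Schaft, Ball and van der Schaft, and van der Schaft's monograph), so there is nothing in the text to compare against. Your argument is precisely the standard one from those references: use $V$ as a storage function to get $\dot V=\tfrac12\bigl(u^{T}u-y^{T}y\bigr)$, substitute $\lambda=V_{x}(x)$ into the output equation of the Hamiltonian system to collapse it to $y_{a}=u$ (resp.\ $y=u_{a}$), and verify that the graph $\lambda=V_{x}(x)$ is invariant under the co-state flow by differentiating the three algebraic hypotheses in $x$. I checked the bookkeeping in your invariance step: summing the gradient of hypothesis~1, the gradient of hypothesis~2 contracted with $u$, and the identity $\bigl(\partial(Du)/\partial x\bigr)^{T}Du=0$ obtained from hypothesis~3 does reproduce $V_{xx}(a+Bu)=-\partial H/\partial x\big|_{\lambda=V_{x}}$ exactly, so the delicate step you flagged goes through. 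Two minor points: the lossless definition requires $V(0)=0$, which the lemma's hypothesis $V(x)\geq 0$ does not literally supply (it is the usual normalisation and worth a remark); and in the co-inner dissipation identity you also need hypothesis~3 ($DD^{T}=I$) to evaluate $u^{T}u$ with $u=B^{T}V_{x}+D^{T}u_{a}$, not only hypotheses~1 and~2 as you state.
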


\begin{Def}
The SIR $\Sigma _{\mathcal{I}}$ and SKR $\Sigma _{\mathcal{K}}$ given in (%
\ref{eq2-2}) and (\ref{eq2-3}), respectively, are normalised, when $\left(
g(x),V(x)\right) $ and $\left( L(x),W(x)\right) $ are set such that $\Sigma
_{\mathcal{I}}$ is inner and $\Sigma _{\mathcal{K}}$ is co-inner.
\end{Def}

\begin{Theo}
\label{Theo2-1}The SIR (\ref{eq2-2}) is normalised, if there exists $%
P(x)\geq 0$ such that the following equation is feasible%
\begin{equation}
P_{x}^{T}(x)a(x)+\frac{1}{2}c^{T}\left( x\right) c(x)-\frac{1}{2}%
g^{T}(x)\left( I+D^{T}(x)D(x)\right) g(x)=0,  \label{eq2-15a}
\end{equation}%
and $\left( g(x),V(x)\right) $ are set to be 
\begin{align}
g(x)& =-\left( I+D^{T}(x)D(x)\right) ^{-1}\left(
B^{T}(x)P_{x}(x)+D^{T}(x)c(x)\right) , \\
V(x)& =V^{T}(x)=\left( I+D^{T}(x)D(x)\right) ^{-1/2}.  \label{eq2-15b}
\end{align}%
And the SKR (\ref{eq2-3}) is normalised, if there exists $V(\hat{x})\geq 0$
solving%
\begin{gather}
V_{\hat{x}}^{T}\left( \hat{x}\right) a(\hat{x})+\frac{1}{2}V_{\hat{x}%
}^{T}\left( \hat{x}\right) B(\hat{x})B^{T}(\hat{x})V_{\hat{x}}\left( \hat{x}%
\right)  \notag \\
-\frac{1}{2}\left( c^{T}(\hat{x})+V_{\hat{x}}^{T}\left( \hat{x}\right) B(%
\hat{x})D^{T}(\hat{x})\right) \left( I+D(\hat{x})D^{T}(\hat{x})\right)
^{-1}\left( c^{T}(\hat{x})+V_{\hat{x}}^{T}\left( \hat{x}\right) B(\hat{x}%
)D^{T}(\hat{x})\right) ^{T}=0,  \label{eq2-15c}
\end{gather}%
and $\left( L(x),W(x)\right) $ are set to be%
\begin{align}
V_{\hat{x}}^{T}\left( \hat{x}\right) L(\hat{x})& =\left( c^{T}(\hat{x})+V_{%
\hat{x}}^{T}\left( \hat{x}\right) B(\hat{x})D^{T}(\hat{x})\right) \left( I+D(%
\hat{x})D^{T}(\hat{x})\right) ^{-1},  \label{eq2-15d} \\
W(\hat{x})& =\left( I+D(\hat{x})D^{T}(\hat{x})\right) ^{-1/2}.
\end{align}
\end{Theo}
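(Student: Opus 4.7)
The proof proceeds by direct verification: we apply the inner characterisation of Lemma \ref{Lemma2-1} to the SIR system $\Sigma_{\mathcal{I}}$ and the co-inner characterisation to the SKR system $\Sigma_{\mathcal{K}}$, using the candidate storage functions $P(x)$ and $V(\hat{x})$, respectively, and the proposed parameter choices.

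For the SIR part, the plan is to substitute the block matrices
\[
a_I=a+Bg,\quad B_I=BV,\quad c_I=\begin{bmatrix} g\\ c+Dg\end{bmatrix},\quad D_I=\begin{bmatrix} V\\ DV\end{bmatrix}
\]
into the three inner conditions of Lemma \ref{Lemma2-1}, now read with $P(x)$ in the role of the storage function. First I would check the algebraic condition $D_I^T D_I=V^T(I+D^T D)V=I$, which immediately forces the symmetric choice $V=(I+D^T D)^{-1/2}$ given in (\ref{eq2-15b}). Next, I would expand the cross-term condition, $P_x^T B_I+c_I^T D_I=\bigl(P_x^T B+g^T(I+D^T D)+c^T D\bigr)V=0$; since $V$ is invertible, this collapses to $g^T(I+D^T D)=-P_x^T B-c^T D$, which, after transposing, is exactly the formula for $g(x)$ stated in the theorem. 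Finally, I would plug $g$ and the expanded expression $c_I^T c_I=c^T c+2c^T Dg+g^T(I+D^T D)g$ into the Hamilton–Jacobi-type condition $P_x^T a_I+\tfrac12 c_I^T c_I=0$; after using the formula for $g$ to replace $(P_x^T B+c^T D)g$ by $-g^T(I+D^T D)g$, the two quadratic terms in $g$ combine to $-\tfrac12 g^T(I+D^T D)g$, reproducing precisely (\ref{eq2-15a}).

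The SKR part is the formal dual and I would treat it by the same pattern, using the co-inner conditions. The first algebraic condition $D_K D_K^T=W(I+DD^T)W^T=I$ forces $W(\hat x)=(I+DD^T)^{-1/2}$. The cross condition $c_K^T+V_{\hat x}^T B_K D_K^T=0$ expands to $-c^T W^T+V_{\hat x}^T L(I+DD^T)W^T-V_{\hat x}^T BD^T W^T=0$; post-multiplying by $W^{-T}$ and solving for $V_{\hat x}^T L$ yields the formula (\ref{eq2-15d}). For the quadratic condition, I would substitute $a_K=a-Lc$ and $B_K B_K^T=BB^T-BD^T L^T-LDB^T+L(I+DD^T)L^T$; using the expression for $V_{\hat x}^T L$ to rewrite every term containing $L$ as a function of $\alpha:=c^T+V_{\hat x}^T BD^T$ and of $(I+DD^T)^{-1}$, the mixed terms combine into $\tfrac12 \alpha(I+DD^T)^{-1}\alpha^T$, leaving precisely the algebraic Riccati-type equation (\ref{eq2-15c}).

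The only genuinely non-mechanical step is this last simplification in the SKR case. The main obstacle will be organising the algebra so that the many cross-terms $V_{\hat x}^T BD^T(I+DD^T)^{-1}DB^T V_{\hat x}$, $c^T(I+DD^T)^{-1}c$ and the mixed terms $c^T(I+DD^T)^{-1}DB^T V_{\hat x}$ cancel transparently; the cancellation uses the scalar identity $c^T(I+DD^T)^{-1}DB^T V_{\hat x}=V_{\hat x}^T BD^T(I+DD^T)^{-1}c$ (both sides are scalars equal by transposition). Once this bookkeeping is done, what remains reduces exactly to (\ref{eq2-15c}), completing the verification. The lossless property of both systems is inherited automatically, since Lemma \ref{Lemma2-1} states inner (co-inner) under the presence of a nonnegative $P$ (resp. $V$), and these serve as the required storage functions in (\ref{eq2-11a}).
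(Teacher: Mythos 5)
Your proposal is correct and takes exactly the route the paper intends: the paper itself omits the proof, stating only that it "follows immediately from Lemma \ref{Lemma2-1}", and your verification supplies precisely those details (substituting the SIR/SKR realisation matrices into the inner and co-inner conditions, solving the algebraic and cross-term conditions for $V,g,W,L$, and reducing the remaining Hamilton--Jacobi condition to (\ref{eq2-15a}) and (\ref{eq2-15c})). The only nit is a sign in your narration of the SKR bookkeeping: the aggregate of all $L$-dependent terms is $-\tfrac12\bigl(c^{T}+V_{\hat x}^{T}BD^{T}\bigr)\left( I+DD^{T}\right)^{-1}\bigl(c^{T}+V_{\hat x}^{T}BD^{T}\bigr)^{T}$, which is what (\ref{eq2-15c}) requires, so your conclusion stands.
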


The proof of this theorem follows immediately from Lemma \ref{Lemma2-1} and
thus is omitted.

\begin{Rem}
The normalised SIR can be interpreted as an optimal controller under a
quadric cost function associated to the Hamiltonian functions \cite%
{vanderSchaftbook}. A normalised SKR based estimator results in a LS
estimation of uncertainties corrupted in data $\left( u,y\right) ,$ as will
be demonstrated in Subsection 4.3.
\end{Rem}

\subsection{Bregman divergence, Legendre transform, and geodesic projection 
\label{subsec2-4}}

For our purpose, we restrict our attention to the following definition of a
Bregman divergence. Let $\varphi :\mathbb{R}^{n}\rightarrow \mathbb{R}$ be\
a continuously-differentiable, strictly convex function. Bregman divergence
from $\alpha \in \mathbb{R}^{n}$ to $\beta \in \mathbb{R}^{n},$ denoted by $%
D_{\varphi }\left[ \alpha :\beta \right] ,$ is defined by 
\begin{equation*}
D_{\varphi }\left[ \alpha :\beta \right] =\varphi \left( \alpha \right)
-\varphi \left( \beta \right) -\left( \frac{\partial \varphi \left( \beta
\right) }{\partial \beta }\right) ^{T}\left( \alpha -\beta \right) .
\end{equation*}%
The function $\varphi $ is also called generating function. Bregman
divergences were introduced by Bregman as a measure of difference between
two points in a space \cite{BREGMAN1967}. A Bregman divergence is of all
properties of a divergence. In our subsequent work, the dual divergence to $%
D_{\varphi }\left[ \alpha :\beta \right] $ plays an important role. To
define it, we first consider a Legendre transform. Let 
\begin{equation*}
\alpha ^{\times }=\frac{\partial }{\partial \alpha }\varphi \left( \alpha
\right) \in \mathbb{R}^{n}.
\end{equation*}%
The function%
\begin{equation*}
\varphi ^{\times }\left( \alpha ^{\times }\right) :=\alpha ^{T}\alpha
^{\times }-\varphi \left( \alpha \right)
\end{equation*}%
is called the Legendre dual of $\varphi .$ It is known \cite{Amari2016} that%
\begin{equation*}
\alpha ^{\times }=\frac{\partial }{\partial \alpha }\varphi \left( \alpha
\right) ,\alpha =\frac{\partial }{\partial \alpha ^{\times }}\varphi
^{\times }\left( \alpha ^{\times }\right) ,
\end{equation*}%
form a dualistic structure, and $\varphi ^{\times }\left( \alpha ^{\times
}\right) $ is convex with respect to $\alpha ^{\times }.$

\bigskip

By means of the Legendre dual of $\varphi ,$ the dual divergence to $%
D_{\varphi }\left[ \alpha :\beta \right] $ is defined by 
\begin{equation*}
D_{\varphi ^{\times }}\left[ \alpha ^{\times }:\beta ^{\times }\right]
=\varphi \left( \alpha ^{\times }\right) -\varphi \left( \beta ^{\times
}\right) -\left( \frac{\partial \varphi ^{\times }\left( \beta ^{\times
}\right) }{\partial \beta ^{\times }}\right) ^{T}\left( \alpha ^{\times
}-\beta ^{\times }\right) .
\end{equation*}%
By some straightforward computations \cite{Amari2016}, we obtain 
\begin{align}
D_{\varphi }\left[ \alpha :\beta \right] & =\varphi \left( \alpha \right)
+\varphi ^{\times }\left( \beta ^{\times }\right) -\alpha ^{T}\beta ^{\times
},  \label{eq2-22} \\
D_{\varphi ^{\times }}\left[ \alpha ^{\times }:\beta ^{\times }\right] &
=D_{\varphi }\left[ \beta :\alpha \right] .  \label{eq2-22a}
\end{align}%
The introduction of the dual divergence is not only useful from the
computational point of view. It is the theoretical basis for the definition
and computation of the so-called geodesic projection onto a manifold.
Informally speaking, a geodesic projection can be understood as an extension
of the concept of an orthogonal projection onto a vector subspace. Below, we
shortly highlight the basic ideas and some results on the Bregman
divergence-based geodesic project described in \cite{Amari2016,Nielsen2020},
and restrict the introduction only to the so-called projection theorem that
is needed for our fault detection study. The reader is referred to \cite%
{Amari2016,Nielsen2020} and references therein for more details.

\bigskip

In his monograph on Information geometry \cite{Amari2016}, Amari begins the
introduction to a dually flat Riemannian manifold with the following
statement, ``\textit{we can establish a new edifice of differential
geometry, by treating a pair of affine connections which are dually coupled
with respect to the Riemannian metric. Such a structure ... is the heart of
information geometry. ... As an important special case, we study a dually
flat Riemannian manifold. It may be regarded as a dualistic extension of the
Euclidean space.}" Given two convex functions connected by the Legendre
transform $\alpha ^{\times }=$ $\nabla F\left( \alpha \right) ,\alpha
=\nabla F^{\times }\left( \alpha ^{\times }\right) $, Bregman and dual
Bregman divergences, $D_{\varphi }\left[ \alpha :\beta \right] =:$ $F\left(
\alpha \right) ,D_{\varphi ^{\times }}\left[ \alpha ^{\times }:\beta
^{\times }\right] =:F^{\times }\left( \alpha ^{\times }\right) $ on a
manifold $\mathcal{M}$ are defined. On this basis, metric tensors 
\begin{equation*}
G:=\nabla ^{2}F\left( \alpha \right) ,G^{\times }:=\nabla ^{2}F^{\times
}\left( \alpha ^{\times }\right)
\end{equation*}%
provide $\mathcal{M}$ with a dual Riemannian structure that admits a
conjugate pair of affine connections. In \cite{Amari2016}, it is proved that
these two affine connections are curvature- and torsion-free, and thus the
corresponding Riemannian structure is called dually flat. Consequently, any
pair of points in $\mathcal{M}$ can be linked by a straight line either in
the \ coordinate system $\alpha $ or in its dual coordinate system $\alpha
^{\times }$ \cite{Amari2016,Nielsen2020}. A main result related to a dually
flat structured manifold is the proof of the Generalised Pythagorean Theorem
by Nagaoka and Amari \cite{Amari1982}. Based on it, the projection theorem
given below was proved \cite{Amari2016}.

\bigskip

Let $\mathcal{M}\subset \mathbb{R}^{n}$ denote a dually flat manifold.

\begin{Def}
Given $\alpha \in \mathcal{M}$ and a submanifold $\mathcal{S}\subset 
\mathcal{M},$ the divergence from $\alpha $ to $\mathcal{S}$ is defined by%
\begin{equation*}
D_{\varphi }\left[ \alpha :\mathcal{S}\right] =\min_{\beta \in \mathcal{S}%
}D_{\varphi }\left[ \alpha :\beta \right] .
\end{equation*}
\end{Def}

A curve is said to be orthogonal to $\mathcal{S}$ when its tangent vector is
orthogonal to any tangent vectors of $\mathcal{S}$ at the intersection.

\begin{Def}
$\hat{\alpha}_{\mathcal{S}}\in \mathcal{S}$ is called geodesic projection of 
$\alpha $ to $\mathcal{S}$ when the geodesic connecting $\alpha $ and $\hat{%
\alpha}_{\mathcal{S}}$ is orthogonal to $\mathcal{S}$. Dually, $\hat{\alpha}%
_{\mathcal{S}}^{\times }\in \mathcal{S}$ is the dual geodesic projection of $%
\alpha $ to $\mathcal{S}$, when the dual geodesic connecting $\alpha $ and $%
\hat{\alpha}_{\mathcal{S}}^{\times }$ is orthogonal to $\mathcal{S}.$
\end{Def}

\begin{Theo}
\label{Theo2-2}(Projection theorem, \cite{Amari2016}, Theorem 1.4) Given $%
\alpha \in \mathcal{M}$ and a smooth submanifold $\mathcal{S}$ in a dually
flat manifold $\mathcal{M}$, the vector that minimises the divergence $%
D_{\varphi }\left[ \alpha :\beta \right] $, $\beta \in \mathcal{S}$, is the
dual geodesic projection of $\alpha $ to $\mathcal{S}$. The vector that
minimises the dual divergence $D_{\varphi ^{\times }}\left[ \alpha :\beta %
\right] $, $\beta \in \mathcal{S}$, is the geodesic projection of $\alpha $
to $\mathcal{S}.$
\end{Theo}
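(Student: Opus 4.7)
The plan is to establish the first assertion via a first-order variational characterization of the minimizer, and then to obtain the second by invoking the Legendre-dual symmetry $D_{\varphi^{\times}}[\alpha^{\times}:\beta^{\times}] = D_{\varphi}[\beta:\alpha]$ recorded in~(\ref{eq2-22a}). Concretely, I would first compute $\nabla_{\beta} D_{\varphi}[\alpha:\beta]$ by direct differentiation of the defining expression: the $-\varphi(\beta)$ term and the term produced by the product rule both generate $\nabla\varphi(\beta)$ contributions that cancel, leaving the compact form $\nabla_{\beta} D_{\varphi}[\alpha:\beta] = \nabla^{2}\varphi(\beta)(\beta - \alpha) = G(\beta)(\beta - \alpha)$, where $G = \nabla^{2}\varphi$ is the dually flat metric tensor.

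Next, the standard first-order optimality condition for a smooth interior minimizer $\hat{\beta}_{\mathcal{S}}$ of $D_{\varphi}[\alpha:\,\cdot\,]$ on $\mathcal{S}$ says that the gradient above, evaluated at $\hat{\beta}_{\mathcal{S}}$, must annihilate every tangent vector to $\mathcal{S}$. This reads $v^{T} G(\hat{\beta}_{\mathcal{S}})(\hat{\beta}_{\mathcal{S}} - \alpha) = 0$ for every $v \in T_{\hat{\beta}_{\mathcal{S}}}\mathcal{S}$, which is precisely the statement of $G$-orthogonality of the displacement $\hat{\beta}_{\mathcal{S}} - \alpha$ to $T_{\hat{\beta}_{\mathcal{S}}}\mathcal{S}$.

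It remains to recognize this $G$-orthogonality as the geodesic orthogonality in the sense of the Definition preceding the theorem. The segment $\{\alpha + t(\hat{\beta}_{\mathcal{S}} - \alpha): t \in [0,1]\}$ is a straight line in the primal coordinates $\alpha$; by the dually flat hypothesis it is autoparallel for one of the two flat connections, while its Legendre image in the dual coordinates $\alpha^{\times}$ is autoparallel for the conjugate connection. Matching these two flat-autoparallel families to the convention adopted in the statement (where the labels \emph{geodesic} and \emph{dual geodesic} distinguish the two connections), the orthogonality condition just obtained identifies $\hat{\beta}_{\mathcal{S}}$ as the dual geodesic projection of $\alpha$ onto $\mathcal{S}$. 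The second assertion then follows by rerunning the identical argument with $(\varphi,\alpha,\beta)$ replaced by $(\varphi^{\times},\alpha^{\times},\beta^{\times})$ and invoking~(\ref{eq2-22a}).

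The main obstacle I anticipate is not analytic but conventional: in the last step one must carefully match the Euclidean orthogonality produced by the gradient calculation to the correct (dual rather than primal) geodesic. The key fact here is the dually flat mixed-coordinate form of the metric, $\langle u,v\rangle_{G} = u^{T} G v = u^{T} v^{\times}$, which converts $G$-orthogonality of a primal-coordinate displacement into the orthogonality of the corresponding dual geodesic after passing through the Legendre transform. Beyond this bookkeeping, the argument is essentially a one-line differentiation plus the duality~(\ref{eq2-22a}); no genuinely global or regularity-type difficulty arises as long as $\mathcal{S}$ is smooth and the minimizer lies in its relative interior.
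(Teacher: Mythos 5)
The paper offers no proof of Theorem \ref{Theo2-2} to compare against: it is imported verbatim from Amari's monograph, with the remark that it follows from the Generalised Pythagorean Theorem. So your proposal must stand on its own. Your variational computation is correct and is the standard direct route: $\nabla _{\beta }D_{\varphi }\left[ \alpha :\beta \right] =\nabla ^{2}\varphi \left( \beta \right) \left( \beta -\alpha \right) =G\left( \beta \right) \left( \beta -\alpha \right) $, and first-order optimality of an interior minimiser $\hat{\beta}_{\mathcal{S}}$ gives $v^{T}G(\hat{\beta}_{\mathcal{S}})(\hat{\beta}_{\mathcal{S}}-\alpha )=0$ for all $v\in T_{\hat{\beta}_{\mathcal{S}}}\mathcal{S}$. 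The genuine gap is in the final identification, and it is not mere bookkeeping. What you have shown is that the \emph{straight segment in the primal coordinates} joining $\alpha $ and $\hat{\beta}_{\mathcal{S}}$, with tangent $\hat{\beta}_{\mathcal{S}}-\alpha $, is $G$-orthogonal to $\mathcal{S}$; that segment is the autoparallel curve of one specific connection. Your proposed conversion via $u^{T}Gv=u^{T}v^{\times }$ does not turn it into the other one: the identity $v^{\times }=Gv$ holds for tangent vectors, whereas the chord of the dual geodesic in dual coordinates is $\nabla \varphi (\hat{\beta}_{\mathcal{S}})-\nabla \varphi \left( \alpha \right) $, which differs from $G(\hat{\beta}_{\mathcal{S}})(\hat{\beta}_{\mathcal{S}}-\alpha )$ beyond first order. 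So the orthogonality you obtain belongs to the $\theta $-straight curve, full stop; deciding whether that curve is the ``geodesic'' or the ``dual geodesic'' in the sense of the preceding Definition is precisely the content of the theorem (it is what distinguishes its two halves), and your argument asserts the matching rather than establishing it.

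The clean way to close the gap --- and the route the cited source actually takes --- is the Generalised Pythagorean identity. Using the form (\ref{eq2-22}), a direct computation gives, for any $\beta \in \mathcal{S}$, $D_{\varphi }\left[ \alpha :\beta \right] -D_{\varphi }[\alpha :\hat{\beta}_{\mathcal{S}}]-D_{\varphi }[\hat{\beta}_{\mathcal{S}}:\beta ]=-(\hat{\beta}_{\mathcal{S}}-\alpha )^{T}(\hat{\beta}_{\mathcal{S}}^{\times }-\beta ^{\times })$, and the right-hand side is exactly the Riemannian pairing of a primal-coordinate chord with a dual-coordinate chord, i.e.\ the inner product of the tangents of the two connecting autoparallel curves at $\hat{\beta}_{\mathcal{S}}$. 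This identity resolves unambiguously which geodesic family carries the orthogonality for which divergence, upgrades your first-order condition to a genuine minimisation statement (with nonnegativity of $D_{\varphi }[\hat{\beta}_{\mathcal{S}}:\beta ]$ doing the work when $\mathcal{S}$ is autoparallel), and removes the interior-critical-point caveat. Your second half --- dualising via (\ref{eq2-22a}) --- is fine once the first half is settled this way.
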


\subsection{Problem formulation}

The main intention of introducing the system coprime factorisations and the
SIR/SKR concepts is to model the (nominal) system dynamic as a
subspace/sub-manifold in the data space of $\left( u,y\right) .$ Although
these model forms are equivalent to the input-output models widely applied
in control theoretic research, they provide us with the alternative
methodology to deal with fault diagnosis problems, namely directly in data
space and by means of an orthogonal projection. As delineated by the
projection-based fault detection scheme for LTI systems \cite{ding2022}, an
orthogonal projection of $\left( u,y\right) $ onto the system image (kernel)
subspace helps us to extract information from the data optimally using the
distance of the data vector to the system image subspace as the (nominal)
system dynamics. In this regard, the normalised SIR and SKR serve for two
purposes. Firstly, it is a mathematical tool for the computation of the
projection and the distance. Secondly, it connects the projection-based
fault detection with the observer-based one, as expressed by (\ref{eq2-20})
and, in particular, (\ref{eq2-21a}). The latter implies that the
projection-based fault detection can be realised using an observer-based
detection system and thus, on the other hand, enables an optimal design of
an observer-based detection system.

\bigskip

A direct extension of the projection-based fault detection to nonlinear
dynamic systems poses obviously a number of problems. While the SIR and SKR
can be, analogue to LTI systems, straightforward introduced, concepts like
projection and distance definitions have to be addressed with the aid of
different mathematical tools and knowledge. Concerning projection issues,
the well-established theoretic framework of Hamiltonian systems \cite%
{Scherpen1994,Ball1996,vanderSchaftbook} may serve as the system foundation,
and as an analysis and computational tool, where the Hamiltonian systems
corresponding to the normalised SIR and SKR and their solutions as inner and
co-inner systems play an important role in our subsequent work. The
centerpiece in our framework is the associated Hamiltonian functions, by
which the Bregman divergences are introduced for our fault detection study.

\bigskip

To be specific, for LTI systems, the system image (kernel) is a subspace in
Hilbert space, in which an inner product and the corresponding norm of a
vector are well defined. In such a Euclidean space, the shortest distance
between two points (vectors) is a straight line, and thus the norm of a
residual vector is efficient for measuring the distance from a data vector
to the system image subspace that represents the nominal dynamics. This is
the theoretic basis for an optimal fault detection. In a non-Euclidean
space, as the manifold formed by process data from a nonlinear system, the
metric tensor at each point of the manifold is different from the Euclidean
metric tensor like an inner product defined on Hilbert space. Therefore, the
norm-based distance defined in Hilbert space may not be compatible with the
metric tensor of the non-Euclidean space. In other words, the norm-based
metric is not suitable to measure the distance from a data vector to the
system image manifold, based on which a fault detection should be \ \ made.
To solve this problem, we propose to use Bregman divergences to measure
deviations in the system dynamics and data from the nominal ones. Initially,
this idea has been inspirited by wide applications of Bregman divergences in
machine learning \cite{Adamcik2014,FSG2008} and, in particular, the
Kullback-Leibler (KL) divergence as a difference measure of two probability
distributions. In course of our extensive investigation, three
distinguishing aspects of applying Bregman divergences to fault detection
draw our attention : (i) realisation of performance-oriented fault
detection, (ii) simplified computation and realisation of Bregman
divergences making use of the dual Hamiltonian system pair as a Legendre
transform,\ and (iii) interpretation of the SIR-based projection as a
geodesic projection in terms of Bregman divergence from a data vector to the
image manifold. To be specific,

\begin{itemize}
\item performance-oriented fault detection has been proposed in our earlier
works \cite{LLDYP-2019,LDLPY-TII-2020,Ding2020}, dedicated to reliably
detecting performance degradation in automatic control systems. It has been
observed that considerable engineering efforts are often needed to set the
parameters of the performance evaluation function. Bregman divergences are
induced by convex functions and measure the difference of two points
(vectors) defined in terms of this convex function. In optimal control,
Hamiltonian functions act as (control) performance cost functions. In a
broad sense, Hamiltonian functions can be interpreted as the energy balance
(parity relation) in dynamic systems. This fact motivates us to define the
Hamiltonian function of the system under consideration as a system
performance and, based on it, to induce Bregman divergences towards a
performance-oriented fault detection.

\item It is known that for a given system, there exists a dual Hamiltonian
system pair that forms a Legendre \ transform. As delineated in Subsection %
\ref{subsec2-4}, the Legendre dual of the Hamiltonian functions is useful
for the Bregman divergence computations. More importantly, the Bregman
divergence endows the data manifold with a dually flat structured Riemannian
manifold, which can be regarded as a dualistic extension of the Euclidean
space. And thus,

\item the projection theorem \cite{Amari2016}, Theorem \ref{Theo2-2}, can be
used to characterise the Hamiltonian function-based projection and to
highlight its insight from the viewpoint of information geometry.
\end{itemize}

As described in Subsection 2.1, a distinguishing feature for projections in
Hilbert space is that the vector can be decomposed as the orthogonal
projection of the data vector onto the image subspace and its complementary.
It leads to the useful relations (\ref{eq2-20}) and (\ref{eq2-21a}) that
enable to realise a projection-based fault detection system in the
well-established observer-based detection framework (refer to (\ref{eq2-21a}%
), the Pythagorean equation). Moreover, they pose a novel scheme to design
observer-based fault detection systems that are capable to deal with systems
with model uncertainties and multiplicative faults \cite{ding2022}.
Unfortunately, for nonlinear systems such relations are generally not valid.
Bearing this fact in mind, we will study the (nonlinear) SIR- and SKR-based
projections separately. The SIR-based projection, similar to the application
to LTI systems, concerns with the projection of process data $\left(
u,y\right) $ to the system image manifold representing the nominal dynamics.
By means of a Hamiltonian function induced Bregman divergence, the
projection is evaluated, which delivers a distance measure between the data
and the image manifold and is used for fault detection. As described in
Subsections \ref{sec2-1} and 2.2, an SKR is an observer-based residual
generator that reflects influences of uncertainties in the system, including
faults, on \ the system dynamics. Correspondingly, the SKR-based projection
is an estimate for uncertainties. For the fault detection purpose, the
evaluation of the SKR-based projection in terms of a Hamiltonian function
induced Bregman divergence will be dedicated to a measure of the distance
between the data and the uncertainty-free operation.

\bigskip

In course of our study on the SKR-based projection, the imperative of
insightfully understanding the SKR-based projection as an estimate for
uncertainties has been recognised. Remember that the essential reason why
the relation (\ref{eq2-21a}) doesn't hold for nonlinear systems is that the
data $\left( u,y\right) $ are inextricably corrupted with uncertainties.
Consequently, the influences of the system input (representing the nominal
dynamic) and uncertainties in the system are not separable in the data. As
will be shown in Section 4, an SKR-based projection is a Hamiltonian system
with the system data $\left( u,y\right) $ (equivalent with an observer-based
residual generator) as its input and the projection as its output. Our
endeavour, to be specific, is to figure out in which context the projection
delivers an estimate of the uncertainty corrupted in the data. To this end,
an uncertainty model is introduced, which uniformly models deviations in the
data from their nominal values and caused by all types of possible
uncertainties, including faults. In this regard, it will be delineated that
the SKR-based projection is an LS estimate of the uncertainty corrupted in
the data. This result not only highlights the insightful interpretation of
the SKR-based projection, but also underlines its potential use, for
instance, for fault estimation.

\bigskip

To put it in a nutshell, the main problems to be addressed in this work are
formulated as follows:

\begin{itemize}
\item configuration of a normalised SIR-based projection system that
projects $\left( u,y\right) $ onto $\mathcal{I}_{\Sigma }.$ The work will be
performed in the framework of Hamiltonian systems and their dual form as a
Legendre transform.

\item introduction of a Bregman divergence induced by a Hamiltonian function
and computation solution of the Bregman divergence from $\left( u,y\right) $
to its projection. This solution will serve as an evaluation function for
the detection (decision) logic.

\item study on interpretations of the SIR-based projection and Bregman
divergence as a fault detection system. Hereby, two aspects are of special
interests: (i) the control theoretic insight, i.e. understanding of the
Hamiltonian function induced Bregman divergence in the context of system
performance-oriented fault detection, (ii) the interpretation from the
viewpoint of information geometry and in the context of the projection
theorem.

\item discussion on the realisation and implementation issues. In order to
achieve a reliable fault detection, it is of considerable practical
importance to reduce false alarms. To this end, the defined Bregman
divergence will be evaluated over an evaluation time interval. Moreover, a
threshold should be set to ensure an optimal fault detection decision. At
the end of this work, a one-class fault detection scheme for nonlinear
dynamic systems is completed.

\item investigation on a scheme of a normalised SKR-based projection system.
Although this work is analogue to the one of the SIR-based projection and
fault detection system, it is significantly different in (i) its
construction as an observer-based residual generator, (ii) the definition of
the Bregman divergence, and in particular (iii) the \ interpretation of the
Bregman divergence as a distance measure of $\left( u,y\right) $ and the
nominal system dynamics.

\item study on the SKR-based projection system as an estimator for the
uncertainty corrupted in $\left( u,y\right) $ with a focus on (i) definition
of an uncertainty model that should be independent of the possible types of
uncertainties in the system, including faults, and thus lays the basis for
interpreting and evaluating the estimation performance, (ii) evaluation of
the projection-based estimation.
\end{itemize}

\section{A projection and Bregman divergence based fault detection scheme}

In this section, the first part of our work is presented. It is a
projection-based scheme for detecting faults in the affine system (\ref%
{eq2-1}). To be specific, a projection of process data $\left( u,y\right) $
onto the image subspace $\mathcal{I}_{\Sigma }$ is realised using the
normalised SIR $\Sigma _{\mathcal{I}}.$ Then, the introduction of a
Hamiltonian function induced Bregman divergence and the associated
Riemannian structure of the data manifold enables the definition and
computation of a geodesic projection onto $\mathcal{I}_{\Sigma }$. Based on
these theoretical results, fault detection algorithms are presented. To this
end, we will first study the normalised SIR and its properties in the
framework of Hamiltonian systems. It is followed by defining and analysing a
Bregman divergence with the Hamiltonian function as a generating function.
The proof that the normalised SIR-based projection is the geodesic
projection gives an interesting interpretation of the projection as an
important result of this study. At the end of this section, implementation
issues towards a practical fault detection and the associated algorithm are
discussed.

\subsection{Normalised SIR and the associated Hamiltonian projection \
systems}

Consider the affine system (\ref{eq2-1}) and its SIR $\Sigma _{\mathcal{I}}$
given by (\ref{eq2-2}). The corresponding Hamiltonian extension of $\Sigma _{%
\mathcal{I}}$ is the dynamic system described by%
\begin{equation*}
\begin{cases}
\dot{x}=a_{I}(x)+B_{I}(x)v \\ 
\dot{\lambda}=-\left( \frac{\partial a_{I}(x)}{\partial x}+\frac{\partial
B_{I}(x)}{\partial x}v\right) ^{T}\lambda -\left( \frac{\partial c_{I}\left(
x\right) }{\partial x}+\frac{\partial D_{I}(x)}{\partial x}\right) ^{T}v_{a}
\\ 
\hat{z}=c_{I}(x)+D_{I}(x)v\in \mathbb{\ R}^{m+p} \\ 
z_{a}:=B_{I}^{T}(x)\lambda +D_{I}^{T}(x)v_{a}\in \mathbb{R}^{p}.%
\end{cases}%
\end{equation*}%
For our purpose of projecting $\left( u,y\right) $ onto the image subspace $%
\mathcal{I}_{\Sigma },$ connect $v$ and $z_{a},$ i.e. 
\begin{equation}
v=z_{a}=B_{I}^{T}(x)\lambda +D_{I}^{T}(x)v_{a},  \label{eq3-3}
\end{equation}%
and let the data $\left( u,y\right) $ be the input and $\hat{z}$ the output
of $\Sigma _{\mathcal{I}}\circ \left( D\Sigma _{\mathcal{I}}\right) ^{T},$ 
\begin{equation*}
v_{a}=\left[ 
\begin{array}{c}
u \\ 
y%
\end{array}%
\right] =:z,\hat{z}=\Sigma _{\mathcal{I}}\circ \left( D\Sigma _{\mathcal{I}%
}\right) ^{T}\left( z\right) .
\end{equation*}%
Then, we have 
\begin{gather}
\Sigma _{\mathcal{I}}\circ \left( D\Sigma _{\mathcal{I}}\right) ^{T}:%
\begin{cases}
\dot{x}=a_{I}(x)+B_{I}(x)B_{I}^{T}(x)\lambda +B_{I}(x)D_{I}^{T}(x)z \\ 
\dot{\lambda}=-\left( \frac{\partial a_{I}(x)}{\partial x}+\frac{\partial 
\bar{B}_{I}\left( x,\lambda ,z\right) }{\partial x}\right) ^{T}\lambda
-\left( \frac{\partial \left( c_{I}(x)+\bar{D}_{I}\left( x,\lambda ,z\right)
\right) }{\partial x}\right) ^{T}z \\ 
\hat{z}=c_{I}(x)+D_{I}(x)B_{I}^{T}(x)\lambda +D_{I}(x)D_{I}^{T}(x)z,%
\end{cases}
\label{eq3-1} \\
\bar{B}_{I}\left( x,\lambda ,z\right) =B_{I}(x)B_{I}^{T}(x)\lambda
+B_{I}(x)D_{I}^{T}(x)z,  \notag \\
\bar{D}_{I}\left( x,\lambda ,z\right) =D_{I}(x)B_{I}^{T}(x)\lambda
+D_{I}(x)D_{I}^{T}(x)z.  \notag
\end{gather}%
The configuration of $\Sigma _{\mathcal{I}}\circ \left( D\Sigma _{\mathcal{I}%
}\right) ^{T}$ is schematically depicted in Figure \ref{Fig3-1}. 
\begin{figure}[h]
\centering\includegraphics[width=9cm,height=2cm]{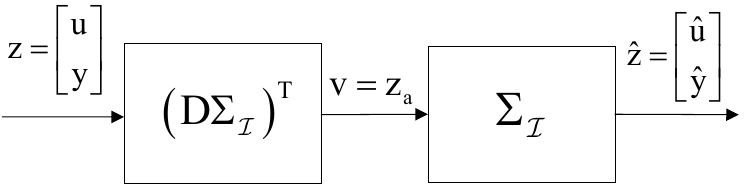}
\caption{SIR-based projection system $\Sigma _{\mathcal{I}}\circ \left(
D\Sigma _{\mathcal{I}}\right) ^{T}$}
\label{Fig3-1}
\end{figure}

Define the Hamiltonian function%
\begin{equation}
H\left( x,\lambda ,z\right) =\frac{1}{2}%
z^{T}D_{I}(x)D_{I}^{T}(x)z+c_{I}^{T}(x)z+\lambda ^{T}\left( a_{I}(x)+\frac{1%
}{2}B_{I}(x)B_{I}^{T}(x)\lambda +B_{I}(x)D_{I}^{T}(x)z\right) .
\label{eq3-2}
\end{equation}%
It is straightforward that the Hamiltonian system (\ref{eq3-1}) can then be
written in the compact form%
\begin{equation}
\Sigma _{\mathcal{I}}\circ \left( D\Sigma _{\mathcal{I}}\right) ^{T}:%
\begin{cases}
\dot{x}=\frac{\partial H}{\partial \lambda }\left( x,\lambda ,z\right) \\ 
\dot{\lambda}=-\frac{\partial H}{\partial x}\left( x,\lambda ,z\right) \\ 
\hat{z}=\frac{\partial H}{\partial z}\left( x,\lambda ,z\right) .%
\end{cases}
\label{eq3-1a}
\end{equation}%
Notice that the Hamiltonian function (\ref{eq3-2}) can be, after some
calculations, written as%
\begin{align}
H\left( x,\lambda ,z\right) & =-\frac{1}{2}z^{T}D_{I}(x)D_{I}^{T}(x)z+\hat{z}%
^{T}z+\lambda ^{T}\left( a_{I}(x)+\frac{1}{2}B_{I}(x)B_{I}^{T}(x)\lambda
\right)  \notag \\
& =-\frac{1}{2}v^{T}v+\hat{z}^{T}z+\lambda ^{T}\dot{x},  \label{eq3-2a} \\
\dot{x}& =a_{I}(x)+B_{I}(x)B_{I}^{T}(x)\lambda +B_{I}(x)D_{I}^{T}(x)z. 
\notag
\end{align}%
Let 
\begin{equation}
H^{\times }\left( x,\lambda ,\hat{z}\right) =\hat{z}^{T}z-H\left( x,\lambda
,z\right) =\frac{1}{2}v^{T}v-\lambda ^{T}\dot{x}.  \label{eq3-2b}
\end{equation}%
Note that 
\begin{equation*}
z=\frac{\partial H^{\times }\left( x,\lambda ,\hat{z}\right) }{\partial \hat{%
z}},
\end{equation*}%
and the Hamiltonian system (\ref{eq3-1a}) can be expressed in terms of $%
H^{\times }\left( x,\lambda ,\hat{z}\right) $ as%
\begin{equation}
\begin{cases}
\dot{x}=-\frac{\partial H^{\times }}{\partial \lambda }\left( x,\lambda ,%
\hat{z}\right) \\ 
\dot{\lambda}=\frac{\partial H^{\times }}{\partial x}\left( x,\lambda ,\hat{z%
}\right) \\ 
z=\frac{\partial H^{\times }}{\partial \hat{z}}\left( x,\lambda ,\hat{z}%
\right)%
\end{cases}
\label{eq3-1b}
\end{equation}%
with $\hat{z}$ as the input and $z$ as the output. The above Hamiltonian
system (\ref{eq3-1b}) has been investigated in the context of inner-outer
factorisation and can be interpreted as the inverse of $-\Sigma _{\mathcal{I}%
}\circ \left( D\Sigma _{\mathcal{I}}\right) ^{T}$ and $H^{\times }\left(
x,\lambda ,\hat{z}\right) $ as the Legendre dual of $H\left( x,\lambda
,z\right) $ \cite{Ball1996}. In other words, the Hamiltonian functions $%
H\left( x,\lambda ,z\right) $ and $H^{\times }\left( x,\lambda ,\hat{z}%
\right) $ with%
\begin{equation*}
\hat{z}=\frac{\partial H}{\partial z}\left( x,\lambda ,z\right) =:z^{\times
},z=\frac{\partial H^{\times }}{\partial \hat{z}}\left( x,\lambda ,\hat{z}%
\right) ,
\end{equation*}%
form a Legendre transform and give the system data a dualistic structure.
Here, $\hat{z}$ is dual to $z$ and thus is also denoted by $z^{\times }.$ It
is obvious from (\ref{eq3-2a}) that 
\begin{equation}
H^{\times }\left( x,\lambda ,z^{\times }\right) =\frac{1}{2}v^{T}v-\lambda
^{T}\dot{x}=\frac{1}{2}z^{T}D_{I}(x)D_{I}^{T}(x)z-\lambda ^{T}\left(
a_{I}(x)+\frac{1}{2}B_{I}(x)B_{I}^{T}(x)\lambda \right) .  \label{eq3-4}
\end{equation}%
Now, consider the normalised SIR and the corresponding Hamiltonian systems (%
\ref{eq3-1a}) and (\ref{eq3-1b}). The \ following theorem showcases that $%
\Sigma _{\mathcal{I}}\circ \left( D\Sigma _{\mathcal{I}}\right) ^{T}$ is an
idempotent operator and the normalised SIR projects the process data $\left(
u,y\right) $ to $\mathcal{I}_{\Sigma }.$

\begin{Theo}
\label{Theo3-1}Given the normalised SIR and the corresponding Hamiltonian
system (\ref{eq3-1}), then $\Sigma _{\mathcal{I}}\circ \left( D\Sigma _{%
\mathcal{I}}\right) ^{T}$ is idempotent, and%
\begin{align}
\forall z& =\left[ 
\begin{array}{c}
u \\ 
y%
\end{array}%
\right] \in \mathcal{L}_{2},\hat{z}\in \mathcal{I}_{\Sigma },  \label{eq3-3a}
\\
\forall z& \neq 0,\hat{z}=\left[ 
\begin{array}{c}
u \\ 
y%
\end{array}%
\right] \text{ iff }\left[ 
\begin{array}{c}
u \\ 
y%
\end{array}%
\right] \in \mathcal{I}_{\Sigma }.  \label{eq3-3b}
\end{align}
\end{Theo}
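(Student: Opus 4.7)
The plan is to reduce all three claims to a single structural fact together with the inner property of the normalised SIR. First I would observe that, by construction of (3.1), the $x$-dynamics of $\Sigma_{\mathcal{I}}\circ(D\Sigma_{\mathcal{I}})^{T}$ can be rewritten as
\[
\dot{x}=a_{I}(x)+B_{I}(x)\bigl[B_{I}^{T}(x)\lambda+D_{I}^{T}(x)z\bigr]=a_{I}(x)+B_{I}(x)v,
\]
with the internally generated signal $v:=B_{I}^{T}(x)\lambda+D_{I}^{T}(x)z$, and that the output is $\hat{z}=c_{I}(x)+D_{I}(x)v$. In other words, the operator is literally $\hat{z}=\Sigma_{\mathcal{I}}(v)$ with $v$ obtained from the adjoint stage. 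Provided $v\in\mathcal{L}_{2}$ (which follows from the $\mathcal{L}_{2}$-stability assumed for $\Sigma_{\mathcal{I}}$ and its Hamiltonian extension together with $z\in\mathcal{L}_{2}$), this immediately yields $\hat{z}\in\mathcal{I}_{\Sigma}$, i.e.\ claim (3.3a).

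Next, I would invoke the normalisation assumption. Since $\Sigma_{\mathcal{I}}$ is inner in the sense of Definition \ref{Def2-2}, the Hamiltonian system $(D\Sigma_{\mathcal{I}})^{T}\circ\Sigma_{\mathcal{I}}$ satisfies $y_{a}=u$; equivalently, it acts as the identity on the latent-variable space. Therefore, for any $v_{0}\in\mathcal{L}_{2}$,
\[
\bigl(\Sigma_{\mathcal{I}}\circ(D\Sigma_{\mathcal{I}})^{T}\bigr)\bigl(\Sigma_{\mathcal{I}}(v_{0})\bigr)=\Sigma_{\mathcal{I}}\Bigl(\bigl((D\Sigma_{\mathcal{I}})^{T}\circ\Sigma_{\mathcal{I}}\bigr)(v_{0})\Bigr)=\Sigma_{\mathcal{I}}(v_{0}).
\]
This is exactly the statement that $\Sigma_{\mathcal{I}}\circ(D\Sigma_{\mathcal{I}})^{T}$ fixes every element of $\mathcal{I}_{\Sigma}$. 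Combining this with the first step gives idempotence at once: for arbitrary $z\in\mathcal{L}_{2}$, the first application yields $\hat{z}\in\mathcal{I}_{\Sigma}$, and the second application fixes $\hat{z}$. The iff in (3.3b) is then also immediate: if $\hat{z}=z$ then $z\in\mathcal{I}_{\Sigma}$ by (3.3a), and conversely if $z\in\mathcal{I}_{\Sigma}$ the fixed-point property just established gives $\hat{z}=z$.

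The main obstacle I anticipate is not algebraic but lies in making the inner identity $(D\Sigma_{\mathcal{I}})^{T}\circ\Sigma_{\mathcal{I}}=\mathrm{id}$ rigorous at the level of signal operators acting on (3.1). Concretely, one must verify that the state trajectory $x(t)$ generated inside $\Sigma_{\mathcal{I}}\circ(D\Sigma_{\mathcal{I}})^{T}$ when driven by $\hat{z}=\Sigma_{\mathcal{I}}(v_{0})$ coincides with the one generated inside $\Sigma_{\mathcal{I}}$ when driven by $v_{0}$, so that the recovered latent signal is literally $v_{0}$ and not merely a trajectory agreeing with it modulo initial conditions. This requires compatible initialisation of the $(x,\lambda)$-Hamiltonian state (typically $\lambda(0)=0$ together with $V(x_{0})=0$ in the storage function of Definition \ref{Def2-2}) and the losslessness identity \eqref{eq2-11a}. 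Once these initial-condition and losslessness bookkeeping items are checked, the three conclusions fall out of the single identity exhibited above without any further computation involving the Hamiltonian function \eqref{eq3-2} explicitly.
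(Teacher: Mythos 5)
Your proof is correct and follows essentially the same route as the paper's: both arguments rest on the single fact that the inner property of the normalised SIR makes $\left( D\Sigma _{\mathcal{I}}\right) ^{T}\circ \Sigma _{\mathcal{I}}$ the identity on the latent-variable space, from which idempotence, (\ref{eq3-3a}) and (\ref{eq3-3b}) all follow by composition. Your closing remark on the initial-condition and losslessness bookkeeping needed to make $\left( D\Sigma _{\mathcal{I}}\right) ^{T}\circ \Sigma _{\mathcal{I}}=\mathrm{id}$ rigorous at the trajectory level is a point the paper's proof passes over silently, so it is a welcome (if not strictly required) addition.
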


\begin{proof}
That $\Sigma _{\mathcal{I}}\circ \left( D\Sigma _{\mathcal{I}}\right) ^{T}$
is idempotent immediately follows from the normalised SIR, i.e. 
\begin{equation*}
\Sigma _{\mathcal{I}}\circ \left( D\Sigma _{\mathcal{I}}\right) ^{T}\circ
\Sigma _{\mathcal{I}}\circ \left( D\Sigma _{\mathcal{I}}\right) ^{T}=\Sigma
_{\mathcal{I}}\circ \left( D\Sigma _{\mathcal{I}}\right) ^{T}.
\end{equation*}%
The proof of (\ref{eq3-3a}) is apparent, since 
\begin{equation*}
\left( D\Sigma _{\mathcal{I}}\right) ^{T}\left( \left[ 
\begin{array}{c}
u \\ 
y%
\end{array}%
\right] \right) =:v\in \mathcal{L}_{2}\Longrightarrow \Sigma _{\mathcal{I}%
}\circ \left( D\Sigma _{\mathcal{I}}\right) ^{T}\left( \left[ 
\begin{array}{c}
u \\ 
y%
\end{array}%
\right] \right) =\Sigma _{\mathcal{I}}\left( v\right) \in \mathcal{I}%
_{\Sigma }.
\end{equation*}%
To prove (\ref{eq3-3b}), observe that 
\begin{gather*}
\forall \left[ 
\begin{array}{c}
u \\ 
y%
\end{array}%
\right] \in \mathcal{I}_{\Sigma },\exists v\in \mathcal{L}_{2},\text{ s.t. }%
\left[ 
\begin{array}{c}
u \\ 
y%
\end{array}%
\right] =\Sigma _{\mathcal{I}}\left( v\right) \Longrightarrow  \\
\Sigma _{\mathcal{I}}\circ \left( D\Sigma _{\mathcal{I}}\right) ^{T}\left( %
\left[ 
\begin{array}{c}
u \\ 
y%
\end{array}%
\right] \right) =\Sigma _{\mathcal{I}}\circ \left( D\Sigma _{\mathcal{I}%
}\right) ^{T}\circ \Sigma _{\mathcal{I}}\left( v\right) =\Sigma _{\mathcal{I}%
}\left( v\right) =\left[ 
\begin{array}{c}
u \\ 
y%
\end{array}%
\right] .
\end{gather*}%
On the other hand, it follows from (\ref{eq3-3a}) that 
\begin{equation*}
\hat{z}=\left[ 
\begin{array}{c}
u \\ 
y%
\end{array}%
\right] \Longrightarrow \left[ 
\begin{array}{c}
u \\ 
y%
\end{array}%
\right] \in \mathcal{I}_{\Sigma }.
\end{equation*}%
Thus, (\ref{eq3-3b}) is proved. 
\end{proof}

\begin{Rem}
That $\Sigma _{\mathcal{I}}\circ \left( D\Sigma _{\mathcal{I}}\right) ^{T}$
is idempotent is an important property that ensures the projection
preserving local neighbourhoods at every point of the underlying manifold.
\end{Rem}

As a projection of $\left( u,y\right) $ to $\mathcal{I}_{\Sigma },$ $\hat{z}$
can be interpreted as an estimate of $\left( u,y\right) $ as well. In this
context, the notation 
\begin{equation*}
\hat{z}=\left[ 
\begin{array}{c}
\hat{u} \\ 
\hat{y}%
\end{array}%
\right]
\end{equation*}%
will also be adopted in the sequel.

\bigskip

Next, we consider the Hamiltonian functions (\ref{eq3-2}) and (\ref{eq3-4})
in case of the normalised SIR. It follows from Lemma \ref{Lemma2-1} that for
the normalised SIR, 
\begin{equation*}
P_{x}^{T}\left( x\right) a_{I}(x)+\frac{1}{2}%
c_{I}^{T}(x)c_{I}(x)=0,P_{x}^{T}\left( x\right)
B_{I}(x)+c_{I}^{T}(x)D_{I}(x)=0,D_{I}^{T}(x)D_{I}(x)=I,
\end{equation*}%
which yields%
\begin{gather*}
\frac{1}{2}\left( c_{I}(x)+D_{I}(x)B_{I}^{T}(x)P_{x}\left( x\right) \right)
^{T}\left( c_{I}(x)+D_{I}(x)B_{I}^{T}(x)P_{x}\left( x\right) \right) \\
+P_{x}^{T}\left( x\right) \left( a_{I}(x)+\frac{1}{2}%
B_{I}(x)B_{I}^{T}(x)P_{x}\left( x\right) \right) = \\
P_{x}^{T}\left( x\right) a_{I}(x)+\frac{1}{2}c_{I}^{T}(x)c_{I}(x)+P_{x}^{T}%
\left( x\right) B_{I}(x)B_{I}^{T}(x)P_{x}\left( x\right)
+c_{I}^{T}(x)D_{I}(x)B_{I}^{T}(x)P_{x}\left( x\right) =0.
\end{gather*}%
Hence, it holds 
\begin{align}
& P_{x}^{T}\left( x\right) \left( a_{I}(x)+\frac{1}{2}B_{I}(x)B_{I}^{T}(x)%
\lambda \right) -\frac{1}{2}z^{T}D_{I}(x)D_{I}^{T}(x)z  \notag \\
& =-\frac{1}{2}\left( c_{I}(x)+D_{I}(x)B_{I}^{T}(x)P_{x}\left( x\right)
\right) ^{T}\left( c_{I}(x)+D_{I}(x)B_{I}^{T}(x)P_{x}\left( x\right) \right)
-\frac{1}{2}z^{T}D_{I}(x)D_{I}^{T}(x)z  \notag \\
& =-\frac{1}{2}\left( \hat{z}-D_{I}(x)D_{I}^{T}(x)z\right) ^{T}\left( \hat{z}%
-D_{I}(x)D_{I}^{T}(x)z\right) -\frac{1}{2}z^{T}D_{I}(x)D_{I}^{T}(x)z  \notag
\\
& =-\frac{1}{2}\hat{z}^{T}\hat{z}+\hat{z}%
^{T}D_{I}(x)D_{I}^{T}(x)z-z^{T}D_{I}(x)D_{I}^{T}(x)z.  \label{eq3-5d}
\end{align}%
In the course of this computation we have used the relations%
\begin{gather*}
\hat{z}=c_{I}(x)+D_{I}(x)B_{I}^{T}(x)P_{x}\left( x\right)
+D_{I}(x)D_{I}^{T}(x)z, \\
D_{I}(x)D_{I}^{T}(x)D_{I}(x)D_{I}^{T}(x)=D_{I}(x)D_{I}^{T}(x).
\end{gather*}

By (\ref{eq3-2a}), (\ref{eq3-4}) and (\ref{eq3-5d}), the following theorem
is apparent.

\begin{Theo}
\label{Theo3-2}Given the normalised SIR and the corresponding Hamiltonian
system (\ref{eq3-1}), then%
\begin{align}
H^{\times }\left( x,\lambda ,z^{\times }\right) & =\frac{1}{2}\hat{z}^{T}%
\hat{z},  \label{eq3-6a} \\
H\left( x,\lambda ,z\right) & =\hat{z}^{T}z-\frac{1}{2}\hat{z}^{T}\hat{z}.
\label{eq3-6b}
\end{align}
\end{Theo}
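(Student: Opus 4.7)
The plan is to show that once the normalisation conditions of Lemma~\ref{Lemma2-1} are applied to the SIR data $(a_I,B_I,c_I,D_I)$, the Hamiltonian (\ref{eq3-4}) collapses to a pure quadratic form in the projected output $\hat z$, after which the expression for $H$ follows immediately from the Legendre duality (\ref{eq3-2b}). Concretely, I would first record the three normalisation identities
\begin{equation*}
P_x^T(x)\,a_I(x)+\tfrac12 c_I^T(x)c_I(x)=0,\qquad P_x^T(x)B_I(x)+c_I^T(x)D_I(x)=0,\qquad D_I^T(x)D_I(x)=I,
\end{equation*}
valid for some $P(x)\ge 0$, and restrict attention to the stable Lagrangian submanifold on which the costate satisfies $\lambda=P_x(x)$. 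This is the standard Hamilton--Jacobi setting underlying inner factorisation of nonlinear systems, and it is exactly the regime in which $\Sigma_{\mathcal{I}}\circ(D\Sigma_{\mathcal{I}})^T$ acts as the projector onto $\mathcal{I}_{\Sigma}$.

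Next, I would substitute $\lambda=P_x(x)$ into (\ref{eq3-4}) and invoke the pre-computed identity (\ref{eq3-5d}), which already shows
\begin{equation*}
P_x^T(x)\Bigl(a_I(x)+\tfrac12 B_I(x)B_I^T(x)P_x(x)\Bigr)-\tfrac12 z^T D_I D_I^T z
= -\tfrac12\hat z^T\hat z+\hat z^T D_I D_I^T z - z^T D_I D_I^T z.
\end{equation*}
Re-arranging this yields $H^{\times}(x,\lambda,z^{\times})=\tfrac12 \hat z^T\hat z-\hat z^T D_I D_I^T z + z^T D_I D_I^T z$. What remains is to show that the last two terms cancel. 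Using the explicit output equation $\hat z=c_I+D_I B_I^T P_x+D_I D_I^T z$ together with $D_I^T D_I=I$, one computes
\begin{equation*}
\hat z^T D_I D_I^T z = \bigl(c_I^T D_I + P_x^T B_I\bigr)D_I^T z + z^T D_I D_I^T z,
\end{equation*}
and the bracket vanishes by the second normalisation identity. This gives the first claim $H^{\times}=\tfrac12 \hat z^T\hat z$.

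Finally, the second identity $H(x,\lambda,z)=\hat z^T z-\tfrac12\hat z^T\hat z$ falls out for free from the Legendre duality (\ref{eq3-2b}), which reads $H^{\times}=\hat z^T z-H$, so no further computation is needed. The only non-routine step is the cancellation $\hat z^T D_I D_I^T z=z^T D_I D_I^T z$; this is where the orthogonality condition $P_x^T B_I+c_I^T D_I=0$ (i.e.\ the SIR being inner rather than merely stable) is essential, and I would present it as the single lemma-like computation on which the whole proof hinges.
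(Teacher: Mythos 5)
Your proposal is correct and follows essentially the same route as the paper's own proof: both rest on the identity (\ref{eq3-5d}) evaluated on $\lambda=P_{x}(x)$, reduce $H^{\times}$ to $\frac{1}{2}\hat{z}^{T}\hat{z}-\hat{z}^{T}D_{I}D_{I}^{T}z+z^{T}D_{I}D_{I}^{T}z$, and then hinge on the single cancellation $\hat{z}^{T}D_{I}(x)D_{I}^{T}(x)z=z^{T}D_{I}(x)D_{I}^{T}(x)z$ obtained from the output equation together with $P_{x}^{T}B_{I}+c_{I}^{T}D_{I}=0$. Deducing (\ref{eq3-6b}) directly from the Legendre relation $H^{\times}=\hat{z}^{T}z-H$ of (\ref{eq3-2b}), rather than simplifying the expression for $H$ separately as the paper does, is an immaterial shortcut.
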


\begin{proof}
It is apparent that we have, according to (\ref{eq3-2a}), (\ref{eq3-4}) and (%
\ref{eq3-5d}), 
\begin{align*}
H^{\times }\left( x,\lambda ,z^{\times }\right) & =\frac{1}{2}\hat{z}^{T}%
\hat{z}-\hat{z}^{T}D_{I}(x)D_{I}^{T}(x)z+z^{T}D_{I}(x)D_{I}^{T}(x)z, \\
H\left( x,\lambda ,z\right) & =\hat{z}^{T}\left(
I+D_{I}(x)D_{I}^{T}(x)\right) z-\frac{1}{2}\hat{z}^{T}\hat{z}%
-z^{T}D_{I}(x)D_{I}^{T}(x)z.
\end{align*}%
Notice that 
\begin{align*}
\hat{z}^{T}D_{I}(x)D_{I}^{T}(x)z& =\left(
c_{I}(x)+D_{I}(x)B_{I}^{T}(x)P_{x}\left( x\right)
+D_{I}(x)D_{I}^{T}(x)z\right) ^{T}D_{I}(x)D_{I}^{T}(x)z \\
& =\left( c_{I}^{T}(x)D_{I}(x)+P_{x}^{T}\left( x\right) B_{I}(x)\right)
D_{I}^{T}(x)z+z^{T}D_{I}(x)D_{I}^{T}(x)z \\
& =z^{T}D_{I}(x)D_{I}^{T}(x)z.
\end{align*}%
Hence, 
\begin{equation*}
H^{\times }\left( x,\lambda ,z^{\times }\right) =\frac{1}{2}\hat{z}^{T}\hat{z%
},H\left( x,\lambda ,z\right) =\hat{z}^{T}z-\frac{1}{2}\hat{z}^{T}\hat{z}.
\end{equation*}
\end{proof}

An immediate result of Theorems \ref{Theo3-1} and \ref{Theo3-2} is the
following corollary, which is useful for us to examine the interpretation of
the both Hamiltonian functions as generalised energy functions and, based on
it, to establish our fault detection strategy.

\begin{Corol}
\label{col3-1}Given the normalised SIR and the corresponding Hamiltonian
system (\ref{eq3-1}), then%
\begin{equation*}
\forall \left[ 
\begin{array}{c}
u \\ 
y%
\end{array}%
\right] \in \mathcal{I}_{\Sigma },H^{\times }\left( x,\lambda ,z^{\times
}\right) =H\left( x,\lambda ,z\right) =\frac{1}{2}\hat{z}^{T}\hat{z}=\frac{1%
}{2}z^{T}z=\frac{1}{2}v^{T}v-\dot{P}(x),
\end{equation*}%
where $P(x)\geq 0$ is the solution of the HJE (\ref{eq2-15a}).
\end{Corol}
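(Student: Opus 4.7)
The plan is to combine Theorem \ref{Theo3-1}, Theorem \ref{Theo3-2}, and the losslessness of the normalised SIR guaranteed by Definition \ref{Def2-2}. First I would invoke the ``iff'' part of Theorem \ref{Theo3-1}: whenever $[u;y]\in\mathcal{I}_{\Sigma}$, the projection output coincides with its input, so $\hat{z}=z$. Substituting this identity into the two formulas
\[
H^{\times}\left(x,\lambda,z^{\times}\right)=\tfrac{1}{2}\hat{z}^{T}\hat{z},\qquad H\left(x,\lambda,z\right)=\hat{z}^{T}z-\tfrac{1}{2}\hat{z}^{T}\hat{z}
\]
supplied by Theorem \ref{Theo3-2} collapses both Hamiltonian functions simultaneously to $\tfrac{1}{2}z^{T}z=\tfrac{1}{2}\hat{z}^{T}\hat{z}$, giving the first three equalities of the corollary in one stroke.

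For the remaining equality $\tfrac{1}{2}z^{T}z=\tfrac{1}{2}v^{T}v-\dot{P}(x)$, I would exploit the fact that the normalised $\Sigma _{\mathcal{I}}$ is, by construction, inner. Viewing $\Sigma _{\mathcal{I}}$ as the operator from the latent input $v$ to the output $z=[u;y]$, Definition \ref{Def2-2} supplies a storage function $\mathcal{V}(x)\geq 0$, $\mathcal{V}(0)=0$, satisfying
\[
\mathcal{V}\left(x(t_{2})\right)-\mathcal{V}\left(x(t_{1})\right)=\tfrac{1}{2}\int_{t_{1}}^{t_{2}}\bigl(v^{T}v-z^{T}z\bigr)d\tau.
\]
Differentiating in time yields $\dot{\mathcal{V}}(x)=\tfrac{1}{2}(v^{T}v-z^{T}z)$, which is exactly the desired identity once the storage function is identified with $P(x)$.

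The main (really the only non-routine) step is therefore this identification, $\mathcal{V}(x)=P(x)$. My argument would apply Lemma \ref{Lemma2-1} directly to the quadruple $(a_{I},B_{I},c_{I},D_{I})$ of the SIR in (\ref{eq2-2}); the three inner conditions furnished by Lemma \ref{Lemma2-1} for $\Sigma _{\mathcal{I}}$ have to be shown equivalent, after substituting the prescribed state-feedback $g(x)=-(I+D^{T}D)^{-1}(B^{T}P_{x}+D^{T}c)$ and the pre-filter $V(x)=(I+D^{T}D)^{-1/2}$ from Theorem \ref{Theo2-1}, to the single HJE (\ref{eq2-15a}). The key algebraic observation driving this reduction is that $(P_{x}^{T}B+c^{T}D)g=-g^{T}(I+D^{T}D)g$ by the definition of $g$, so the cross term collapses and one recovers (\ref{eq2-15a}) exactly. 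Thus the unique nonnegative storage function of the inner $\Sigma _{\mathcal{I}}$ is $P(x)$, which closes the chain of equalities and completes the proof.
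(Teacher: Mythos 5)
Your proposal is correct. The first three equalities are handled exactly as in the paper: Theorem \ref{Theo3-1} gives $\hat{z}=z$ on $\mathcal{I}_{\Sigma}$, and substituting into the two formulas of Theorem \ref{Theo3-2} collapses $H$ and $H^{\times}$ to $\frac{1}{2}z^{T}z$. Where you diverge is the final equality $\frac{1}{2}z^{T}z=\frac{1}{2}v^{T}v-\dot{P}(x)$. The paper stays entirely inside the Hamiltonian algebra already set up: it notes that for the normalised SIR $\lambda=P_{x}(x)$, hence $\lambda^{T}\dot{x}=\dot{P}(x)$, and then equates the two representations $H=\hat{z}^{T}z-\frac{1}{2}v^{T}v+\dot{P}(x)$ and $H^{\times}=\frac{1}{2}v^{T}v-\dot{P}(x)$ from (\ref{eq3-2a})--(\ref{eq3-2b}), which forces $z^{T}z=v^{T}v-2\dot{P}(x)$. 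You instead invoke the lossless energy balance of Definition \ref{Def2-2} for the inner system $\Sigma_{\mathcal{I}}$ and then identify the storage function with the HJE solution $P$ by pushing the conditions of Lemma \ref{Lemma2-1} through the quadruple $(a_{I},B_{I},c_{I},D_{I})$ with the prescribed $g$ and $V$; your cross-term cancellation $(P_{x}^{T}B+c^{T}D)g=-g^{T}(I+D^{T}D)g$ is exactly right and does reduce the three inner conditions to (\ref{eq2-15a}). The trade-off: the paper's route is shorter because it reuses identities already derived, while yours is more self-contained and in effect supplies the proof of Theorem \ref{Theo2-1} that the paper omitted. One minor over-claim: you assert that $P$ is the \emph{unique} nonnegative storage function of the inner $\Sigma_{\mathcal{I}}$; uniqueness is neither justified by the cited results nor needed --- it suffices that the particular $P$ solving (\ref{eq2-15a}) is \emph{a} storage function satisfying $\dot{P}=\frac{1}{2}(v^{T}v-z^{T}z)$, which your computation already establishes.
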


\begin{proof}
It follows from Theorem \ref{Theo3-1} that 
\begin{equation*}
\forall \left[ 
\begin{array}{c}
u \\ 
y%
\end{array}%
\right] \in \mathcal{I}_{\Sigma },\hat{z}=\Sigma _{\mathcal{I}}\circ \left(
D\Sigma _{\mathcal{I}}\right) ^{T}\left( z\right) =z.
\end{equation*}%
Thus, according to Theorem \ref{Theo3-2} that 
\begin{equation*}
H^{\times }\left( x,\lambda ,z^{\times }\right) =H\left( x,\lambda ,z\right)
=\frac{1}{2}\hat{z}^{T}\hat{z}=\frac{1}{2}z^{T}z.
\end{equation*}%
On the other hand, for the normalised SIR, 
\begin{equation*}
\lambda =P_{x}(x)\Longrightarrow \lambda ^{T}\dot{x}=\dot{P}(x),
\end{equation*}%
which leads to 
\begin{gather*}
H\left( x,\lambda ,z\right) =\hat{z}^{T}z-\frac{1}{2}v^{T}v+\dot{P}%
(x)=H^{\times }\left( x,\lambda ,z^{\times }\right) =\frac{1}{2}v^{T}v-\dot{P%
}(x) \\
\Longrightarrow \hat{z}^{T}z=z^{T}z=v^{T}v-2\dot{P}(x)\Longrightarrow
H\left( x,\lambda ,z\right) =H^{\times }\left( x,\lambda ,z^{\times }\right)
=\frac{1}{2}v^{T}v-\dot{P}(x).
\end{gather*}%
The theorem is proved. 
\end{proof}

\bigskip

Corollary \ref{col3-1} tells us, corresponding to the nominal system
dynamics, both Hamiltonian functions (\ref{eq3-2}) and (\ref{eq3-2b}) as
energy functions are equivalent. Moreover, from the viewpoint of the energy
balance, the lossless property, as defined in Definition \ref{Def2-2} and in
the sense of (\ref{eq2-11a}), holds. When uncertainties or faults are
present in the system, i.e. $z\notin \mathcal{I}_{\Sigma },$ such an energy
balance will be disturbed, as described by (\ref{eq3-6b}) in Theorem \ref%
{Theo3-2}. In this context, the fault detection problem can be formulated as
detecting deviations of the Hamiltonian functions from the nominal value $%
\frac{1}{2}\hat{z}^{T}\hat{z}=\frac{1}{2}z^{T}z.$ To this end, it is plain
that a well-founded evaluation function should be defined in such a way that
it is possible to examine the energy balance relations real-time and thus to
detect possible changes in the parity relations reliably and timely. This is
the problem to be addressed in the next subsection.

\subsection{A Bregman divergence-based evaluation function, and geodesic
projection}

In the well-established observer-based fault detection framework, building
(i) a residual vector as the difference between the measurement and its
estimate, (ii) evaluating the residual vector by means of a signal norm,
typically $\mathcal{L}_{2}$ norm, are the two essential steps to a
successful fault detection. In our divergence-based detection framework, we
propose, alternatively, to evaluate possible changes in the system dynamic
in terms of a divergence between the measurement (data) and its estimate
(projection). The basic idea behind this approach is twofold, as motivated
and described in the section of problem formulation,

\begin{itemize}
\item to achieve a system performance-oriented fault detection by means of a
Hamiltonian function induced Bregman divergence, and

\item to detect deviations using a Bregman divergence as a distance measure.
\end{itemize}

Consider the normalised SIR (\ref{eq2-2}) and the corresponding Hamiltonian
system (\ref{eq3-1}) with the input and output vectors, $z$ and $\hat{z},$
where $z$ is the collected data and $\hat{z}$ is its estimate (projection)
delivered by (\ref{eq3-1}). In the sequel, for the sake of notation
simplicity, the Hamiltonian functions (\ref{eq3-2}) and (\ref{eq3-4}) are
expressed respectively by%
\begin{equation*}
H(z):=H\left( x,\lambda ,z\right) ,H^{\times }(z):=H^{\times }\left(
x,\lambda ,z\right) .
\end{equation*}%
Since our objective is to detect faults by checking deviations from the
nominal dynamics, we examine, according to Corollary \ref{col3-1}, the
condition%
\begin{equation}
H^{\times }(z)=H(z)=\frac{1}{2}\hat{z}^{T}\hat{z}=\frac{1}{2}z^{T}z.
\label{eq3-11a}
\end{equation}%
To this end, the following hypothesis test is under consideration:%
\begin{equation}
\left\{ 
\begin{array}{l}
\mathcal{H}_{0}:H^{\times }(z)=H(z)=\frac{1}{2}\hat{z}^{T}\hat{z}=\frac{1}{2}%
z^{T}z \\ 
\mathcal{H}_{1}:H(z)\neq \frac{1}{2}z^{T}z.%
\end{array}%
\right.  \label{eq3-11}
\end{equation}%
Here, the hypothesis $\mathcal{H}_{0}$ represents the nominal dynamics,
while $\mathcal{H}_{1}$ indicates faulty dynamics. Let the Bregman
divergence 
\begin{equation*}
D_{H_{0}}\left[ \hat{z}:z\right] :=H_{0}(\hat{z})-H_{0}(z)-\left( \frac{%
\partial H_{0}(z)}{\partial z}\right) ^{T}\left( \hat{z}-z\right) ,
\end{equation*}%
be the test (evaluation) function for the implementation of the hypothesis
test (\ref{eq3-11}), where $H_{0}(z)$ represents the nominal dynamics and
thus is subject to 
\begin{equation*}
H_{0}(z)=\frac{1}{2}z^{T}z.
\end{equation*}%
The computation of $D_{H_{0}}\left[ \hat{z}:z\right] $ is given in following
theorem, which also provides us with an interesting interpretation of $%
D_{H_{0}}\left[ \hat{z}:z\right] .$

\begin{Theo}
\label{Theo3-3}Given the normalised SIR (\ref{eq2-2}), the corresponding
Hamiltonian system (\ref{eq3-1}) and Hamiltonian functions (\ref{eq3-2}) and
(\ref{eq3-4}), then the Bregman divergence 
\begin{equation*}
D_{H_{0}}\left[ \hat{z}:z\right] =H_{0}(\hat{z})-H_{0}(z)-\left( \frac{%
\partial H_{0}(z)}{\partial z}\right) ^{T}\left( \hat{z}-z\right)
\end{equation*}%
is given by 
\begin{align}
D_{H_{0}}\left[ \hat{z}:z\right] & =H_{0}\left( z\right) -H\left( z\right) ,
\label{eq3-12} \\
H\left( z\right) & =\hat{z}^{T}z-\frac{1}{2}\hat{z}^{T}\hat{z}.
\end{align}
\end{Theo}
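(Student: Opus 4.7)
The plan is to verify the identity directly by substituting the two explicit forms we already possess for the relevant quantities and reading off both sides. The key input is Theorem \ref{Theo3-2}, which gives $H(z) = \hat{z}^{T}z - \tfrac{1}{2}\hat{z}^{T}\hat{z}$ under the normalised SIR. The generating function for the Bregman divergence is $H_{0}(z) = \tfrac{1}{2} z^{T}z$, for which the gradient is immediate: $\partial H_{0}(z)/\partial z = z$. So nothing deeper than elementary calculus and an algebraic rearrangement is needed.

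First, I would write out the Bregman divergence from its definition, replacing $H_{0}(\hat z)$, $H_{0}(z)$, and $\partial H_{0}(z)/\partial z$ by their explicit quadratic forms. This yields
\begin{equation*}
D_{H_{0}}[\hat{z}:z] \;=\; \tfrac{1}{2}\hat{z}^{T}\hat{z} \;-\; \tfrac{1}{2} z^{T}z \;-\; z^{T}(\hat{z}-z).
\end{equation*}
Expanding the last inner product and collecting like terms gives
\begin{equation*}
D_{H_{0}}[\hat{z}:z] \;=\; \tfrac{1}{2}\hat{z}^{T}\hat{z} \;+\; \tfrac{1}{2} z^{T}z \;-\; \hat{z}^{T}z.
\end{equation*}

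Next I would compute the right-hand side of \eqref{eq3-12} using the expression for $H(z)$ supplied by Theorem \ref{Theo3-2}:
\begin{equation*}
H_{0}(z) - H(z) \;=\; \tfrac{1}{2} z^{T}z \;-\; \bigl(\hat{z}^{T}z - \tfrac{1}{2}\hat{z}^{T}\hat{z}\bigr) \;=\; \tfrac{1}{2} z^{T}z \;+\; \tfrac{1}{2}\hat{z}^{T}\hat{z} \;-\; \hat{z}^{T}z,
\end{equation*}
which is exactly the rearranged expression obtained above, establishing \eqref{eq3-12}.

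There is, honestly, no serious obstacle here; the only thing one has to be careful about is that $\hat z$ depends on $z$ through the projection operator $\Sigma_{\mathcal{I}}\circ(D\Sigma_{\mathcal{I}})^{T}$, so one might be tempted to include contributions from $\partial \hat z/\partial z$ when differentiating. The point is that in the Bregman formula the generating function $H_{0}$ is evaluated at the two arguments $\hat z$ and $z$ treated as independent vectors in the data space, and the derivative $\partial H_{0}/\partial z$ is taken with $z$ viewed as a free variable; the functional dependence $\hat z = \hat z(z)$ is irrelevant for this computation. Once that point is kept straight, the identity is nothing but a one-line algebraic check supported by Theorem \ref{Theo3-2}.
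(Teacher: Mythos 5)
Your proof is correct and follows essentially the same route as the paper: both arguments reduce to the algebraic identity $D_{H_{0}}[\hat{z}:z]=\tfrac{1}{2}\hat{z}^{T}\hat{z}+\tfrac{1}{2}z^{T}z-\hat{z}^{T}z$ and then invoke Theorem \ref{Theo3-2}'s formula $H(z)=\hat{z}^{T}z-\tfrac{1}{2}\hat{z}^{T}\hat{z}$ to identify this with $H_{0}(z)-H(z)$. The only cosmetic difference is that the paper obtains the intermediate expression via the general Legendre-dual identity (\ref{eq2-22}), whereas you expand the Bregman definition directly using $\partial H_{0}(z)/\partial z=z$; your remark that $\hat{z}$ and $z$ are treated as independent arguments of the generating function is a sensible clarification and consistent with the paper's usage.
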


\begin{proof}
Recall that $z$ and $\hat{z}$ are a Legendre dual pair, 
\begin{equation*}
\hat{z}=\frac{\partial H\left( x,\lambda ,z\right) }{\partial z}=z^{\times
},z=\frac{\partial H^{\times }\left( x,\lambda ,\hat{z}\right) }{\partial 
\hat{z}}.
\end{equation*}%
Hence, according to Corollary \ref{col3-1}, (\ref{eq3-6b}) in Theorem \ref%
{Theo3-2} and (\ref{eq2-22}), we have%
\begin{eqnarray*}
D_{H_{0}}\left[ \hat{z}:z\right]  &=&H_{0}(\hat{z})+H_{0}\left( z\right)
-z^{T}\hat{z}=H_{0}\left( z\right) -H\left( z\right) , \\
H_{0}\left( z\right)  &=&\frac{1}{2}z^{T}z,H\left( z\right) =z^{T}\hat{z}%
-H_{0}(\hat{z})=\hat{z}^{T}z-\frac{1}{2}\hat{z}^{T}\hat{z}.
\end{eqnarray*}
\end{proof}

\bigskip

Theorem \ref{Theo3-3} demonstrates that the Bregman divergence $D_{H_{0}}%
\left[ \hat{z}:z\right] $ gives a distance measure between process data $z$
and its projection on to $\mathcal{I}_{\Sigma },\hat{z},$ in terms of the
deviation of the Hamiltonian function from its nominal value $H_{0}(z)$,
expressed by $H_{0}\left( z\right) -H\left( z\right) .$

\bigskip

We would like to draw the reader's attention to the fact that 
\begin{equation*}
D_{H_{0}}\left[ \hat{z}:z\right] =\frac{1}{2}z^{T}z+\frac{1}{2}\hat{z}^{T}%
\hat{z}-\hat{z}^{T}z=\frac{1}{2}\left\Vert z-\hat{z}\right\Vert ^{2},
\end{equation*}%
i.e. $D_{H_{0}}\left[ \hat{z}:z\right] $ is one half of the squared
Euclidean norm of the projection-based residual $z-\hat{z}.$ Although, due
to the generating function defined by $H_{0}(z)=\frac{1}{2}z^{T}z,$ this is
a known mathematical result, the underlined control-theoretic interpretation
is of remarkable importance. That is, the squared Euclidean norm of the
projection-based residual gives the deviation of the Hamiltonian function
from its nominal value. In other words, the norm-based evaluation of the
projection-based residual implies a performance-oriented evaluation and thus
leads to a performance-oriented fault detection. This property marks a
distinguishing difference of the proposed detection scheme to the
well-established model- and observer-based fault detection framework.

\bigskip

It is worth emphasising that in a non-Euclidean space, the relation with the
above orthogonal projection as given in (\ref{eq2-21a}) does not hold. Our
intention of introducing the Bregman divergence as a distance measure is to
find an alternative solution. In this regard, it is of considerably
theoretic interest to study the projection $\hat{z}=\Sigma _{\mathcal{I}%
}\circ \left( D\Sigma _{\mathcal{I}}\right) ^{T}\left( z\right) $ from the
viewpoint of information geometry. To this end, the projection theorem \cite%
{Amari2016} is applied.

\bigskip

Consider the Bregman divergence from the process data $z$ to the system
image manifold $\mathcal{I}_{\Sigma }$ derived by $D_{H^{\times }},$%
\begin{equation}
D_{H^{\times }}\left[ z:\mathcal{I}_{\Sigma }\right] =\min_{z_{0}\in 
\mathcal{I}_{\Sigma }}D_{H^{\times }}\left[ z:z_{0}\right] .
\end{equation}%
The following theorem provides us with the interpretation that the
projection $\hat{z}$ delivered by the Hamiltonian system (\ref{eq3-1}) is a
geodesic projection.

\begin{Theo}
\label{Theo3-4}Consider the normalised SIR (\ref{eq2-2}) and the Bregman
divergence $D_{H^{\times }}\left[ z:\mathcal{I}_{\Sigma }\right] .$ The
projection $\hat{z}$ delivered by the corresponding Hamiltonian system (\ref%
{eq3-1}) is a geodesic projection of $z$ onto $\mathcal{I}_{\Sigma }.$
Moreover, the Bregman divergence $D_{H^{\times }}\left[ z:\mathcal{I}%
_{\Sigma }\right] $ is given by%
\begin{equation}
D_{H^{\times }}\left[ z:\mathcal{I}_{\Sigma }\right] =\frac{1}{2}z^{T}z+%
\frac{1}{2}\hat{z}^{T}\hat{z}-z^{T}\hat{z}.  \label{eq3-12a}
\end{equation}
\end{Theo}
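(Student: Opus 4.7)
The plan is to invoke the Projection Theorem (Theorem 2.2), which says by definition that the minimiser of $D_{H^{\times}}[z:z_{0}]$ over $z_{0}\in\mathcal{I}_{\Sigma}$ is the geodesic projection of $z$ onto $\mathcal{I}_{\Sigma}$. The task then reduces to (i) showing that the output $\hat{z}$ of the Hamiltonian system (\ref{eq3-1}) attains this minimum, and (ii) evaluating the minimum value.

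First, I would write $D_{H^{\times}}[z:z_{0}]$ in closed form for $z_{0}\in\mathcal{I}_{\Sigma}$. Two structural facts do the heavy lifting here: by Theorem \ref{Theo3-1}, every $z_{0}\in\mathcal{I}_{\Sigma}$ satisfies $\hat{z}_{0}=z_{0}$, so the Legendre map is the identity on $\mathcal{I}_{\Sigma}$; and by Corollary \ref{col3-1}, $H(z_{0})=H^{\times}(z_{0})=\tfrac{1}{2}z_{0}^{T}z_{0}$ on $\mathcal{I}_{\Sigma}$. Combining the duality identity (\ref{eq2-22a}), Theorem \ref{Theo3-2}'s expression $H(z)=\hat{z}^{T}z-\tfrac{1}{2}\hat{z}^{T}\hat{z}$ with $\nabla H(z)=\hat{z}$, and the Legendre pairing $H(z)+H^{\times}(\hat{z})=z^{T}\hat{z}$, the divergence collapses to a quadratic in $z_{0}$ along $\mathcal{I}_{\Sigma}$.

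Second, I would verify that the constrained stationarity condition of this quadratic is satisfied at $z_{0}=\hat{z}$. This exploits the idempotency of $\Sigma_{\mathcal{I}}\circ(D\Sigma_{\mathcal{I}})^{T}$ from Theorem \ref{Theo3-1}, which guarantees that projecting $\hat{z}\in\mathcal{I}_{\Sigma}$ returns $\hat{z}$ itself, so that the critical point identified by first-order optimality is admissible. Invoking Theorem \ref{Theo2-2} then yields the first claim. Plugging $z_{0}=\hat{z}$ back into the closed form and simplifying via (\ref{eq3-6a})--(\ref{eq3-6b}) collapses the expression to $\tfrac{1}{2}z^{T}z+\tfrac{1}{2}\hat{z}^{T}\hat{z}-z^{T}\hat{z}$, which is precisely (\ref{eq3-12a}); as a consistency check, this equals $D_{H_{0}}[\hat{z}:z]=\tfrac{1}{2}\|z-\hat{z}\|^{2}$ obtained in Theorem \ref{Theo3-3}.

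The main obstacle is the careful bookkeeping of the interplay between the primal coordinate $z$ (the data, generically off the manifold) and its dual coordinate $\hat{z}$ (its Hamiltonian projection, always on the manifold): on $\mathcal{I}_{\Sigma}$ the two representations coincide, while off it they diverge. The divergence $D_{H^{\times}}[z:z_{0}]$ couples a point whose primal and dual representations disagree with a point on $\mathcal{I}_{\Sigma}$ where they agree, and it is exactly this asymmetry, mediated by the Legendre transform hard-wired into the Hamiltonian system (\ref{eq3-1a})--(\ref{eq3-1b}), that pins the minimiser to $\hat{z}$ rather than to $z$ itself and forces the nonzero value quoted in (\ref{eq3-12a}).
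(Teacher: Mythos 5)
Your plan follows essentially the same route as the paper's proof: invoke the Projection Theorem to reduce the claim to showing $\hat{z}=\arg\min_{z_{0}\in\mathcal{I}_{\Sigma}}D_{H^{\times}}\left[ z:z_{0}\right]$, use the fact that the Legendre map is the identity on $\mathcal{I}_{\Sigma}$ (Theorem \ref{Theo3-1}) together with $H\left( z_{0}\right) =\frac{1}{2}z_{0}^{T}z_{0}$ to collapse the divergence to the quadratic $\frac{1}{2}z^{T}z+\frac{1}{2}z_{0}^{T}z_{0}-z^{T}z_{0}$, verify first-order optimality at $z_{0}=\hat{z}$ (the paper makes this concrete via the output-equation identities (\ref{eq3-7a})--(\ref{eq3-7b}), which reduce the constrained problem to an unconstrained one in $D_{I}(x)D_{I}^{T}(x)z_{0}$), and substitute back to obtain (\ref{eq3-12a}). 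The proposal is correct and matches the paper's argument in all essential steps.
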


\begin{proof}
It follows from Theorem \ref{Theo2-2} that the geodesic projection of $z$
onto $\mathcal{I}_{\Sigma }$ is the vector $z^{\ast }$ that minimises $%
D_{H^{\times }}\left[ z:z_{0}\right] ,\forall z_{0}\in \mathcal{I}_{\Sigma },
$ i.e. 
\begin{equation*}
D_{H^{\times }}\left[ z:\mathcal{I}_{\Sigma }\right] =\min_{z_{0}\in 
\mathcal{I}_{\Sigma }}D_{H^{\times }}\left[ z:z_{0}\right] =D_{H^{\times }}%
\left[ z:z^{\ast }\right] .
\end{equation*}%
Accordingly, it is to prove 
\begin{equation}
\hat{z}=z^{\ast }=\arg \min_{z_{0}\in \mathcal{I}_{\Sigma }}D_{H^{\times }}%
\left[ z:z_{0}\right] .  \label{eq3-7}
\end{equation}%
To this end, consider, following (\ref{eq2-22}) and noticing $\forall
z_{0}\in \mathcal{I}_{\Sigma },z_{0}^{\times }=\hat{z}_{0}=z_{0},$ 
\begin{equation*}
D_{H^{\times }}\left[ z:z_{0}\right] =H^{\times }\left( z\right) +H\left(
z_{0}^{\times }\right) -z^{T}z_{0}^{\times }=H^{\times }\left( z\right)
+H\left( z_{0}\right) -z^{T}z_{0}.
\end{equation*}%
It yields, by Theorems \ref{Theo3-1} and \ref{Theo3-2}, 
\begin{equation}
H\left( z_{0}\right) =\frac{1}{2}z_{0}^{T}z_{0}\Longrightarrow D_{H^{\times
}}\left[ z:z_{0}\right] =\frac{1}{2}z^{T}z+\frac{1}{2}%
z_{0}^{T}z_{0}-z^{T}z_{0}.  \label{eq3-7c}
\end{equation}%
Hence, the optimisation problem becomes 
\begin{equation*}
\min_{z_{0}\in \mathcal{I}_{\Sigma }}\left( \frac{1}{2}%
z_{0}^{T}z_{0}-z^{T}z_{0}\right) .
\end{equation*}%
Recall that $\forall z_{0}\in \mathcal{I}_{\Sigma }$%
\begin{gather}
D_{I}(x)D_{I}^{T}(x)z_{0}+c_{I}(x)+D_{I}(x)B_{I}^{T}(x)P_{x}\left( x\right)
=z_{0},  \label{eq3-7a} \\
D_{I}^{T}(x)D_{I}(x)=I,\left( c_{I}(x)+D_{I}(x)B_{I}^{T}(x)P_{x}\left(
x\right) \right) ^{T}D_{I}(x)=0,  \label{eq3-7b}
\end{gather}%
which yields 
\begin{gather*}
z_{0}^{T}z_{0}=z_{0}^{T}D_{I}(x)D_{I}^{T}(x)z_{0}+\left(
c_{I}(x)+D_{I}(x)B_{I}^{T}(x)P_{x}\left( x\right) \right) ^{T}\left(
c_{I}(x)+D_{I}(x)B_{I}^{T}(x)P_{x}\left( x\right) \right) , \\
z^{T}z_{0}=z^{T}\left( c_{I}(x)+D_{I}(x)B_{I}^{T}(x)P_{x}\left( x\right)
+D_{I}(x)D_{I}^{T}(x)z_{0}\right) .
\end{gather*}%
Thus, it holds 
\begin{equation*}
\min_{z_{0}\in \mathcal{I}_{\Sigma }}\left( \frac{1}{2}%
z_{0}^{T}z_{0}-z^{T}z_{0}\right) \Longleftrightarrow \min_{z_{0}\in \mathcal{%
I}_{\Sigma }}\left( \frac{1}{2}%
z_{0}^{T}D_{I}(x)D_{I}^{T}(x)z_{0}-z^{T}D_{I}(x)D_{I}^{T}(x)z_{0}\right) .
\end{equation*}%
Solving 
\begin{equation*}
\frac{\partial }{\partial z_{0}}\left( \frac{1}{2}%
z_{0}^{T}D_{I}(x)D_{I}^{T}(x)z_{0}-z^{T}D_{I}(x)D_{I}^{T}(x)z_{0}\right) =0
\end{equation*}%
gives 
\begin{equation}
D_{I}(x)D_{I}^{T}(x)z^{\ast }=D_{I}(x)D_{I}^{T}(x)z,  \label{eq3-8}
\end{equation}%
where $z^{\ast }$ represents the optimal solution that belongs to $\mathcal{I%
}_{\Sigma }.$ It is known that 
\begin{equation*}
\hat{z}=c_{I}(x)+D_{I}(x)B_{I}^{T}(x)P_{x}\left( x\right)
+D_{I}(x)D_{I}^{T}(x)z\in \mathcal{I}_{\Sigma }.
\end{equation*}%
It follows immediately from (\ref{eq3-7a})-(\ref{eq3-7b}) that 
\begin{equation*}
z^{\ast }=\hat{z}
\end{equation*}%
solves (\ref{eq3-8}). Finally, substituting $z_{0}$ in (\ref{eq3-7c}) by $%
\hat{z}$ leads to (\ref{eq3-12a}). The theorem is proved. 
\end{proof}

\bigskip

This result gives a nice interpretation of the projection $\hat{z}=\Sigma _{%
\mathcal{I}}\circ \left( D\Sigma _{\mathcal{I}}\right) ^{T}\left( z\right) .$
Thanks to the Bregman divergence $D_{H^{\times }}$ and the Legendre
transform, the data manifold is equipped with a dual Riemannian structure
and dually flat \cite{Amari2016}. In this context, the projection $\hat{z}$
is a geodesic projection of $z$ onto $\mathcal{I}_{\Sigma },$ and the
distance defined by the Bregman divergence $D_{H^{\times }}$ is at minimum.
This result can be interpreted as a generalisation of the orthogonal \
projection (\ref{eq2-0}) defined in Hilbert space.

\bigskip

As a summary of this subsection, it is concluded that the two major
objectives of our work, (i) system performance-oriented fault detection by
means of a Hamiltonian function induced Bregman divergence, (ii) detection
of deviations in a non-Euclidean space using a Bregman divergence as a
distance measure, have been satisfactorily achieved.

\subsection{Implementation issues\label{subsec3-3}}

Before the theoretical results presented in the previous subsections can be
applied to a reliable fault detection, two fault detection specified issues
should be clarified. They are,

\begin{itemize}
\item definition of a reliable (robust) evaluation function $J$ and, based
on it,

\item determination of a threshold $J_{th}$.
\end{itemize}

The first issue is of considerable importance to reduce false alarms. For
instance, the Bregman divergence $D_{H_{0}}\left[ \hat{z}:z\right] $ given
in (\ref{eq3-12}) is a time function. When it is directly used as an
evaluation function, detection decision will be made solely depending on the
data received at a single time instant. Due to possible transient
disturbances during system operations, such handling would probably result
in a flood of false alarms. Concerning the second issue, it is natural that,
under the detection logic, 
\begin{equation*}
\text{detection logic: }\left\{ 
\begin{array}{l}
J\leq J_{th}\Longrightarrow \text{fault-free} \\ 
J>J_{th}\Longrightarrow \text{faulty,}%
\end{array}%
\right.
\end{equation*}%
the selection of $J_{th}$ would have a crucial influence on the detection
performance. An integrated design of $\left( J,J_{th}\right) $ is state of
the art in model-based fault detection towards an optimal trade-off between
the false alarm rate and fault detection rate \cite{Ding2008,Ding2020}.
Below, these two issues are addressed jointly.

\bigskip

Let $\left[ t_{0},t_{1}\right] $ be the evaluation time window and define%
\begin{equation}
J=\int\limits_{t_{0}}^{t_{1}}D_{H_{0}}\left[ \hat{z}:z\right] d\tau
=\int\limits_{t_{0}}^{t_{1}}\left( H_{0}(\hat{z})-H_{0}(z)-\left( \frac{%
\partial H_{0}(z)}{\partial z}\right) ^{T}\left( \hat{z}-z\right) \right)
d\tau  \label{eq3-15}
\end{equation}%
as the evaluation function. In the literature, the divergence given by (\ref%
{eq3-15}) is called pointwise Bregman divergence \cite{FSG2008}. Consider
that, in real applications, digital data are collected and recorded. Suppose
that over the time interval $\left[ t_{0},t_{1}\right] ,$ $z\left(
k_{i}\right) ,i=1,\cdots ,M,$ are collected, where $k_{i}$ denotes the
sampling instant. For our purpose of fault detection, an approximation of
function (\ref{eq3-15}) is implemented using the available data $z\left(
k_{i}\right) $\ as follows, 
\begin{align}
J& =\frac{1}{M}\sum\limits_{i=1}^{M}D_{H_{0}}\left[ \hat{z}\left(
k_{i}\right) :z\left( k_{i}\right) \right]  \label{eq3-16} \\
& =\frac{1}{M}\sum\limits_{i=1}^{M}\left( H_{0}\left( \hat{z}\left(
k_{i}\right) \right) -H_{0}\left( z\left( k_{i}\right) \right) -\left( \frac{%
\partial H_{0}\left( z\left( k_{i}\right) \right) }{\partial z\left(
k_{i}\right) }\right) ^{T}\left( \hat{z}\left( k_{i}\right) -z\left(
k_{i}\right) \right) \right)  \notag \\
& =\frac{1}{M}\sum\limits_{i=1}^{M}\left( H_{0}\left( z\left( k_{i}\right)
\right) -H_{0}\left( \hat{z}\left( k_{i}\right) \right) -z^{T}\left(
k_{i}\right) \hat{z}\left( k_{i}\right) \right)  \notag \\
& =\frac{1}{M}\sum\limits_{i=1}^{M}\left( H_{0}\left( z\left( k_{i}\right)
\right) -H\left( \hat{z}\left( k_{i}\right) \right) \right) .  \notag
\end{align}%
It is of interest to note that the above evaluation function $J$ is
equivalent to the Bregman divergence defined by%
\begin{align}
D_{H_{0}}\left[ \hat{z}_{M}:z_{M}\right] & :=H_{0,M}\left( \hat{z}%
_{M}\right) -H_{0,M}\left( z_{M}\right) -\left( \frac{\partial H_{0,M}\left(
z_{M}\right) }{\partial z_{M}}\right) ^{T}\left( \hat{z}_{M}-z_{M}\right) ,
\label{eq3-17} \\
H_{0,M}\left( z_{M}\right) & =:\frac{1}{M}\sum\limits_{i=1}^{M}H_{0}\left(
z\left( k_{i}\right) \right) =\frac{1}{2}z_{M}^{T}z_{M},H_{0,M}\left( \hat{z}%
_{M}\right) =\frac{1}{2}\hat{z}_{M}^{T}\hat{z}_{M},  \label{eq3-18} \\
z_{M}& :=\left[ 
\begin{array}{c}
\frac{z\left( k_{1}\right) }{\sqrt{M}} \\ 
\vdots \\ 
\frac{z\left( k_{M}\right) }{\sqrt{M}}%
\end{array}%
\right] \in \mathbb{R}^{M\left( m+p\right) },\hat{z}_{M}:=\left[ 
\begin{array}{c}
\frac{\hat{z}\left( k_{1}\right) }{\sqrt{M}} \\ 
\vdots \\ 
\frac{\hat{z}\left( k_{M}\right) }{\sqrt{M}}%
\end{array}%
\right] \in \mathbb{R}^{M\left( m+p\right) },  \notag
\end{align}%
as proved in the following theorem.

\begin{Theo}
\label{Theo3-5}Given the normalised SIR (\ref{eq2-2}), the corresponding
Hamiltonian system (\ref{eq3-1}) and Hamiltonian function (\ref{eq3-4}), and
suppose that $z\left( k_{i}\right) ,i=1,\cdots ,M,$ are available. Then, the
evaluation function (\ref{eq3-16}) is equivalent to the Bregman divergence (%
\ref{eq3-17}) with the generating function (\ref{eq3-18}) and can be written
as%
\begin{align}
J& =D_{H_{0}}\left[ \hat{z}_{M}:z_{M}\right] =\frac{1}{2}z_{M}^{T}z_{M}+%
\frac{1}{2}\hat{z}_{M}^{T}\hat{z}_{M}-\hat{z}_{M}^{T}z_{M}  \label{eq3-19} \\
& =H_{0,M}\left( z_{M}\right) -H_{M}\left( z_{M}\right) ,  \label{eq3-19a} \\
H_{M}\left( z_{M}\right) & =\frac{1}{M}\sum\limits_{i=1}^{M}H\left( z\left(
k_{i}\right) \right) =\frac{1}{M}\sum\limits_{i=1}^{M}\left( \hat{z}%
^{T}\left( k_{i}\right) z\left( k_{i}\right) -H^{\times }\left( z\left(
k_{i}\right) \right) \right) .  \label{eq3-19b}
\end{align}
\end{Theo}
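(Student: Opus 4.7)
The plan is to verify (\ref{eq3-19}) and (\ref{eq3-19a})--(\ref{eq3-19b}) by direct algebraic computation, exploiting the fact that the stacked-vector Bregman divergence in (\ref{eq3-17})--(\ref{eq3-18}) separates cleanly into a per-sample sum thanks to the $1/\sqrt{M}$ normalisation inside $z_{M}$ and $\hat{z}_{M}$, combined with the quadratic form of the generating function $H_{0,M}$. No new technical machinery beyond Theorems \ref{Theo3-2} and \ref{Theo3-3} applied pointwise will be needed.

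First, I would compute the right-hand side of the Bregman divergence definition (\ref{eq3-17}) using $H_{0,M}(z_{M})=\tfrac{1}{2}z_{M}^{T}z_{M}$, whose gradient is simply $z_{M}$; the linear cross-term collapses against $z_{M}^{T}z_{M}$ and produces the expression $\tfrac{1}{2}z_{M}^{T}z_{M}+\tfrac{1}{2}\hat{z}_{M}^{T}\hat{z}_{M}-\hat{z}_{M}^{T}z_{M}$ stated in (\ref{eq3-19}). Next, I would unpack the stacked inner products: by the definitions of $z_{M}$ and $\hat{z}_{M}$, each of $z_{M}^{T}z_{M}$, $\hat{z}_{M}^{T}\hat{z}_{M}$ and $\hat{z}_{M}^{T}z_{M}$ equals $\tfrac{1}{M}\sum_{i=1}^{M}$ of the corresponding per-sample inner product in $\mathbb{R}^{m+p}$. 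Rewriting the stacked divergence in this form shows that each summand is precisely $\tfrac{1}{2}\left\Vert z(k_{i})-\hat{z}(k_{i})\right\Vert ^{2}$, which by Theorem \ref{Theo3-3} is exactly the pointwise Bregman divergence $D_{H_{0}}[\hat{z}(k_{i}):z(k_{i})]$. This yields the equivalence $J=D_{H_{0}}[\hat{z}_{M}:z_{M}]$ of (\ref{eq3-16}) and (\ref{eq3-17}).

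For (\ref{eq3-19a})--(\ref{eq3-19b}), I would invoke Theorem \ref{Theo3-3} at each sampling instant to write $D_{H_{0}}[\hat{z}(k_{i}):z(k_{i})]=H_{0}(z(k_{i}))-H(z(k_{i}))$ with $H(z(k_{i}))=\hat{z}^{T}(k_{i})z(k_{i})-\tfrac{1}{2}\hat{z}^{T}(k_{i})\hat{z}(k_{i})$, then average over $i$ and identify the two averaged terms with $H_{0,M}(z_{M})$ and $H_{M}(z_{M})$ respectively. The second form in (\ref{eq3-19b}), substituting $H^{\times}(z(k_{i}))$ for $\tfrac{1}{2}\hat{z}^{T}(k_{i})\hat{z}(k_{i})$, is then a direct application of (\ref{eq3-6a}) of Theorem \ref{Theo3-2}, which asserts $H^{\times}(z)=\tfrac{1}{2}\hat{z}^{T}\hat{z}$ along trajectories of the normalised SIR.

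The only real obstacle is notational bookkeeping rather than any deep technical step. One must track that the $1/\sqrt{M}$ scaling inside the stacked vectors is exactly what converts an $\mathbb{R}^{M(m+p)}$ inner product into the arithmetic mean of $\mathbb{R}^{m+p}$ inner products, and that the symbols $H(z(k_{i}))$ and $H^{\times}(z(k_{i}))$ implicitly refer to the Hamiltonian and its Legendre dual evaluated at the instantaneous state $(x(k_{i}),\lambda(k_{i}))$ of the projection system (\ref{eq3-1}). Once this is handled, every remaining step is a straightforward linear algebra identity, and the theorem follows.
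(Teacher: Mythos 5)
Your proposal is correct and follows essentially the same route as the paper's proof: a direct expansion of the quadratic generating function to obtain (\ref{eq3-19}), the observation that the $1/\sqrt{M}$ scaling turns the stacked inner products into per-sample averages, and the identities $H(z)=\hat{z}^{T}z-\tfrac{1}{2}\hat{z}^{T}\hat{z}$, $H^{\times}(z)=\tfrac{1}{2}\hat{z}^{T}\hat{z}$ from Theorems \ref{Theo3-2} and \ref{Theo3-3} to get (\ref{eq3-19a})--(\ref{eq3-19b}). The only cosmetic difference is that you apply the pointwise Bregman/Legendre identity sample by sample and then average, whereas the paper first assembles the stacked pair $\left(z_{M},\hat{z}_{M}\right)$ as a Legendre transform with $H_{M},H_{M}^{\times}$ and invokes (\ref{eq2-22}) once at the stacked level; the computations are identical.
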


\begin{proof}
It is apparent that by definition%
\begin{align}
J& =\frac{1}{M}\sum\limits_{i=1}^{M}\left( H_{0}\left( \hat{z}\left(
k_{i}\right) \right) -H_{0}\left( z\left( k_{i}\right) \right) -\left( \frac{%
\partial H_{0}\left( z\left( k_{i}\right) \right) }{\partial z\left(
k_{i}\right) }\right) ^{T}\left( \hat{z}\left( k_{i}\right) -z\left(
k_{i}\right) \right) \right)   \notag \\
& =\frac{1}{2}z_{M}^{T}z_{M}+\frac{1}{2}\hat{z}_{M}^{T}\hat{z}%
_{M}-z_{M}^{T}\left( \hat{z}_{M}-z_{M}\right)   \notag \\
& =H_{0,M}\left( \hat{z}_{M}\right) -H_{0,M}\left( z_{M}\right) -\left( 
\frac{\partial H_{0,M}\left( z_{M}\right) }{\partial z_{M}}\right)
^{T}\left( \hat{z}_{M}-z_{M}\right) ,  \notag
\end{align}%
which proves (\ref{eq3-19}). Concerning (\ref{eq3-19a}) and (\ref{eq3-19b}),
notice that $\left( z_{M},\hat{z}_{M}\right) $ builds a dual pair of
Legendre transform with%
\begin{gather}
z_{M}=\frac{\partial H_{M}^{\times }\left( \hat{z}_{M}\right) }{\partial 
\hat{z}_{M}}=\left[ 
\begin{array}{c}
\frac{\partial H^{\times }\left( \hat{z}\left( k_{1}\right) \right) }{%
\partial \hat{z}\left( k_{1}\right) }\frac{1}{\sqrt{M}} \\ 
\vdots  \\ 
\frac{\partial H^{\times }\left( \hat{z}\left( k_{M}\right) \right) }{%
\partial \hat{z}\left( k_{M}\right) }\frac{1}{\sqrt{M}}%
\end{array}%
\right] =\left[ 
\begin{array}{c}
\frac{z\left( k_{1}\right) }{\sqrt{M}} \\ 
\vdots  \\ 
\frac{z\left( k_{M}\right) }{\sqrt{M}}%
\end{array}%
\right] ,\hat{z}_{M}=\frac{\partial H_{M}\left( z_{M}\right) }{\partial z_{M}%
},  \label{eq3-20a} \\
H_{M}\left( z_{M}\right) =\hat{z}_{M}^{T}z_{M}-H_{M}^{\times }\left( \hat{z}%
_{M}\right) =\frac{1}{M}\sum\limits_{i=1}^{M}\left( \hat{z}^{T}\left(
k_{i}\right) z\left( k_{i}\right) -H^{\times }\left( z\left( k_{i}\right)
\right) \right) ,  \label{eq3-20b}
\end{gather}%
where 
\begin{equation*}
H_{M}^{\times }\left( \hat{z}_{M}\right) =\frac{1}{M}\sum%
\limits_{i=1}^{M}H^{\times }\left( z\left( k_{i}\right) \right) .
\end{equation*}%
Consequently, the Bregman divergence (\ref{eq3-17}) can be written as%
\begin{equation*}
D_{H_{0}}\left[ \hat{z}_{M}:z_{M}\right] =H_{0,M}\left( z_{M}\right)
+H_{M}^{\times }\left( \hat{z}_{M}\right) -z_{M}^{T}\hat{z}%
_{M}=H_{0,M}\left( z_{M}\right) -H_{M}\left( z_{M}\right) 
\end{equation*}%
Hence, (\ref{eq3-19a}) and (\ref{eq3-19b}) are proved. 
\end{proof}

Theorem \ref{Theo3-5} and its proof not only reveal that the evaluation
function $J$ is a Bregman divergence, but also pose the basis of a potential
framework of data-driven fault detection. The centerpiece of this framework
is the Bregman divergence $D_{H_{0}}\left[ \hat{z}_{M}:z_{M}\right] $ and
the Legendre transform (\ref{eq3-20a})-(\ref{eq3-20b}), which would enable
to endow the data set 
\begin{equation*}
\mathcal{M}=\left\{ \left[ 
\begin{array}{c}
z\left( k_{1}\right) \\ 
\vdots \\ 
z\left( k_{M}\right)%
\end{array}%
\right] \in \mathcal{L}_{2},z\left( k_{i}\right) =\left[ 
\begin{array}{c}
u\left( k_{i}\right) \\ 
y\left( k_{i}\right)%
\end{array}%
\right] ,i=1,\cdots ,M\right\}
\end{equation*}%
with a dual Riemannian structure and thus a projection-based fault
detection. This issue is a part of our future work and will not be discussed
in the draft.

\bigskip

Before introducing the threshold setting scheme, we would like to place a
greater emphasis on the basic task of one-class fault detection. Simply
speaking, it is to detect anomalies arising during normal system operations.
The challenge is a reliable detection of those incipient anomalies as early
as possible in spite of possibly existing uncertainties in the system. Now,
we are in the position to study the threshold setting issue on the
assumption of the evaluation function $J$ given by (\ref{eq3-16}).

\bigskip

Recall that the Bregman divergence $D_{H_{0}}\left[ \hat{z}_{M}:z_{M}\right] 
$ gives the difference between the nominal Hamiltonian function $%
H_{0,M}\left( z_{M}\right) $ as a reference and the real Hamiltonian
function $H_{M}\left( z_{M}\right) $ over the time interval $\left[
t_{0},t_{1}\right] .$ During the ideal operation without faults and
uncertainty, 
\begin{equation*}
z_{M}=\hat{z}_{M},H_{M}^{\times }\left( z_{M}\right) =H_{0,M}\left(
z_{M}\right) =\frac{1}{2}z_{M}^{T}z_{M}\Longrightarrow J=D_{H_{0}}\left[ 
\hat{z}_{M}:z_{M}\right] =0.
\end{equation*}%
The term $z_{M}^{T}\hat{z}_{M}$ describes the influence of anomalies and
uncertainties causing $z_{M}\neq \hat{z}_{M}.$ Since%
\begin{equation*}
\hat{z}_{M}=\frac{\partial H_{M}\left( z_{M}\right) }{\partial z_{M}}
\end{equation*}%
gives the directional derivative of $H_{M}\left( z_{M}\right) $ in the
direction $z_{M},$ it is at maximum, when the direction $\hat{z}_{M}$
coincides with the gradient of $H_{M}\left( z_{M}\right) .$ In other words,
during the nominal operation, $H_{M}\left( z_{M}\right) $ is at maximum,%
\begin{equation*}
H_{M}\left( z_{M}\right) =\left( \frac{\partial H_{M}\left( z_{M}\right) }{%
\partial z_{M}}\right) ^{T}z_{M}-\frac{1}{2}\left( \frac{\partial
H_{M}\left( z_{M}\right) }{\partial z_{M}}\right) ^{T}\frac{\partial
H_{M}\left( z_{M}\right) }{\partial z_{M}}=\frac{1}{2}z_{M}^{T}z_{M}.
\end{equation*}
As uncertainties or faults emerge, it becomes smaller. Let $\gamma \in \left[
0.5,1\right] $ be the tolerant limit so that it is accepted as fault-free
operation when 
\begin{equation*}
\gamma \leq \frac{H_{M}\left( z_{M}\right) }{\frac{1}{2}z_{M}^{T}z_{M}}=%
\frac{2\hat{z}_{M}^{T}z_{M}-\hat{z}_{M}^{T}\hat{z}_{M}}{z_{M}^{T}z_{M}}.
\end{equation*}%
Accordingly, when 
\begin{equation*}
\frac{D_{H_{0}}\left[ \hat{z}_{M}:z_{M}\right] }{\frac{1}{2}z_{M}^{T}z_{M}}%
=1-\frac{H_{M}\left( z_{M}\right) }{\frac{1}{2}z_{M}^{T}z_{M}}=1-\frac{2\hat{%
z}_{M}^{T}z_{M}-\hat{z}_{M}^{T}\hat{z}_{M}}{z_{M}^{T}z_{M}}\leq 1-\gamma ,
\end{equation*}%
the system operation is identified as fault-free. In this context, the
threshold is set to be%
\begin{equation}
J_{th}=\frac{1}{2}\left( 1-\gamma \right) z_{M}^{T}z_{M}.
\end{equation}%
Here, $\gamma $ is assumed to be known or determined from the real
fault-free operation or simulation data, say e.g. $\gamma =0.95.$

\bigskip

At the end of this section, the online detection algorithm is summarised as
follows:

\begin{itemize}
\item Collection of data $z\left( k_{i}\right) $ and computation of $\hat{z}%
\left( k_{i}\right) ,i=1,\cdots ,M,$ according to (\ref{eq3-1a});

\item Computation of the evaluation function $J$ according to (\ref{eq3-19});

\item Detection decision logic:%
\begin{equation*}
\left\{ 
\begin{array}{l}
J\leq J_{th}\Longrightarrow \text{ fault-free} \\ 
J>J_{th}\Longrightarrow \text{ faulty.}%
\end{array}%
\right.
\end{equation*}
\end{itemize}

\section{An SKR-based fault detection scheme, and uncertainty/fault
estimation}

Thanks to the dual relation of SKR and SIR, most of the schemes and ideas of
our study in this section on the application of the normalised SKR-based
projection to fault detection are analogue to the ones presented in the
previous section. Hence, our descriptions and discussions on the related
issues will be shortened. On the other hand, intensive endeavours will be
made to highlight the projection-based estimation of system uncertainties
and faults aiming at insightfully understanding and interpreting the
normalised SKR-based projection.

\subsection{On the uncertainty corrupted in process data and its estimation}

In this subsection, we take a closer look at system $\left( D\Sigma _{%
\mathcal{K}}\right) ^{T}\circ \Sigma _{\mathcal{K}}$ and delineate its use
for estimating uncertainties/faults corrupted in the process data $\left(
u,y\right) .$ We begin with the Hamiltonian extension of $\Sigma _{\mathcal{K%
}},$%
\begin{equation}
\left\{ 
\begin{array}{l}
\dot{\hat{x}}=a_{K}(\hat{x})+B_{K}(\hat{x})z,z:=\left[ 
\begin{array}{c}
u \\ 
y%
\end{array}%
\right] \\ 
\dot{\lambda}=-\left( \frac{\partial a_{K}(\hat{x})}{\partial \hat{x}}+\frac{%
\partial B_{K}(\hat{x})}{\partial \hat{x}}z\right) ^{T}\lambda -\left( \frac{%
\partial c_{K}(\hat{x})}{\partial \hat{x}}+\frac{\partial D_{K}(\hat{x})}{%
\partial \hat{x}}\right) ^{T}z_{a} \\ 
r_{y}=c_{K}(\hat{x})+D_{K}(\hat{x})z \\ 
r_{y,a}=B_{K}^{T}(\hat{x})\lambda +D_{K}^{T}(\hat{x})z_{a},z_{a}\in \mathbb{R%
}^{m},r_{y,a}\in \mathbb{\ R}^{m+p}.%
\end{array}%
\right.  \label{eq4-0}
\end{equation}%
Connecting $r_{y}$ and $z_{a},$ and defining the Hamiltonian function, 
\begin{equation}
H\left( \hat{x},\lambda ,z\right) =\frac{1}{2}\left( c_{K}(\hat{x})+D_{K}(%
\hat{x})z\right) ^{T}\left( c_{K}(\hat{x})+D_{K}(\hat{x})z\right) +\lambda
^{T}\left( a_{K}(\hat{x})+B_{K}(\hat{x})z\right) ,  \label{eq4-1}
\end{equation}%
give the Hamiltonian system $\left( D\Sigma _{\mathcal{K}}\right) ^{T}\circ
\Sigma _{\mathcal{K}},$%
\begin{equation}
\left( D\Sigma _{\mathcal{K}}\right) ^{T}\circ \Sigma _{\mathcal{K}}:%
\begin{cases}
\dot{\hat{x}}=\frac{\partial H\left( \hat{x},\lambda ,z\right) }{\partial
\lambda } \\ 
\dot{\lambda}=-\frac{\partial H\left( \hat{x},\lambda ,z\right) }{\partial 
\hat{x}} \\ 
r_{y,a}=\frac{\partial H\left( \hat{x},\lambda ,z\right) }{\partial z}.%
\end{cases}
\label{eq4-2}
\end{equation}%
System $\left( D\Sigma _{\mathcal{K}}\right) ^{T}\circ \Sigma _{\mathcal{K}}$
is an $\mathcal{L}_{2}$ $\rightarrow $ $\mathcal{L}_{2}$ mapping which can
be interpreted as an estimation of the uncertainty corrupted in the data $z,$
as will be discussed in the sequel. For this reason, we adopt the notation 
\begin{equation*}
\hat{z}_{\Delta }:=r_{y,a}=\left( D\Sigma _{\mathcal{K}}\right) ^{T}\circ
\Sigma _{\mathcal{K}}\left( z\right) .
\end{equation*}%
Figure \ref{Fig3-2} sketches the configuration of $\left( D\Sigma _{\mathcal{%
K}}\right) ^{T}\circ \Sigma _{\mathcal{K}}$ schematically. 
\begin{figure}[h]
\centering\includegraphics[width=9cm,height=2cm]{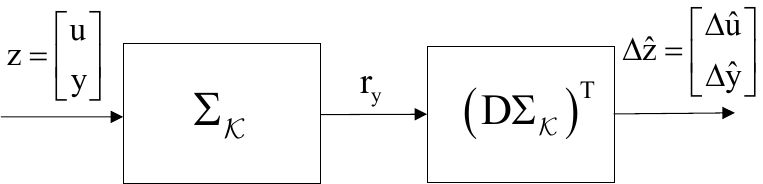}
\caption{SKR-based projection system $\left( D\Sigma _{\mathcal{K}}\right)
^{T}\circ \Sigma _{\mathcal{K}}$}
\label{Fig3-2}
\end{figure}

\subsubsection{Co-inner and uncertainties}

Recall that the normalised SKR is a co-inner. That means, by definition, the
system%
\begin{equation*}
r_{y}=\Sigma _{\mathcal{K}}\circ \left( D\Sigma _{\mathcal{K}}\right)
^{T}\left( z_{a}\right) ,
\end{equation*}%
configured by connecting $z$ and $r_{y,a}$ in the Hamiltonian extension of $%
\Sigma _{\mathcal{K}}$ (\ref{eq4-0}), is of the lossless property, namely, 
\begin{eqnarray*}
r_{y} &=&z_{a}, \\
\exists V(x) &\geq &0,V(0)=0,\text{ s.t. }V\left( x\left( t_{2}\right)
\right) -V\left( x\left( t_{1}\right) \right) =\frac{1}{2}%
\int_{t_{1}}^{t_{2}}\left( r_{y,a}^{T}r_{y,a}-r_{y}^{T}r_{y}\right) d\tau .
\end{eqnarray*}%
It leads to 
\begin{equation}
r_{y,a}=\left( D\Sigma _{\mathcal{K}}\right) ^{T}\left( z_{a}\right) =\left(
D\Sigma _{\mathcal{K}}\right) ^{T}\left( r_{y}\right) .  \label{eq4-0a}
\end{equation}%
Since the residual vector $r_{y}$ contains informations about uncertainties
in the system, relation (\ref{eq4-0a}) indicates that the output of the
system $\left( D\Sigma _{\mathcal{K}}\right) ^{T}\left( r_{y}\right) $ is of
the character of uncertainties. This observation draws our attention on the
(normalised) SKR and system uncertainties.

\subsubsection{System kernel and uncertainty models}

In advanced control and model-based fault diagnosis theories, handling of
system uncertainties is an essential \ issue. There are numerous ways to
model various types of model uncertainties and, based on them, the
influences (often in form of an upper-bound) of system uncertainties on the
system output or the residual signal can be estimated and used, e.g. for
robust controller design \cite%
{Francis87,Vinnicombe-book,Zhou96,vanderSchaftbook} or threshold setting in
a fault detection scheme \cite{Ding2008,Ding2020}. Below, we introduce a
model that describes uncertainties corrupted in the system data $\left(
u,y\right) $. This model underlines the insightful interpretation of the
projection $\hat{z}_{\Delta }=\left( D\Sigma _{\mathcal{K}}\right) ^{T}\circ
\Sigma _{\mathcal{K}}\left( z\right) $ as an estimator for
uncertainties/faults.

\bigskip

To begin with, a modified form of the SKR $\Sigma _{\mathcal{K}}$ (\ref%
{eq2-3}), 
\begin{equation}
\left\{ 
\begin{array}{l}
\dot{\hat{x}}=a_{K}(\hat{x})+B_{K}(\hat{x})\left[ 
\begin{array}{c}
u \\ 
y%
\end{array}%
\right] \\ 
r_{y}=c_{K}(\hat{x})+D_{K}(\hat{x})\left[ 
\begin{array}{c}
u \\ 
y%
\end{array}%
\right] \\ 
y=\hat{y}+W^{-1}(\hat{x})r_{y},\hat{y}=c(\hat{x})+D(\hat{x})u,%
\end{array}%
\right.  \label{eq4-10}
\end{equation}%
is introduced. Model (\ref{eq4-10}) describes the I/O-dynamics of the system
(\ref{eq2-1}), and thus is called kernel-based I/O model. Apparently, it
models the I/O dynamics both in the nominal and uncertain/faulty operations,
and is independent of all possible types of system uncertainties. A
plausible explanation of this distinguishing capability is the fact that
system uncertainties are implicitly corrupted in the process data $\left(
u,y\right) $ and represented by the residual vector $r_{y}.$ This raises the
question of whether it is possible to (i) model the uncertainties corrupted
in $\left( u,y\right) $, and (ii) estimate them. An apparent answer in case
of a linear system, as shown below, motivates our work.

\bigskip

\begin{example}
Let 
\begin{equation}
\left[ 
\begin{array}{c}
u \\ 
y%
\end{array}%
\right] =\left[ 
\begin{array}{c}
u_{0} \\ 
y_{0}%
\end{array}%
\right] +\left[ 
\begin{array}{c}
\Delta u \\ 
\Delta y%
\end{array}%
\right] ,  \label{eq4-11}
\end{equation}%
where $\left[ 
\begin{array}{c}
u_{0} \\ 
y_{0}%
\end{array}%
\right] \in \mathcal{I}_{\Sigma }$ denotes the nominal operation, and $\left[
\begin{array}{c}
\Delta u \\ 
\Delta y%
\end{array}%
\right] $ represents the uncertainties, for instance, caused by unknown
input vectors in the system. As described in Section \ref{sec2-1}, we have%
\begin{align*}
\left\{ 
\begin{array}{l}
\dot{\hat{x}}=A_{L}\hat{x}+\left[ 
\begin{array}{cc}
B_{L} & \text{ }-L%
\end{array}%
\right] \left[ 
\begin{array}{c}
u \\ 
y%
\end{array}%
\right]  \\ 
r_{y}=-W\left( C\hat{x}+\left[ 
\begin{array}{cc}
-D & \text{ }I%
\end{array}%
\right] \left[ 
\begin{array}{c}
u \\ 
y%
\end{array}%
\right] \right) 
\end{array}%
\right. & =\left\{ 
\begin{array}{l}
\dot{\hat{x}}_{\Delta }=A_{L}\hat{x}_{\Delta }+\left[ 
\begin{array}{cc}
B_{L} & \text{ }-L%
\end{array}%
\right] \left[ 
\begin{array}{c}
\Delta u \\ 
\Delta y%
\end{array}%
\right]  \\ 
r_{y}=-W\left( C\hat{x}_{\Delta }+\left[ 
\begin{array}{cc}
-D & \text{ }I%
\end{array}%
\right] \left[ 
\begin{array}{c}
\Delta u \\ 
\Delta y%
\end{array}%
\right] \right) ,%
\end{array}%
\right.  \\
A_{L}& =A-LC,B_{L}=B-LD,
\end{align*}%
since it holds $K_{G}I_{G}=0.$ Here, the state variables $\hat{x}$ and $\hat{%
x}_{\Delta }$ are generally different, and the two systems are equal in the
context that they deliver the identical output. 
\end{example}

Although for nonlinear systems the data corrupted with uncertainties cannot
be written in the additive form like (\ref{eq4-11}), this example inspires
the introduction of the following uncertainty model.

\begin{Def}
Given the normalised SKR (\ref{eq2-3}), the state space representation of $%
\Delta \Sigma _{\mathcal{K}},$ 
\begin{equation}
\Delta \Sigma _{\mathcal{K}}:\left\{ 
\begin{array}{l}
\dot{\hat{x}}_{\Delta }=a_{K}(\hat{x}_{\Delta })+B_{K}(\hat{x}_{\Delta })%
\left[ 
\begin{array}{c}
\Delta u \\ 
\Delta y%
\end{array}%
\right] \\ 
r_{y}=c_{K}(\hat{x}_{\Delta })+D_{K}(\hat{x}_{\Delta })\left[ 
\begin{array}{c}
\Delta u \\ 
\Delta y%
\end{array}%
\right] ,%
\end{array}%
\right. \hat{x}_{\Delta }(0)=0,\left[ 
\begin{array}{c}
\Delta u \\ 
\Delta y%
\end{array}%
\right] =:\Delta z\in \mathcal{L}_{2},  \label{eq4-12}
\end{equation}%
subject to 
\begin{equation}
\left\{ 
\begin{array}{l}
\dot{\hat{x}}_{\Delta }=a_{K}(\hat{x}_{\Delta })+B_{K}(\hat{x}_{\Delta })%
\left[ 
\begin{array}{c}
\Delta u \\ 
\Delta y%
\end{array}%
\right] \\ 
r_{y}=c_{K}(\hat{x}_{\Delta })+D_{K}(\hat{x}_{\Delta })\left[ 
\begin{array}{c}
\Delta u \\ 
\Delta y%
\end{array}%
\right]%
\end{array}%
\right. =\left\{ 
\begin{array}{l}
\dot{\hat{x}}=a_{K}(\hat{x})+B_{K}(\hat{x})\left[ 
\begin{array}{c}
u \\ 
y%
\end{array}%
\right] \\ 
r_{y}=c_{K}(\hat{x})+D_{K}(\hat{x})\left[ 
\begin{array}{c}
u \\ 
y%
\end{array}%
\right] ,%
\end{array}%
\right.  \label{eq4-13}
\end{equation}%
is called uncertainty model of the SKR with $\Delta z$ denoting the
uncertainties corrupted in the data $\left( u,y\right) .$
\end{Def}

The uncertainty model (\ref{eq4-12}) and the condition (\ref{eq4-13}) imply
that,

\begin{itemize}
\item on the one hand, the residual vector $r_{y}$ exclusively contains the
information about uncertainties in the system,

\item on the other hand, $r_{y}$ can be equivalently generated by the SKR (%
\ref{eq2-3}) driven by $\left( u,y\right) .$
\end{itemize}

Although $\Delta z$ is in general unknown, its value can be specified in two
(extreme) cases, which play an essential role in our subsequent study:

\begin{itemize}
\item corresponding to the nominal dynamics,%
\begin{equation}
\forall \left[ 
\begin{array}{c}
u \\ 
y%
\end{array}%
\right] \in \mathcal{I}_{\Sigma },\left[ 
\begin{array}{c}
\Delta u \\ 
\Delta y%
\end{array}%
\right] =0;  \label{eq4-21a}
\end{equation}

\item when it holds%
\begin{align}
\hat{z}_{\Delta }& =\left( D\Sigma _{\mathcal{K}}\right) ^{T}\circ \Sigma _{%
\mathcal{K}}\left( \left[ 
\begin{array}{c}
u \\ 
y%
\end{array}%
\right] \right) =\left( D\Sigma _{\mathcal{K}}\right) ^{T}\circ \Sigma _{%
\mathcal{K}}\left( \left[ 
\begin{array}{c}
\Delta u \\ 
\Delta y%
\end{array}%
\right] \right)  \label{eq4-21b} \\
& =\left[ 
\begin{array}{c}
u \\ 
y%
\end{array}%
\right] =\left[ 
\begin{array}{c}
\Delta u \\ 
\Delta y%
\end{array}%
\right] .  \notag
\end{align}
\end{itemize}

The first case is true due to the relation (\ref{eq2-4}) and reflects the
system response during nominal operations.\ The second one means, the data $%
\left( u,y\right) $ solely contains uncertainties, which, despite being
rarely the case in real operations, characterises a crucial data set, and
induces us to introduce the concept of manifold of uncertain data.

\begin{Def}
\label{Def4-2}Given the normalised SKR (\ref{eq2-3}), manifold $\mathcal{U}%
_{\Sigma }\subset \mathcal{L}_{2},$%
\begin{equation}
\mathcal{U}_{\Sigma }=\left\{ \left[ 
\begin{array}{c}
u \\ 
y%
\end{array}%
\right] ,\left[ 
\begin{array}{c}
u \\ 
y%
\end{array}%
\right] =\left( D\Sigma _{\mathcal{K}}\right) ^{T}(r),r\in \mathcal{L}%
_{2}\right\} ,
\end{equation}%
is called manifold of uncertain data.
\end{Def}

It is apparent that $\forall \left[ 
\begin{array}{c}
u \\ 
y%
\end{array}%
\right] \in \mathcal{U}_{\Sigma },$%
\begin{equation*}
\hat{z}_{\Delta }=\left( D\Sigma _{\mathcal{K}}\right) ^{T}\circ \Sigma _{%
\mathcal{K}}\left( \left[ 
\begin{array}{c}
u \\ 
y%
\end{array}%
\right] \right) =\left[ 
\begin{array}{c}
u \\ 
y%
\end{array}%
\right] ,
\end{equation*}%
i.e. the relation (\ref{eq4-21b}) holds. Thus, the data in $\mathcal{U}%
_{\Sigma }$ solely contain uncertainties. Note that 
\begin{equation}
\forall \left[ 
\begin{array}{c}
u \\ 
y%
\end{array}%
\right] \in \mathcal{L}_{2},\left[ 
\begin{array}{c}
\Delta \hat{u} \\ 
\Delta \hat{y}%
\end{array}%
\right] =\left( D\Sigma _{\mathcal{K}}\right) ^{T}\circ \Sigma _{\mathcal{K}%
}\left( \left[ 
\begin{array}{c}
u \\ 
y%
\end{array}%
\right] \right) \in \mathcal{U}_{\Sigma }  \label{eq4-14}
\end{equation}%
indicates that the operator $\left( D\Sigma _{\mathcal{K}}\right) ^{T}\circ
\Sigma _{\mathcal{K}}$ projects the data $\left( u,y\right) $ to the
manifold $\mathcal{U}_{\Sigma }.$ Observe\ that, as a result of the
uncertainty model (\ref{eq4-12}), 
\begin{equation}
\left[ 
\begin{array}{c}
\hat{u}_{\Delta } \\ 
\hat{y}_{\Delta }%
\end{array}%
\right] =\left( D\Sigma _{\mathcal{K}}\right) ^{T}\circ \Sigma _{\mathcal{K}%
}\left( \left[ 
\begin{array}{c}
u \\ 
y%
\end{array}%
\right] \right) =\left( D\Sigma _{\mathcal{K}}\right) ^{T}\circ \Sigma _{%
\mathcal{K}}\left( \left[ 
\begin{array}{c}
\Delta u \\ 
\Delta y%
\end{array}%
\right] \right) .  \label{eq4-16}
\end{equation}%
That is, $\left[ 
\begin{array}{c}
\hat{u}_{\Delta } \\ 
\hat{y}_{\Delta }%
\end{array}%
\right] $ is a projection of the uncertainty $\left[ 
\begin{array}{c}
\Delta u \\ 
\Delta y%
\end{array}%
\right] $ corrupted in the system data and thus serves as an estimate of $%
\left[ 
\begin{array}{c}
\Delta u \\ 
\Delta y%
\end{array}%
\right] $ in the context of the uncertainty model (\ref{eq4-12}). It can be,
for instance, applied for re-constructing the detected faults or examining
the data quality. To this end, we will study the normalised SKR and the
associated Hamiltonian projection system, 
\begin{equation}
\hat{z}_{\Delta }=\left[ 
\begin{array}{c}
\hat{u}_{\Delta } \\ 
\hat{y}_{\Delta }%
\end{array}%
\right] =\left( D\Sigma _{\mathcal{K}}\right) ^{T}\circ \Sigma _{\mathcal{K}%
}\left( \left[ 
\begin{array}{c}
u \\ 
y%
\end{array}%
\right] \right) ,  \label{eq4-18}
\end{equation}%
in the next subsection.

\begin{Rem}
We would like to emphasise that $\hat{z}_{\Delta }$ is an estimation of the
uncertainty $\left[ 
\begin{array}{c}
\Delta u \\ 
\Delta y%
\end{array}%
\right] $ corrupted in the system data. And the Hamiltonian projection
system (\ref{eq4-18}) is a projection of the data $\left( u,y\right) $ onto
the manifold of uncertain data $\mathcal{U}_{\Sigma }$.
\end{Rem}

\subsection{Normalised SKR and the associated Hamiltonian projection system}

We begin with the state space description of system (\ref{eq4-18}). Let 
\begin{equation}
H\left( \hat{x},\lambda ,z\right) =\frac{1}{2}\left( c_{K}(\hat{x})+D_{K}(%
\hat{x})z\right) ^{T}\left( c_{K}(\hat{x})+D_{K}(\hat{x})z\right) +\lambda
^{T}\left( a_{K}(\hat{x})+B_{K}(\hat{x})z\right)  \label{eq4-19}
\end{equation}%
be the Hamiltonian function, where, for the sake of simplicity, notation $%
\hat{x}$ instead of $\hat{x}_{\Delta }$ is abused. Accordingly, we have%
\begin{equation}
\left( D\Sigma _{\mathcal{K}}\right) ^{T}\circ \Sigma _{\mathcal{K}}:%
\begin{cases}
\dot{\hat{x}}=\frac{\partial H\left( \hat{x},\lambda ,z\right) }{\partial
\lambda } \\ 
\dot{\lambda}=-\frac{\partial H\left( \hat{x},\lambda ,z\right) }{\partial 
\hat{x}} \\ 
\hat{z}_{\Delta }=\frac{\partial H\left( \hat{x},\lambda ,z\right) }{%
\partial z}.%
\end{cases}
\label{eq4-20}
\end{equation}%
By some routine calculations, the Hamiltonian function (\ref{eq4-19}) is
further written into 
\begin{align}
H\left( \hat{x},\lambda ,z\right) & =\frac{1}{2}r_{y}^{T}r_{y}+\lambda ^{T}%
\dot{\hat{x}}  \notag \\
& =\hat{z}_{\Delta }^{T}\hat{z}_{\Delta }+\frac{1}{2}c_{K}^{T}(\hat{x})c_{K}(%
\hat{x})-\frac{1}{2}z^{T}D_{K}^{T}(\hat{x})D_{K}(\hat{x})z+\lambda ^{T}a_{K}(%
\hat{x})  \notag \\
& =\hat{z}_{\Delta }^{T}z-\frac{1}{2}\hat{z}_{\Delta }^{T}\hat{z}_{\Delta
}+c_{K}^{T}(\hat{x})c_{K}(\hat{x})+\lambda ^{T}B_{K}(\hat{x})D_{K}^{T}(\hat{x%
})c_{K}(\hat{x})  \notag \\
& +\frac{1}{2}\lambda ^{T}B_{K}(\hat{x})B_{K}^{T}(\hat{x})\lambda +\lambda
^{T}a_{K}(\hat{x}).  \label{eq4-2a}
\end{align}%
The calculation of the last equation follows from the relations 
\begin{equation*}
D_{K}(\hat{x})D_{K}^{T}(\hat{x})=I,D_{K}^{T}(\hat{x})D_{K}(\hat{x})D_{K}^{T}(%
\hat{x})D_{K}(\hat{x})=D_{K}^{T}(\hat{x})D_{K}(\hat{x}),
\end{equation*}%
which are true by setting $W(\hat{x})=\left( I+D(\hat{x})D^{T}(\hat{x}%
)\right) ^{-1/2}$ and are assumed in our subsequent study. The following
theorem gives the Hamiltonian functions $H\left( \hat{x},\lambda ,z\right) $
and $H^{\times }\left( \hat{x},\lambda ,\hat{z}_{\Delta }\right) $
corresponding to the normalised SKR.

\begin{Theo}
\label{Theo4-1}Given the normalised SKR (\ref{eq2-3}) and the corresponding
Hamiltonian system (\ref{eq4-20}), then 
\begin{align}
H\left( \hat{x},\lambda ,z\right) & =\hat{z}_{\Delta }^{T}z-\frac{1}{2}\hat{z%
}_{\Delta }^{T}\hat{z}_{\Delta },  \label{eq4-3} \\
H^{\times }\left( \hat{x},\lambda ,\hat{z}_{\Delta }\right) & =\frac{1}{2}%
\hat{z}_{\Delta }^{T}\hat{z}_{\Delta }.  \label{eq4-4}
\end{align}
\end{Theo}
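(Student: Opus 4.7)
The plan is to start from the rewritten form of the Hamiltonian already derived in equation (4.2a), which expresses $H$ as
\begin{equation*}
H(\hat{x},\lambda,z) = \hat{z}_\Delta^T z - \tfrac{1}{2}\hat{z}_\Delta^T \hat{z}_\Delta + \underbrace{c_K^T(\hat{x})c_K(\hat{x}) + \lambda^T B_K(\hat{x})D_K^T(\hat{x})c_K(\hat{x})}_{=:T_1} + \underbrace{\tfrac{1}{2}\lambda^T B_K(\hat{x})B_K^T(\hat{x})\lambda + \lambda^T a_K(\hat{x})}_{=:T_2}.
\end{equation*}
The proof of (\ref{eq4-3}) then reduces to showing that $T_1 + T_2$ vanishes along the trajectories of the Hamiltonian system (\ref{eq4-20}) when the SKR is normalised. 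My first step is therefore to invoke Lemma \ref{Lemma2-1} applied to the SKR quadruple $(a_K,B_K,c_K,D_K)$: the co-inner property furnishes a nonnegative storage function $V(\hat{x})$ satisfying
\begin{align*}
V_{\hat{x}}^T(\hat{x})a_K(\hat{x}) + \tfrac{1}{2}V_{\hat{x}}^T(\hat{x})B_K(\hat{x})B_K^T(\hat{x})V_{\hat{x}}(\hat{x}) &= 0,\\
c_K^T(\hat{x}) + V_{\hat{x}}^T(\hat{x})B_K(\hat{x})D_K^T(\hat{x}) &= 0,\\
D_K(\hat{x})D_K^T(\hat{x}) &= I.
\end{align*}

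Next, I would identify the co-state $\lambda$ of the Hamiltonian extension with $V_{\hat{x}}(\hat{x})$. This is the standard choice used in the Hamiltonian-system characterisation of normalised factorisations (it is precisely what makes the inner/co-inner Riccati-type equation (\ref{eq2-15c}) appear as the vanishing-energy condition for the extended system), and under this identification the two co-inner identities above translate directly into $T_2 = 0$ and $\lambda^T B_K(\hat{x})D_K^T(\hat{x})c_K(\hat{x}) = -c_K^T(\hat{x})c_K(\hat{x})$, so that $T_1 = 0$ as well. Substituting back into (4.2a) leaves $H(\hat{x},\lambda,z) = \hat{z}_\Delta^T z - \tfrac{1}{2}\hat{z}_\Delta^T \hat{z}_\Delta$, which is (\ref{eq4-3}).

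The dual identity (\ref{eq4-4}) then follows mechanically from the Legendre transform. Since $\hat{z}_\Delta = \partial H/\partial z$ and $z = \partial H^\times/\partial \hat{z}_\Delta$ form a dual pair (see Subsection \ref{subsec2-4}), the Legendre dual is defined by $H^\times(\hat{x},\lambda,\hat{z}_\Delta) = \hat{z}_\Delta^T z - H(\hat{x},\lambda,z)$. Inserting (\ref{eq4-3}) immediately gives $H^\times(\hat{x},\lambda,\hat{z}_\Delta) = \tfrac{1}{2}\hat{z}_\Delta^T \hat{z}_\Delta$, completing the proof.

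The only delicate step is the identification $\lambda = V_{\hat{x}}(\hat{x})$; this is the nonlinear analogue of picking the stabilising Riccati solution in the LTI inner/co-inner construction, and it must be justified by noting that the normalised SKR is constructed precisely so that this trajectory of the co-state coincides with the gradient of the storage function produced by Theorem \ref{Theo2-1}. Once this is in place, everything else is direct algebraic substitution, so I do not anticipate further obstacles.
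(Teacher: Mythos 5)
Your proposal is correct and follows essentially the same route as the paper's own proof: both start from the expansion (\ref{eq4-2a}), set $\lambda =V_{\hat{x}}(\hat{x})$ with $V$ the storage function solving (\ref{eq2-15c}), use the co-inner identities of Lemma \ref{Lemma2-1}/Theorem \ref{Theo2-1} to kill the residual terms (your $T_{1}$ and $T_{2}$), and then obtain (\ref{eq4-4}) from the Legendre relation $H^{\times }=\hat{z}_{\Delta }^{T}z-H$. The only difference is presentational: you make the identification $\lambda =V_{\hat{x}}(\hat{x})$ and the vanishing of each term explicit, which the paper states more tersely.
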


\begin{proof}
It follows from Lemma \ref{Lemma2-1} and Theorem \ref{Theo2-1} that for $%
\lambda =V_{\hat{x}}\left( \hat{x}\right) $ with $V\left( \hat{x}\right) $
solving (\ref{eq2-15c}) 
\begin{gather*}
c_{K}^{T}(\hat{x})c_{K}(\hat{x})+V_{\hat{x}}\left( \hat{x}\right) B_{K}(\hat{%
x})D_{K}^{T}(\hat{x})c_{K}(\hat{x})=0, \\
V_{\hat{x}}^{T}\left( \hat{x}\right) a_{K}(\hat{x})+\frac{1}{2}V_{\hat{x}%
}^{T}\left( \hat{x}\right) B_{K}(\hat{x})B_{K}^{T}(\hat{x})V_{\hat{x}}\left( 
\hat{x}\right) =0 \\
\Longrightarrow H^{\times }\left( \hat{x},\lambda ,\hat{z}_{\Delta }\right) =%
\frac{1}{2}\hat{z}_{\Delta }^{T}\hat{z}_{\Delta },H\left( \hat{x},\lambda
,z\right) =\hat{z}_{\Delta }^{T}z-H^{\times }\left( \hat{x},\lambda ,\hat{z}%
_{\Delta }\right) =\hat{z}_{\Delta }^{T}z-\frac{1}{2}\hat{z}_{\Delta }^{T}%
\hat{z}_{\Delta }.
\end{gather*}
\end{proof}

For our study, we are interested in the Hamiltonian functions corresponding
to the two extreme cases: (i) $z\in \mathcal{I}_{\Sigma },$ (ii) $\forall
z\in \mathcal{U}_{\Sigma }.$ The following corollary gives the solutions.

\begin{Corol}
\label{col4-1}Given the normalised SKR (\ref{eq2-3}) and the corresponding
Hamiltonian system (\ref{eq4-20}), then%
\begin{gather}
\forall z=\left[ 
\begin{array}{c}
u \\ 
y%
\end{array}%
\right] \in \mathcal{I}_{\Sigma },H^{\times }\left( x,\lambda ,\hat{z}%
_{\Delta }\right) =H\left( x,\lambda ,z\right) =0,  \label{eq4-21} \\
\forall z=\left[ 
\begin{array}{c}
u \\ 
y%
\end{array}%
\right] \in \mathcal{U}_{\Sigma },H\left( \hat{x},\lambda ,z\right)
=H^{\times }\left( \hat{x},\lambda ,\hat{z}_{\Delta }\right) =\frac{1}{2}%
\hat{z}_{\Delta }^{T}\hat{z}_{\Delta }=\frac{1}{2}z^{T}z.  \label{eq4-22}
\end{gather}
\end{Corol}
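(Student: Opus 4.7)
The plan is to reduce both claims to Theorem~\ref{Theo4-1}, which for the normalised SKR already gives the identities
\begin{equation*}
H(\hat{x},\lambda ,z)=\hat{z}_{\Delta }^{T}z-\tfrac{1}{2}\hat{z}_{\Delta }^{T}\hat{z}_{\Delta },\qquad H^{\times }(\hat{x},\lambda ,\hat{z}_{\Delta })=\tfrac{1}{2}\hat{z}_{\Delta }^{T}\hat{z}_{\Delta }.
\end{equation*}
Once these are at hand, the entire content of the corollary collapses to computing the projection $\hat{z}_{\Delta }=(D\Sigma _{\mathcal{K}})^{T}\circ \Sigma _{\mathcal{K}}(z)$ in the two special situations and substituting.

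For the first case, I would exploit relation (\ref{eq2-4}), namely $\Sigma _{\mathcal{K}}^{x_{0}}\circ \Sigma _{\mathcal{I}}=0$. Since $z\in \mathcal{I}_{\Sigma }$ means $z=\Sigma _{\mathcal{I}}(v)$ for some $v\in \mathcal{L}_{2}$, we obtain $\Sigma _{\mathcal{K}}(z)=0$, and therefore $\hat{z}_{\Delta }=(D\Sigma _{\mathcal{K}})^{T}(0)=0$. Plugging $\hat{z}_{\Delta }=0$ into the two Hamiltonian expressions from Theorem~\ref{Theo4-1} immediately yields $H=H^{\times }=0$, which is exactly (\ref{eq4-21}).

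For the second case, the key observation is that on $\mathcal{U}_{\Sigma }$ the operator $(D\Sigma _{\mathcal{K}})^{T}\circ \Sigma _{\mathcal{K}}$ acts as the identity. This is where the normalisation enters: the SKR being normalised means $\Sigma _{\mathcal{K}}$ is co-inner, so by Definition~\ref{Def2-2} the Hamiltonian composition $\Sigma _{\mathcal{K}}\circ (D\Sigma _{\mathcal{K}})^{T}$ satisfies $y=u_{a}$, i.e.\ it is the identity on its input. Hence for $z=(D\Sigma _{\mathcal{K}})^{T}(r)\in \mathcal{U}_{\Sigma }$,
\begin{equation*}
\hat{z}_{\Delta }=(D\Sigma _{\mathcal{K}})^{T}\circ \Sigma _{\mathcal{K}}\circ (D\Sigma _{\mathcal{K}})^{T}(r)=(D\Sigma _{\mathcal{K}})^{T}(r)=z.
\end{equation*}
Substituting $\hat{z}_{\Delta }=z$ into Theorem~\ref{Theo4-1} then gives $H^{\times }=\tfrac{1}{2}z^{T}z$ and $H=z^{T}z-\tfrac{1}{2}z^{T}z=\tfrac{1}{2}z^{T}z$, proving (\ref{eq4-22}).

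The only step that is not entirely mechanical is the use of the co-inner/lossless property to conclude that $\Sigma _{\mathcal{K}}\circ (D\Sigma _{\mathcal{K}})^{T}$ acts as the identity, which is the dual of the idempotency argument used in Theorem~\ref{Theo3-1} for the SIR side. Since this identity was already invoked (though in passing) in the discussion surrounding Definition~\ref{Def4-2} and in relation (\ref{eq4-21b}), the corollary follows without further technical machinery, and the proof can be written in just a few lines by chaining Theorem~\ref{Theo4-1}, equation~(\ref{eq2-4}), and the co-inner property of the normalised SKR.
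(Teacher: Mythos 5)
Your proposal is correct and follows essentially the same route as the paper: both cases are reduced to Theorem \ref{Theo4-1} by computing $\hat{z}_{\Delta}$, using (\ref{eq2-4}) to get $r_{y}=0$ and hence $\hat{z}_{\Delta}=0$ on $\mathcal{I}_{\Sigma}$, and using the fact that $\left( D\Sigma _{\mathcal{K}}\right) ^{T}\circ \Sigma _{\mathcal{K}}$ acts as the identity on $\mathcal{U}_{\Sigma}$ (which the paper takes from Definition \ref{Def4-2} and the surrounding discussion, and which you justify slightly more explicitly via the co-inner property). The extra detail you supply for the second case is harmless and, if anything, makes the argument more self-contained.
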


\begin{proof}
It follows from (\ref{eq2-4}) (and (\ref{eq4-21a})) that%
\begin{equation*}
\forall z=\left[ 
\begin{array}{c}
u \\ 
y%
\end{array}%
\right] \in \mathcal{I}_{\Sigma },\left[ 
\begin{array}{c}
\Delta u \\ 
\Delta y%
\end{array}%
\right] =0,r_{y}=0\Longrightarrow \hat{z}_{\Delta }=0.
\end{equation*}%
Hence, according to Theorem \ref{Theo4-1},%
\begin{equation*}
H^{\times }\left( x,\lambda ,\hat{z}_{\Delta }\right) =H\left( x,\lambda
,z\right) =0.
\end{equation*}%
By Definition \ref{Def4-2}, 
\begin{equation*}
\forall z\in \mathcal{U}_{\Sigma },\hat{z}_{\Delta }=\left( D\Sigma _{%
\mathcal{K}}\right) ^{T}\circ \Sigma _{\mathcal{K}}\left( z\right) =\left(
D\Sigma _{\mathcal{K}}\right) ^{T}\circ \Sigma _{\mathcal{K}}\left( \Delta
z\right) =z=\Delta z.
\end{equation*}%
Thus, it follows from Theorem \ref{Theo4-1} that 
\begin{equation*}
H\left( \hat{x},\lambda ,z\right) =H^{\times }\left( \hat{x},\lambda ,\hat{z}%
_{\Delta }\right) =\frac{1}{2}\hat{z}_{\Delta }^{T}\hat{z}_{\Delta }=\frac{1%
}{2}z^{T}z.
\end{equation*}
\end{proof}

\bigskip

Corollary \ref{col4-1} describes two special operation situations, in which 
\begin{equation*}
H^{\times }\left( x,\lambda ,\hat{z}_{\Delta }\right) =H\left( x,\lambda
,z\right) .
\end{equation*}%
The first one given by (\ref{eq4-21}) corresponds to the nominal dynamics,
while the second one with (\ref{eq4-22}) reflects the situation, when the
data exclusively consist of uncertainties.

\subsection{Applications to fault detection and uncertainty estimation}

Based on the results presented in the previous subsections, Bregman
divergence-based fault detection schemes as\ well as a further study on
uncertainty estimation are addressed in this subsection.

\subsubsection{Bregman divergence-based fault detection}

To begin with, we discuss about the application of the Bregman divergence
from the estimated uncertainty in the data, $\hat{z}_{\Delta },$ to the
manifold $\mathcal{I}_{\Sigma }$ for the purpose of fault detection.
Consider the uncertainty model (\ref{eq4-12}) and recall that 
\begin{equation}
\forall \left[ 
\begin{array}{c}
u \\ 
y%
\end{array}%
\right] \in \mathcal{I}_{\Sigma },\left[ 
\begin{array}{c}
\Delta u \\ 
\Delta y%
\end{array}%
\right] =\left[ 
\begin{array}{c}
\hat{u}_{\Delta } \\ 
\hat{y}_{\Delta }%
\end{array}%
\right] =0.  \label{eq4-6a}
\end{equation}%
To detect possible uncertainties in the data $\left( u,y\right) ,$ the
Hamiltonian function 
\begin{equation*}
H^{\times }\left( \hat{z}_{\Delta }\right) :=H^{\times }\left( \hat{x}%
,\lambda ,\hat{z}_{\Delta }\right) =\frac{1}{2}\hat{z}_{\Delta }^{T}\hat{z}%
_{\Delta }
\end{equation*}%
can be used. Note that the Bregman divergence from $\hat{z}_{\Delta }$ to
the manifold $\mathcal{I}_{\Sigma }$ derived from $H^{\times }\left( \hat{z}%
_{\Delta }\right) $ is given by, due to (\ref{eq4-6a}), 
\begin{equation}
D_{H^{\times }}\left[ \hat{z}_{\Delta }:0\right] =H^{\times }\left( \hat{z}%
_{\Delta }\right) =\frac{1}{2}\hat{z}_{\Delta }^{T}\hat{z}_{\Delta }.
\label{eq4-6}
\end{equation}%
It is of interest to notice that the Bregman divergence (\ref{eq4-6}) is
true for all $\left( u,y\right) $ belonging to $\mathcal{I}_{\Sigma }.$
Moreover, $\hat{z}_{\Delta }=\left( D\Sigma _{\mathcal{K}}\right) ^{T}\circ
\Sigma _{\mathcal{K}}\left( z\right) .$ In this sense, $D_{H^{\times }}\left[
\hat{z}_{\Delta }:0\right] $ can be interpreted as a divergence from the
uncertainty in the data to $\mathcal{I}_{\Sigma }.$

\bigskip

Analogue to the discussion on the implementation issues in Subsection \ref%
{subsec3-3}, we shortly summarise the major steps to apply the Bregman
divergence (\ref{eq4-6}) for a practical fault detection. Let $z\left(
k_{i}\right) ,i=1,\cdots ,M,$ be the data collected over the time interval $%
\left[ t_{0},t_{1}\right] $ and 
\begin{align}
J& =\frac{1}{M}\sum\limits_{i=1}^{M}D_{H^{\times }}\left[ \hat{z}_{\Delta
}\left( k_{i}\right) :0\right] =\frac{1}{2M}\sum\limits_{i=1}^{M}\hat{z}%
_{\Delta }^{T}\left( k_{i}\right) \hat{z}_{\Delta }\left( k_{i}\right) =%
\frac{1}{2}\hat{z}_{\Delta M}^{T}\hat{z}_{\Delta M},  \label{eq4-8} \\
\hat{z}_{\Delta M}& =\left[ 
\begin{array}{c}
\frac{\hat{z}_{\Delta }\left( k_{1}\right) }{\sqrt{M}} \\ 
\vdots \\ 
\frac{\hat{z}_{\Delta }\left( k_{M}\right) }{\sqrt{M}}%
\end{array}%
\right] \in \mathbb{R}^{M\left( m+p\right) },z_{M}=\left[ 
\begin{array}{c}
\frac{z\left( k_{1}\right) }{\sqrt{M}} \\ 
\vdots \\ 
\frac{z\left( k_{M}\right) }{\sqrt{M}}%
\end{array}%
\right] \in \mathbb{R}^{M\left( m+p\right) },  \notag
\end{align}%
be the corresponding evaluation function. It is straightforward that the
evaluation function (\ref{eq4-8}) is a Bregman divergence with the
generating function $\frac{1}{2}\hat{z}_{\Delta M}^{T}\hat{z}_{\Delta M},$
namely%
\begin{equation*}
J=D_{H^{\times }}\left[ z_{M}:0\right] =\frac{1}{2}\hat{z}_{\Delta M}^{T}%
\hat{z}_{\Delta M}.
\end{equation*}%
To set the threshold, consider the ratio%
\begin{equation*}
\frac{\hat{z}_{\Delta M}^{T}\hat{z}_{\Delta M}}{z_{M}^{T}z_{M}}
\end{equation*}%
that gives an estimate of the relative change in the process data $z_{M}.$
Let $\alpha \in \left( 0,1\right) $ be the tolerant limit so that it is
accepted as fault-free operation when 
\begin{equation*}
\frac{\hat{z}_{\Delta M}^{T}\hat{z}_{\Delta M}}{z_{M}^{T}z_{M}}\leq \alpha .
\end{equation*}%
Accordingly, the threshold is set to be%
\begin{equation}
J_{th}=\frac{\alpha }{2}z_{M}^{T}z_{M}.  \label{eq4-9}
\end{equation}%
Below is the online detection algorithm:

\begin{itemize}
\item Collection of data $z\left( k_{i}\right) $ and computation of $\hat{z}%
\left( k_{i}\right) ,i=1,\cdots ,M,$ according to (\ref{eq4-2});

\item Computation of the evaluation function $J$ according to (\ref{eq4-8});

\item Detection decision logic:%
\begin{equation*}
\left\{ 
\begin{array}{l}
J\leq J_{th}\Longrightarrow \text{ fault-free} \\ 
J>J_{th}\Longrightarrow \text{ faulty.}%
\end{array}%
\right.
\end{equation*}
\end{itemize}

\subsubsection{On the projection-based estimation\label{subsec4-3-2}}

Having studied the use of the projection, $\hat{z}_{\Delta }=\left( D\Sigma
_{\mathcal{K}}\right) ^{T}\left( r_{y}\right) =\left( D\Sigma _{\mathcal{K}%
}\right) ^{T}\circ \Sigma _{\mathcal{K}}\left( z\right) ,$ for the fault
detection purpose, we now comprehend $\hat{z}_{\Delta }$ as an estimate of
uncertainties corrupted in the data. To this end, two issues are addressed:
(i) examining $\hat{z}_{\Delta }$ as a projection of the system data on the
uncertain data manifold $\mathcal{U}_{\Sigma },$ from the viewpoint of
information geometry, and (ii) considering $\hat{z}_{\Delta }$ as an
estimate of $\Delta z$ subject to the uncertainty model (\ref{eq4-12}) and
examining the corresponding residual $r_{y}.$

\bigskip

The following theorem is a dual result of Theorem \ref{Theo3-4} and gives $%
\hat{z}_{\Delta }$ an insightful interpretation from the information
geometric viewpoint.

\begin{Theo}
\label{Theo4-3}Consider the normalised SKR (\ref{eq2-3}) and the Bregman
divergence $D_{H^{\times }}\left[ z:\mathcal{U}_{\Sigma }\right] ,$ 
\begin{align*}
D_{H^{\times }}\left[ z:\mathcal{U}_{\Sigma }\right] & =\min_{\Delta
z_{0}\in \mathcal{U}_{\Sigma }}D_{H^{\times }}\left[ z:\Delta z_{0}\right] ,
\\
D_{H^{\times }}\left[ z:\Delta z_{0}\right] & :=H^{\times }\left( z\right)
-H^{\times }(\Delta z_{0})-\left( \frac{\partial H^{\times }(\Delta z_{0}))}{%
\partial \Delta z_{0}}\right) ^{T}\left( z-\Delta z_{0}\right) , \\
H^{\times }\left( z\right) & =\hat{z}_{\Delta }^{T}z-H\left( \hat{x},\lambda
,z\right) =\frac{1}{2}\hat{z}_{\Delta }^{T}\hat{z}_{\Delta }.
\end{align*}%
The projection $\hat{z}_{\Delta }$ delivered by the corresponding
Hamiltonian system (\ref{eq4-20}) is a geodesic projection of $z$ onto $%
\mathcal{U}_{\Sigma }.$
\end{Theo}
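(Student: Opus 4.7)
The plan is to mirror the proof of Theorem \ref{Theo3-4}, replacing the roles of the normalised SIR and the image manifold $\mathcal{I}_{\Sigma}$ by those of the normalised SKR and the uncertainty manifold $\mathcal{U}_{\Sigma}$, and invoking Corollary \ref{col4-1} in place of Corollary \ref{col3-1} / Theorem \ref{Theo3-2}. The first step is to invoke the projection theorem (Theorem \ref{Theo2-2}): the geodesic projection of $z$ onto $\mathcal{U}_{\Sigma}$ is the unique minimiser
\begin{equation*}
\Delta z^{\ast }=\arg \min_{\Delta z_{0}\in \mathcal{U}_{\Sigma
}}D_{H^{\times }}\left[ z:\Delta z_{0}\right],
\end{equation*}
so it suffices to show $\Delta z^{\ast }=\hat{z}_{\Delta }$.

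Next I would rewrite the divergence using the Fenchel--Young identity (\ref{eq2-22}). For any $\Delta z_{0}\in \mathcal{U}_{\Sigma }$, Definition \ref{Def4-2} together with (\ref{eq4-22}) of Corollary \ref{col4-1} gives $(D\Sigma _{\mathcal{K}})^{T}\circ \Sigma _{\mathcal{K}}(\Delta z_{0})=\Delta z_{0}$, so $\Delta z_{0}$ is its own Legendre dual and $H\left( \Delta z_{0}\right) =H^{\times }\left( \Delta z_{0}\right) =\tfrac{1}{2}\Delta z_{0}^{T}\Delta z_{0}$. Consequently
\begin{equation*}
D_{H^{\times }}\left[ z:\Delta z_{0}\right] =H^{\times }(z)+H\left( \Delta
z_{0}\right) -z^{T}\Delta z_{0}=H^{\times }(z)+\tfrac{1}{2}\Delta
z_{0}^{T}\Delta z_{0}-z^{T}\Delta z_{0},
\end{equation*}
so the minimisation collapses to $\min_{\Delta z_{0}\in \mathcal{U}_{\Sigma }}\bigl(\tfrac{1}{2}\Delta z_{0}^{T}\Delta z_{0}-z^{T}\Delta z_{0}\bigr)$.

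I would then parameterise $\mathcal{U}_{\Sigma }$ through the Hamiltonian system (\ref{eq4-20}): any $\Delta z_{0}\in \mathcal{U}_{\Sigma }$ is a fixed point of $(D\Sigma _{\mathcal{K}})^{T}\circ \Sigma _{\mathcal{K}}$, i.e.\ from the expression for $\partial H/\partial z$ derived from (\ref{eq4-19}),
\begin{equation*}
\Delta z_{0}=D_{K}^{T}(\hat{x})D_{K}(\hat{x})\Delta z_{0}+D_{K}^{T}(\hat{x})c_{K}(\hat{x})+B_{K}^{T}(\hat{x})\lambda,
\end{equation*}
which is the SKR analogue of (\ref{eq3-7a}). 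Together with the co-inner identity $D_{K}(\hat{x})D_{K}^{T}(\hat{x})=I$ (forcing $D_{K}^{T}D_{K}$ to be an orthogonal projector) and the HJE (\ref{eq2-15c}) that makes $\bigl(D_{K}^{T}c_{K}+B_{K}^{T}V_{\hat{x}}\bigr)^{T}D_{K}^{T}=0$ hold for $\lambda =V_{\hat{x}}(\hat{x})$, one obtains the orthogonality dual to (\ref{eq3-7b}). Setting the gradient in $\Delta z_{0}$ to zero then reduces the first-order condition to $D_{K}^{T}(\hat{x})D_{K}(\hat{x})\Delta z^{\ast }=D_{K}^{T}(\hat{x})D_{K}(\hat{x})z$, which is satisfied by $\Delta z^{\ast }=\hat{z}_{\Delta }$ since $\hat{z}_{\Delta }$ is by construction the output of (\ref{eq4-20}) on input $z$ and lies in $\mathcal{U}_{\Sigma }$ by (\ref{eq4-14}).

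The principal obstacle is bookkeeping: verifying that the state/costate $(\hat{x},\lambda )$ generated along the trajectory of $\Delta z_{0}\in \mathcal{U}_{\Sigma }$ coincides with those along the trajectory of $z$ sufficiently closely for the algebraic identity $D_{K}^{T}D_{K}\hat{z}_{\Delta }=D_{K}^{T}D_{K}z$ to be meaningful pointwise. This is the nonlinear analogue of the subspace identity used implicitly in Theorem \ref{Theo3-4}, and its rigorous statement requires appealing to the co-inner HJE of Theorem \ref{Theo2-1} together with the fixed-point characterisation of $\mathcal{U}_{\Sigma }$ given just after Definition \ref{Def4-2}. Once that identification is in place, substituting $\hat{z}_{\Delta }$ into the reduced quadratic cost and collecting terms yields the minimum, and Theorem \ref{Theo2-2} delivers the geodesic-projection interpretation.
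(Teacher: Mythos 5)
Your proposal follows essentially the same route as the paper's proof: invoke the projection theorem (Theorem \ref{Theo2-2}), use (\ref{eq2-22}) together with the self-duality of points of $\mathcal{U}_{\Sigma }$ and Corollary \ref{col4-1} to reduce the divergence minimisation to the quadratic $\min_{\Delta z_{0}}\left( \tfrac{1}{2}\Delta z_{0}^{T}\Delta z_{0}-z^{T}\Delta z_{0}\right) ,$ then use the fixed-point parameterisation of $\mathcal{U}_{\Sigma }$, the co-inner identity $D_{K}D_{K}^{T}=I$ and the orthogonality from the HJE to show the first-order condition $D_{K}^{T}D_{K}\Delta z^{\ast }=D_{K}^{T}D_{K}z$ is solved by $\hat{z}_{\Delta }$. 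The state/costate bookkeeping issue you flag is a real subtlety, but the paper's own (deliberately abbreviated) proof treats it exactly as implicitly as you do, so your argument matches theirs in both structure and level of rigour.
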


\begin{proof}
The proof is similar to the one of Theorem \ref{Theo3-4}. Thus, only key
steps are described. According to Theorem \ref{Theo2-2}, the geodesic
projection of $z$ onto $\mathcal{U}_{\Sigma }$ is the vector $\Delta z^{\ast
}$ leading to 
\begin{equation*}
\min_{\Delta z_{0}\in \mathcal{U}_{\Sigma }}D_{H^{\times }}\left[ z:\Delta
z_{0}\right] =D_{H^{\times }}\left[ z:\Delta z^{\ast }\right] .
\end{equation*}%
Following (\ref{eq2-22}) and noticing 
\begin{equation*}
\forall \Delta z_{0}\in \mathcal{U}_{\Sigma },\Delta z_{0}^{\times }=\hat{z}%
_{\Delta 0}=\left( D\Sigma _{\mathcal{K}}\right) ^{T}\circ \Sigma _{\mathcal{%
K}}\left( \Delta z_{0}\right) =\Delta z_{0},
\end{equation*}%
it turns out%
\begin{equation*}
D_{H^{\times }}\left[ z:\Delta z_{0}\right] =H^{\times }\left( z\right)
+H\left( \Delta z_{0}^{\times }\right) -z^{T}\Delta z_{0}^{\times
}=H^{\times }\left( z\right) +H\left( \Delta z_{0}\right) -z^{T}\Delta z_{0}.
\end{equation*}%
It yields, by Theorem \ref{Theo4-1} and Corollary \ref{col4-1}, 
\begin{equation*}
H\left( \Delta z_{0}\right) =\frac{1}{2}\Delta z_{0}^{T}\Delta
z_{0}\Longrightarrow D_{H^{\times }}\left[ z:\Delta z_{0}\right] =H^{\times
}\left( z\right) +\frac{1}{2}\Delta z_{0}^{T}\Delta z_{0}-z^{T}\Delta z_{0},
\end{equation*}%
which leads to 
\begin{equation*}
\min_{\Delta z_{0}\in \mathcal{U}_{\Sigma }}D_{H^{\times }}\left[ z:\Delta
z_{0}\right] \Longleftrightarrow \min_{\Delta z_{0}\in \mathcal{U}_{\Sigma
}}\left( \frac{1}{2}\Delta z_{0}^{T}\Delta z_{0}-z^{T}\Delta z_{0}\right) .
\end{equation*}%
Since $\forall \Delta z_{0}\in \mathcal{U}_{\Sigma },$%
\begin{align*}
\Delta z_{0}& =B_{K}^{T}(\hat{x})V_{\hat{x}}\left( \hat{x}\right) +D_{K}^{T}(%
\hat{x})c_{K}(\hat{x})+D_{K}^{T}(\hat{x})D_{K}(\hat{x})\Delta z_{0}, \\
D_{K}(\hat{x})D_{K}^{T}(\hat{x})& =I,\left( B_{K}^{T}(\hat{x})V_{\hat{x}%
}\left( \hat{x}\right) +D_{K}^{T}(\hat{x})c_{K}(\hat{x})\right)
^{T}D_{K}^{T}(\hat{x})=0,
\end{align*}%
it holds 
\begin{gather*}
\Delta z_{0}^{T}\Delta z_{0}=\Delta z_{0}^{T}D_{K}^{T}(\hat{x})D_{K}(\hat{x}%
)\Delta z_{0} \\
+\left( B_{K}^{T}(\hat{x})V_{\hat{x}}\left( \hat{x}\right) +D_{K}^{T}(\hat{x}%
)c_{K}(\hat{x})\right) ^{T}\left( B_{K}^{T}(\hat{x})V_{\hat{x}}\left( \hat{x}%
\right) +D_{K}^{T}(\hat{x})c_{K}(\hat{x})\right) , \\
z^{T}\Delta z_{0}=z^{T}\left( B_{K}^{T}(\hat{x})V_{\hat{x}}\left( \hat{x}%
\right) +D_{K}^{T}(\hat{x})c_{K}(\hat{x})+D_{K}^{T}(\hat{x})D_{K}(\hat{x}%
)\Delta z_{0}\right) \Longrightarrow  \\
\min_{\Delta z_{0}\in \mathcal{U}_{\Sigma }}\left( \frac{1}{2}\Delta
z_{0}^{T}\Delta z_{0}-z^{T}\Delta z_{0}\right) \Longleftrightarrow  \\
\min_{\Delta z_{0}\in \mathcal{U}_{\Sigma }}\left( \frac{1}{2}\Delta
z_{0}^{T}D_{K}^{T}(\hat{x})D_{K}(\hat{x})\Delta z_{0}-z^{T}D_{K}^{T}(\hat{x}%
)D_{K}(\hat{x})\Delta z_{0}\right) .
\end{gather*}%
Solving 
\begin{equation*}
\frac{\partial }{\partial \Delta z_{0}}\left( \frac{1}{2}\Delta
z_{0}^{T}D_{K}^{T}(\hat{x})D_{K}(\hat{x})\Delta z_{0}-z^{T}D_{K}^{T}(\hat{x}%
)D_{K}(\hat{x})\Delta z_{0}\right) =0
\end{equation*}%
gives 
\begin{equation}
D_{K}^{T}(\hat{x})D_{K}(\hat{x})\Delta z^{\ast }=D_{K}^{T}(\hat{x})D_{K}(%
\hat{x})z  \label{eq4-8a}
\end{equation}%
where $\Delta z^{\ast }$ represents the optimal solution belonging to $%
\mathcal{U}_{\Sigma }.$ It is obvious that 
\begin{equation*}
\hat{z}_{\Delta }=B_{K}^{T}(\hat{x})V_{\hat{x}}\left( \hat{x}\right)
+D_{K}^{T}(\hat{x})c_{K}(\hat{x})+D_{K}^{T}(\hat{x})D_{K}(\hat{x})z\in 
\mathcal{U}_{\Sigma }
\end{equation*}%
solves (\ref{eq4-8a}), which proves $\Delta z^{\ast }=\hat{z}_{\Delta }.$ 
\end{proof}

\bigskip

Theorem \ref{Theo4-3} underlines $\hat{z}_{\Delta }$ as an optimal estimate
for $\Delta z$ in the context of a geodesic projection of $z$ onto $\mathcal{%
U}_{\Sigma }.$ Here, the manifold $\mathcal{U}_{\Sigma }$ is equipped with a
dual Riemannian structure by means of the Bregman divergence $D_{H^{\times
}} $ and it is dually flat \cite{Amari2016}.

\bigskip

Next, consider the SKR (\ref{eq2-3}) and the estimator 
\begin{align}
\hat{z}_{\Delta }& =\left[ 
\begin{array}{c}
\hat{u}_{\Delta } \\ 
\hat{y}_{\Delta }%
\end{array}%
\right] =\left( D\Sigma _{\mathcal{K}}\right) ^{T}\left( r_{y}\right) ,
\label{eq4-15} \\
\left( D\Sigma _{\mathcal{K}}\right) ^{T}& :\left\{ 
\begin{array}{l}
\dot{\lambda}=-\left( \frac{\partial a_{K}(\hat{x})}{\partial \hat{x}}+\frac{%
\partial B_{K}(\hat{x})}{\partial \hat{x}}z\right) ^{T}\lambda -\left( \frac{%
\partial c_{K}(\hat{x})}{\partial \hat{x}}+\frac{\partial D_{K}(\hat{x})}{%
\partial \hat{x}}\right) ^{T}r_{y} \\ 
\left[ 
\begin{array}{c}
\hat{u}_{\Delta } \\ 
\hat{y}_{\Delta }%
\end{array}%
\right] =B_{K}^{T}(\hat{x})\lambda +D_{K}^{T}(\hat{x})r_{y}.%
\end{array}%
\right.
\end{align}%
To guarantee that the delivered estimate $\hat{z}_{\Delta }$ is subject to
the uncertainty model (\ref{eq4-12}), it is required that%
\begin{equation}
\Sigma _{\mathcal{K}}\left( \left[ 
\begin{array}{c}
\hat{u}_{\Delta } \\ 
\hat{y}_{\Delta }%
\end{array}%
\right] \right) =r_{y}\Longrightarrow r_{y}=\Sigma _{\mathcal{K}}\circ
\left( D\Sigma _{\mathcal{K}}\right) ^{T}\left( r_{y}\right) .
\label{eq4-23}
\end{equation}%
It is apparent that the normalised SKR, as a co-inner, satisfies the
condition (\ref{eq4-23}). Below, it is delineated that a normalised SKR
based estimator (\ref{eq4-15}) delivers an LS estimation in the context that
it solves the optimisation problem 
\begin{align}
& \min_{L\left( \hat{x}\right) }\frac{1}{2}\int\limits_{0}^{\infty
}r_{y}^{T}r_{y}dt  \label{eq4-24} \\
\text{s.t. }\dot{\hat{x}}& =a_{K}(\hat{x})+B_{K}(\hat{x})\hat{z}_{\Delta }. 
\notag
\end{align}%
To this end, an arbitrary SKR with 
\begin{equation*}
W(\hat{x})=\left( I+D(\hat{x})D^{T}(\hat{x})\right) ^{-1/2}\Longrightarrow
D_{K}(\hat{x})D_{K}^{T}(\hat{x})=I
\end{equation*}%
is considered.

\bigskip

Observe that the condition (\ref{eq4-23}) implies 
\begin{equation*}
r_{y}=c_{K}(\hat{x})+D_{K}(\hat{x})B_{K}^{T}(\hat{x})\lambda
+r_{y}\Longleftrightarrow c_{K}(\hat{x})+D_{K}(\hat{x})B_{K}^{T}(\hat{x}%
)\lambda =0.
\end{equation*}%
Moreover, it holds%
\begin{gather*}
\frac{1}{2}r_{y}^{T}r_{y}+\lambda ^{T}\left( a_{K}(\hat{x})+B_{K}(\hat{x}%
)B_{K}^{T}(\hat{x})\lambda +B_{K}(\hat{x})D_{K}^{T}(\hat{x})r_{y}\right) \\
=\frac{1}{2}\left( B_{K}^{T}(\hat{x})\lambda +D_{K}^{T}(\hat{x})r_{y}\right)
^{T}\left( B_{K}^{T}(\hat{x})\lambda +D_{K}^{T}(\hat{x})r_{y}\right)
+\lambda ^{T}\left( a_{K}(\hat{x})+\frac{1}{2}B_{K}(\hat{x})B_{K}^{T}(\hat{x}%
)\lambda \right) \\
=\frac{1}{2}\hat{z}_{\Delta }^{T}\hat{z}_{\Delta }+\lambda ^{T}\left( a_{K}(%
\hat{x})+\frac{1}{2}B_{K}(\hat{x})B_{K}^{T}(\hat{x})\lambda \right) .
\end{gather*}%
Now, let 
\begin{align*}
\bar{H}\left( \hat{x},\lambda ,\hat{z}_{\Delta }\right) & :=\frac{1}{2}\hat{z%
}_{\Delta }^{T}\hat{z}_{\Delta }+\lambda ^{T}\left( a_{K}(\hat{x})+\frac{1}{2%
}B_{K}(\hat{x})B_{K}^{T}(\hat{x})\lambda \right) \\
& =\frac{1}{2}r_{y}^{T}r_{y}+\lambda ^{T}\left( a_{K}(\hat{x})+B_{K}(\hat{x})%
\hat{z}_{\Delta }\right) ,
\end{align*}%
and calculate 
\begin{equation*}
\frac{\partial }{\partial L\left( \hat{x}\right) }\bar{H}\left( \hat{x}%
,\lambda ,\hat{z}_{\Delta }\right) =\frac{\partial }{\partial L\left( \hat{x}%
\right) }\left( 
\begin{array}{c}
\frac{1}{2}\lambda ^{T}L\left( \hat{x}\right) \left( I+D\left( \hat{x}%
\right) D^{T}\left( \hat{x}\right) \right) L^{T}\left( \hat{x}\right) \lambda
\\ 
-\lambda ^{T}L\left( \hat{x}\right) \left( c\left( \hat{x}\right) +D\left( 
\hat{x}\right) B^{T}(\hat{x})\lambda \right)%
\end{array}%
\right) .
\end{equation*}%
It is apparent that $\lambda $ and $L^{\ast }\left( \hat{x}\right) $ adopted
in the normalised SKR and satisfying (\ref{eq2-15c})-(\ref{eq2-15d}) solve
the equation%
\begin{equation*}
\frac{\partial }{\partial L\left( \hat{x}\right) }\bar{H}\left( \hat{x}%
,\lambda ,\hat{z}_{\Delta }\right) =0.
\end{equation*}%
As a result, we obtain%
\begin{equation}
L^{\ast }\left( \hat{x}\right) =\arg \min_{L\left( \hat{x}\right) }\bar{H}%
\left( \hat{x},\lambda ,\hat{z}_{\Delta }\right) .  \label{eq4-17}
\end{equation}%
It is well-known from the calculus of variations that%
\begin{equation*}
\bar{H}\left( \hat{x},\lambda ,\hat{z}_{\Delta }\right) =\frac{1}{2}%
r_{y}^{T}r_{y}+\lambda ^{T}\left( a_{K}(\hat{x})+B_{K}(\hat{x})\hat{z}%
_{\Delta }\right)
\end{equation*}%
can be interpreted as the Hamiltonian function of the optimisation problem (%
\ref{eq4-24}), and the optimal solution given by solving (\ref{eq4-17}). In
this regard, the uncertainty estimate $\hat{z}_{\Delta }$ delivered by the
estimator (\ref{eq4-15}) is an LS estimate subject to the uncertainty model (%
\ref{eq4-12}) with $r_{y}$ as its output whose $\mathcal{L}_{2}$-norm is at
a minimum.

\section{Conclusions}

In this draft, performance-oriented one-class fault detection and estimation
in nonlinear dynamic systems with uncertainties have been addressed. Bearing
in mind that the objective of this work is to establish a framework, in
which not only model-based but also data-driven and ML-based fault diagnosis
strategies can be uniformly handled, our work has followed a paradigm that
is different from the conventional control theoretically oriented
model-based framework. To be specific, our work is based on the SIR and SKR
model forms of dynamic systems, rather than on the well-established
input-output and the associated state space models. This change enables us
to model the nominal system dynamics as a lower-dimensional manifold
embedded in the process data space $\left( u,y\right) .$ The main idea
behind this handling is, along the line of data-driven and ML-based fault
detection schemes, to deal with fault detection as a classification problem
in the process data space. To this end, projection technique provides us
with a capable mathematical tool. In this regard, our work has focused on
projection-based fault detection and estimation in nonlinear dynamic systems.

\bigskip

Encouraged by our previous work on application of the orthogonal projection
technique to fault detection in LTI systems \cite{ding2022}, we have started
our effort with constructing projection systems. As an extension of the
well-established projections onto the image and kernel subspaces of LTI
systems \cite{Georgiou88,Vinnicombe-book,ding2022}, normalised SIR and SKR
of nonlinear systems have been adopted for our purpose. In order to address
nonlinear system specified issues, we have focused on the lossless
properties of the normalised SIR and SKR as inner and co-inner, and
introduced various configurations of Hamiltonian extensions \ \cite%
{Schaft-book1987} as Hamiltonian systems \cite%
{Scherpen1994,Ball1996,PS-AC2005}. They serve as the system theoretical tool
throughout our work.

\bigskip

The theoretical basis for a successful application of the orthogonal
projection technique to fault detection in LTI systems is the Pythagorean
Theorem. Concretely, the normalised SIR-based orthogonal projection of the
process data $\left( u,y\right) $ onto the image subspace in Hilbert space,
describing the nominal system dynamics, allows an orthogonal decomposition
of $\left( u,y\right) $ into an optimal estimate (projection) and the
residual. By means of the Pythagorean equation, a norm-based evaluation of $%
\left( u,y\right) $ leads to a unique separation of the nominal dynamic and
the dynamics corresponding to uncertainties/faults. As a result, a
classification and thus an optimal fault detection is achieved. For
nonlinear dynamic systems, (process) data processing should be performed in
non-Euclidean space. Consequently, the norm-based distance defined in
Hilbert space may not be compatible with the metric tensor of the
non-Euclidean space and hence is not suitable to measure the distance from a
data vector to the system image manifold. As one of\ the major contributions
of our work, we have proposed to use a Bregman divergence, a measure of
difference between two points in a space, to solve this problem. A Bregman
divergence is defined in terms of a convex function. When the system
performance function under consideration is convex, it is plain that we are
able to use the Bregman divergence defined by such a performance to achieve
performance-oriented fault detection. A further distinguishing contribution
of our work is the proposed scheme of combining the normalised SIR as
Hamiltonian systems and the Bregman divergence induced by the corresponding
Hamiltonian functions. This scheme not only enables to realise\ the
objective of performance-oriented fault detection, but also uncovers the
information geometric aspect of our work. It is known that the input and
output of a Hamiltonian system build a Legendre transform and there exists a
Hamiltonian system as the Legendre dual system. As a result, the computation
of the corresponding Bregman divergence can be expressed in terms of the
dual Hamiltonian functions. In particular, a Bregman divergence together
with a Legendre transform may induce a dually flat Riemannian manifold in
the data space that may be regarded as a dualistic extension of the
Euclidean space \cite{Amari2016,Nielsen2020}. In this context, the
Hamiltonian system based projection can be interpreted as a geodesic
projection. So far, this part of work is an integration of control theory
and information geometry, whose main results are summarised as follows:

\begin{itemize}
\item Construction of the normalised SIR-based projection system that
projects $\left( u,y\right) $ onto $\mathcal{I}_{\Sigma },$ the image
manifold describing the nominal system dynamics,

\item Determination of the Hamiltonian functions with respect to the nominal
dynamics,

\item Determination and computation of the Bregman divergence for
performance-oriented fault detection, and based on it,

\item Definition of the evaluation function over a time interval, which has
been proved to be a Bregman divergence as well, and the corresponding
threshold setting,

\item Demonstration of the interpretation that the projection onto the image
manifold is a geodesic projection.
\end{itemize}

As reported in our previous work \cite{ding2022}, for LTI systems the
orthogonal projection onto the image subspace is equivalent to a projection
onto the kernel subspace. Thanks to the Pythagorean equation, the\
projection-based fault detection can then be equivalently realised using an
observer-based residual generator. These conclusions and results have
motivated and inspired our study on the normalised SKR induced Hamiltonian
system $\left( D\Sigma _{\mathcal{K}}\right) ^{T}\circ \Sigma _{\mathcal{K}}$
as a projection. Since the SKR $\Sigma _{\mathcal{K}}$ is a residual
generator and the residual contains uncertainty information, we have firstly
studied $\left( D\Sigma _{\mathcal{K}}\right) ^{T}\circ \Sigma _{\mathcal{K}%
} $ and its co-inner property from the information (uncertainty) viewpoint.
To be specific, the uncertainty model (\ref{eq4-12}) and Definition \ref%
{Def4-2}, the manifold of uncertain data $\mathcal{U}_{\Sigma },$ have been
introduced. It has been proved that $\left( D\Sigma _{\mathcal{K}}\right)
^{T}\circ \Sigma _{\mathcal{K}}$ represents a projection of the process data
onto $\mathcal{U}_{\Sigma }.$ Analogue to our work in the first part, a
fault detection scheme has been proposed, consisting of

\begin{itemize}
\item design of $\left( D\Sigma _{\mathcal{K}}\right) ^{T}\circ \Sigma _{%
\mathcal{K}}$ and the corresponding dual Hamiltonian system,

\item determination of the Hamiltonian functions with respect to the two
extreme cases, $z\in \mathcal{I}_{\Sigma }$ and $z\in \mathcal{U}_{\Sigma },$

\item definition and computation of the Hamiltonian function induced Bregman
divergence and its use for fault detection,

\item analysis and interpretation of $\left( D\Sigma _{\mathcal{K}}\right)
^{T}\circ \Sigma _{\mathcal{K}}$ as a projection onto $\mathcal{U}_{\Sigma
}, $ and

\item study on the realisation issues.
\end{itemize}

The further contribution of our study on $\left( D\Sigma _{\mathcal{K}%
}\right) ^{T}\circ \Sigma _{\mathcal{K}}$ is the proof that the projection
is an LS estimation of the uncertainty corrupted in $\left( u,y\right) $ and
subject to the uncertainty model. An immediate application of this result is
fault estimation. In other words, variations in the process data caused by
faults, in particular those process faults, are modelled like uncertainties
by means of the uncertainty model and the manifold $\mathcal{U}_{\Sigma }$
represents the corresponding set \ in the data space. And $\left( D\Sigma _{%
\mathcal{K}}\right) ^{T}\circ \Sigma _{\mathcal{K}}$ serves as an optimal
(LS) estimator for such variations in the process data.

\bigskip

Our future work will concentrate on two aspects aiming at the overall goal
of our efforts to build a uniform framework. The first one is the work on an
ML-based realisation of the Hamiltonian projection systems. In this work, we
will focus on AE learning methods. In fact, both projection systems studied
in this draft, $\Sigma _{\mathcal{I}}\circ \left( D\Sigma _{\mathcal{I}%
}\right) ^{T}$ and $\left( D\Sigma _{\mathcal{K}}\right) ^{T}\circ \Sigma _{%
\mathcal{K}}$, are of the typical configuration of an AE, where the latent
variables are $v$ and $r_{y}$ respectively. In our previous work \cite%
{LDLCX2022}, we have proposed to train $\Sigma _{\mathcal{I}}\circ \left(
D\Sigma _{\mathcal{I}}\right) ^{T}$ as neural networks (NNs) in such a way
that $\Sigma _{\mathcal{I}}\circ \left( D\Sigma _{\mathcal{I}}\right) ^{T}$
is idempotent and of the lossless property. In our future work, the
corresponding Bregman divergences and Legendre transform will be integrated
into the learning process, and $\left( D\Sigma _{\mathcal{K}}\right)
^{T}\circ \Sigma _{\mathcal{K}}$ will be considered as well. This work can
be interpreted as control theoretically learning of AE for optimal and
data-driven fault diagnosis. The second aspect is the application of PINN
methods \cite{PINN2021}. In our theoretical work, we notice that solving
HJEs like (\ref{eq2-15a}) and (\ref{eq2-15c}) analytically is a hard work.
The PINN methods offer a capable tool to deal with such problems.

\bigskip

\textbf{Acknowledgement}: The authors are very grateful to Prof. Y. Yang for
the collaborative and valuable support.

\end{document}